\title{Computing NP-hard Repetitiveness Measures via MAX-SAT}
\author{Hideo Bannai}{Tokyo Medical and Dental University, Japan}{hdbn.dsc@tmd.ac.jp}{https://orcid.org/0000-0002-6856-5185
 }{Supported by JSPS KAKENHI Grant Number JP20H04141}
\author{Keisuke Goto}{Independent Researcher}{keisukegotou@gmail.com}{https://orcid.org/0000-0001-6964-6182}{}
\author{Masakazu Ishihata}
{NTT Communication Science Laboratories, Japan}
{masakazu.ishihata.ze@hco.ntt.co.jp}
{}
{}
\author{Shunsuke Kanda}{Independent Researcher}{shnsk.knd@gmail.com}{https://orcid.org/0000-0002-5462-122X}{}
\author{Dominik K\"oppl}{Tokyo Medical and Dental University, Japan}{koeppl.dsc@tmd.ac.jp}{https://orcidid.org/0000-0002-8721-4444}{Supported by JSPS KAKENHI Grant Numbers JP21H05847 and JP21K17701.}
\author{Takaaki Nishimoto}{RIKEN Center for Advanced Intelligence Project, Japan}{takaaki.nishimoto@riken.jp}{}{}
\authorrunning{H. Bannai, K. Goto, M. Ishihata, S. Kanda, D. K\"oppl, and T. Nishimoto} 
\keywords{repetitiveness measures, string attractor, bidirectional macro scheme} 
\newcommand{\occ}{\mathit{occ}}
\newcommand{\cover}{\mathit{cover}}
\newcommand*{\teigi}[1]{\emph{#1}}
\newcommand{\fbeg}[1]{p_{#1}}
\newcommand{\match}[1]{\mathit{M}_{#1}}
\newcommand{\rot}[1]{\mathit{root}_{#1}}
\newcommand{\dref}[1]{\mathit{dref}_{#1}}
\newcommand{\refpos}[1]{\mathit{ref}_{#1}}
\begin{document}

\maketitle

\begin{abstract}
    Repetitiveness measures reveal profound characteristics of datasets, and give rise to compressed data structures and algorithms working in compressed space.
    Alas, the computation of some of these measures is NP-hard,
    and straight-forward computation is infeasible for datasets of even small sizes.
    Three such measures are the smallest size of a string attractor,
    the smallest size of a bidirectional macro scheme,
    and the smallest size of a straight-line program.
    While a vast variety of implementations for heuristically computing approximations exist,
    exact computation of these measures has received little to no attention.
    In this paper, we present MAX-SAT formulations that provide the first non-trivial implementations
    for exact computation of smallest string attractors,
    smallest bidirectional macro schemes, and
    smallest straight-line programs.
    Computational experiments show that our implementations work for
    texts of length up to a few hundred for straight-line programs and bidirectional macro schemes,
    and texts even over a million for string attractors.
\end{abstract}

% \clearpage
\section{Introduction}
Text compression is a fundamental topic in computer science with countless practical applications.
\teigi{Dictionary compression} is a type of text compression where the original input is
transformed into a sequence of elements taken from a dictionary,
where the dictionary is usually constructed in some way from the input.
Due to the advent of \teigi{highly repetitive} datasets
such as multiple genome sequences from the same species or versioned document collections
(e.g., Wikipedia, GitHub),
dictionary compression methods have recently (re)gained massive attention since they can better capture
more widespread repetitions in such data compared to statistical compression methods~\cite{navarro2021indexing},
and further allow space-efficient full-text indices to be built~\cite{DBLP:journals/csur/Navarro21}.
Some well known methods that fall in this category are
Lempel--Ziv 76/77 factorization based methods~\cite{lempel1976complexity,ziv77lz,DBLP:journals/tcs/KreftN13},
grammar-based compression such as
LZ78~\cite{DBLP:journals/tit/ZivL78},
Re-Pair~\cite{larsson99repair},
SEQUITUR~\cite{DBLP:journals/jair/Nevill-ManningW97},
LCA~\cite{DBLP:conf/spire/SakamotoKS04},
LZD~\cite{DBLP:conf/cpm/GotoBIT15},
and methods involving bidirectional referencing, such as the
run-length encoded Burrows--Wheeler transform (RLBWT)~\cite{DBLP:journals/njc/MakinenN05},
and more recently, lcpcomp~\cite{dinklage2017compression},
plcpcomp~\cite{dinklage19plcpcomp}, lexcomp~\cite{navarro2020approximation},
a method by Russo et al.~\cite{DBLP:conf/dcc/RussoCNF20},
and LZRR~\cite{nishimoto2021lzrr}.

A vital issue in evaluating and comparing these various methods is
to understand how well they can compress a given input compared to the ``optimum''.
While the theoretically smallest representation (aka Kolmogorov complexity)
is incomputable~\cite{DBLP:series/txcs/LiV19},
Kempa and Prezza~\cite{KempaP18}
regarded the output sizes of these methods as \teigi{repetitiveness measures}
and characterized them with respect to the new notion of \teigi{string attractors}.
Namely, they showed that for any input text,
the size of the smallest string attractor is a lower bound for
the output sizes of all known dictionary compressors.
Since then, relations between these various repetitiveness measures
have been heavily investigated~\cite{DBLP:journals/jda/BilleGGP18,kociumaka20towards,DBLP:conf/focs/KempaK20,navarro2021indexing,navarro2020approximation,DBLP:conf/spire/BannaiFIKMN21,doi:10.1137/1.9781611977073.111}.

In this paper, we consider three such repetitiveness measures:
the size $\gamma$ of the smallest string attractor,
the size $g$ of the smallest straight-line program (SLP)~\cite{DBLP:journals/njc/KarpinskiRS97},
and the size $b$ of the smallest bidirectional macro scheme (BMS)~\cite{storer1982data},
all of which are known to be NP-hard to compute~\cite{storer1982data,sakamoto02minimization,KempaP18}.
Thus, any efficient dictionary compression algorithm can (most likely)
merely compute approximations of $\gamma$, $b$, or $g$.
Although for any text, the relation $\delta \leq \gamma\leq b\leq z \leq g$ is known,
where $\delta$~\cite{kociumaka20towards} and $z$~\cite{lempel1976complexity} are repetitiveness measures known to be computable in linear time (cf.~\cite[Lemma~5.7]{christiansen21optimaltime} for $\delta$ and \cite{crochemore08lpf} for $z$),
the gap between the measures can be quite large;
string families giving a logarithmic factor gap are known for each pair of measures~\cite{navarro2021indexing,DBLP:conf/spire/BannaiFIKMN21}.
Since the sizes of some recent data structures such as \cite{navarro19indexing,christiansen21optimaltime},
depend on these repetitiveness measures, their exact sizes are crucial knowledge.

While there exist a vast variety of approximation algorithms for
computing smallest BMSs and grammars as mentioned above,
development of exact algorithms have received very little to almost no attention.
For string attractors, the results of Kempa et al.~\cite{kempa18string} imply a straightforward $O(n2^n)$ time algorithm.
For the smallest grammar, Casel et al.~\cite[Theorem~13]{DBLP:journals/mst/CaselFGGS21} show an
$O^*(3^n)$\footnote{The abstract of~\cite{DBLP:journals/mst/CaselFGGS21} mentions $O(3^n)$ while the statement of the theorem is $O^*(3^n)$.}
time algorithm.
However, we are unaware of any non-trivial implementations or empirical evaluations for computing these measures.
In fact, the only publicly available implementation we could find
was a straight-forward Python script to compute $\gamma$ by  Michael S.\ Branicky~\cite{oeis.org_A339391}.

The main contribution of this paper is to present MAX-SAT formulations~\cite{biere2009handbook} for computing
the smallest string attractor, BMS, and SLP,
thereby providing the first non-trivial implementation for exact computation of
the measures $\gamma$, $b$, and $g$.
The rationale for this approach is that although MAX-SAT is NP-hard,
there are highly optimized solvers whose performance has made incredible progress in recent years.
These solvers can cope with very large instances and can be leveraged,
provided that suitable encodings can be designed~\cite{10.5555/2898950}.
While straight-forward (non-MAX-SAT) implementations become infeasible even for very small text lengths
(e.g. $40$),
computational experiments show that our implementations work for
texts of length up to a few hundred for $b$, $g$, and even more than 1 million for $\gamma$.
Since our addressed problems are all NP-hard,
there is perhaps little hope for our implementations to obtain exact solutions for larger but practically interesting datasets.
Nevertheless, we believe they can make significant impact as a tool for analyzing these repetitiveness measures.
We stress that our solutions not only report the sizes~$\gamma$, $b$ and~$g$, but also give valid
instances
having exactly these sizes (e.g., an SLP that has size~$g$).
It may therefore be possible to improve compression heuristics by studying some of these optimal instances on smaller input strings.
As an example application, we analyzed the recently introduced notion of
\teigi{sensitivity}~\cite{DBLP:journals/corr/abs-2107-08615} of $\gamma$
by conducting an exhaustive computation of $\gamma$ for strings up to certain lengths.
From these computations, we were able to discover a family of strings that exhibit
a multiplicative sensitivity of 2.5, improving the previously known lower bound of 2.0~\cite{DBLP:journals/corr/abs-2107-08615}.

\subsection*{Related Work}

The exact values for $\gamma$, $b$, and $g$ have been characterized only
for a few families of strings.
For standard Sturmian words, $\gamma=2$~\cite{DBLP:journals/tcs/MantaciRRRS21}
and $b=O(1)$ since the RLBWT has constant size~\cite{DBLP:journals/ipl/MantaciRS03} and can be regarded as a BMS.
For the $n$th Thue--Morse word, $\gamma=4$ for $n\geq 4$~\cite{DBLP:conf/spire/KutsukakeMNIBT20},
and $b = n+2$ for $n\geq 2$~\cite{DBLP:conf/spire/BannaiFIKMN21}.
For the $n$th Fibonacci word, $g=n$~\cite{DBLP:journals/corr/abs-2202-08447}.
The smallest attractor sizes of automatic sequences have also been studied~\cite{DBLP:journals/corr/abs-2012-06840}.

\section{Preliminaries}
Let $\Sigma$ be a set of $\sigma$ symbols called the \teigi{alphabet},
and let $\Sigma^*$ denote the set of strings over $\Sigma$.
Given a string $T$, if $T = xyz$ for strings $x,y,z$, then
$x,y,z$ are respectively called a \teigi{prefix}, \teigi{substring}, and \teigi{suffix} of $T$.
They are called \teigi{proper} if they are not equal to $T$.
The length of $T$ is denoted by $|T|$.
For any $i \in [1,|T|]$, let $T[i]$ denote the $i$th symbol of $T$,
i.e., $T = T[1]\cdots T[|T|]$.
For any $1 \leq i \leq j \leq |T|$, let $T[i..j] = T[i] \cdots T[j]$ and $T[i..j) = T[i] \cdots T[j-1]$.

For the rest of this paper, we fix a string~$T$, and let $n := |T|$ denote its length.
Further, we assume that each symbol of $\Sigma$ appears in $T$.
Let $\occ(P) = \{ i \mid T[i..i+|P|-1] = P, 1 \leq i \leq n-|P|+1\}$ be the set of starting positions of all occurrences of a substring $P$ in~$T$,
and let
$\cover(P) = \{ i + k - 1\mid i \in \occ(P), 1 \leq k \leq |P|\}$
be the set of all text positions covered by all occurrences of~$P$ in~$T$.

A set of positions $\Gamma\subseteq [1,n]$ is a \teigi{string attractor}~\cite{KempaP18} of $T$
if every substring $P$ of $T$ has an occurrence in $T$ that contains an element of $\Gamma$, that is, $\Gamma\cap\cover(P)\neq \emptyset$.
We denote the size of the smallest string attractor of~$T$ by $\gamma$.
For example, $[1,n]$ is a trivial string attractor.
$\{1,2,3\}$ is a (smallest) string attractor of $T=\mathtt{banana}$.
(See also Figure~\ref{fig:sat_attr})

A \teigi{straight-line program} (SLP)~\cite{DBLP:journals/njc/KarpinskiRS97} is a grammar in Chomsky normal form whose language consists solely of~$T$.
In other words, (1) each production rule is of the form
$X\rightarrow X_\ell X_r$ or $X\rightarrow c$, where $X_\ell$, $X_r$ are non-terminals and $c\in \Sigma$,
(2) there is exactly one such production rule for any given non-terminal symbol $X$,
and
(3) there is a start symbol whose iterative expansion finally leads to $T$.
The \teigi{size} of an SLP is the number of its production rules, or equivalently
(assuming that each non-terminal is used at least once),
the number of distinct non-terminals.
We denote the size of the smallest SLP that produces~$T$ by $g$.
For example, the set of production rules
$\{ X_9 \rightarrow X_6 X_8$,
$X_8 \rightarrow X_7 X_7$,
$X_7 \rightarrow X_1 X_3$,
$X_6 \rightarrow X_4 X_5$,
$X_5 \rightarrow X_3 X_3$,
$X_4 \rightarrow X_3 X_1$,
$X_3 \rightarrow X_1 X_2$,
$X_2 \rightarrow \mathtt{b}$,
$X_1 \rightarrow \mathtt{a}\}$
is an SLP of size 9 for $T=\mathtt{abaababaabaab}$.
See also Figure~\ref{fig:slp_example}.

\begin{figure}[t]
    \begin{center}
        \includegraphics[width=0.8\linewidth]{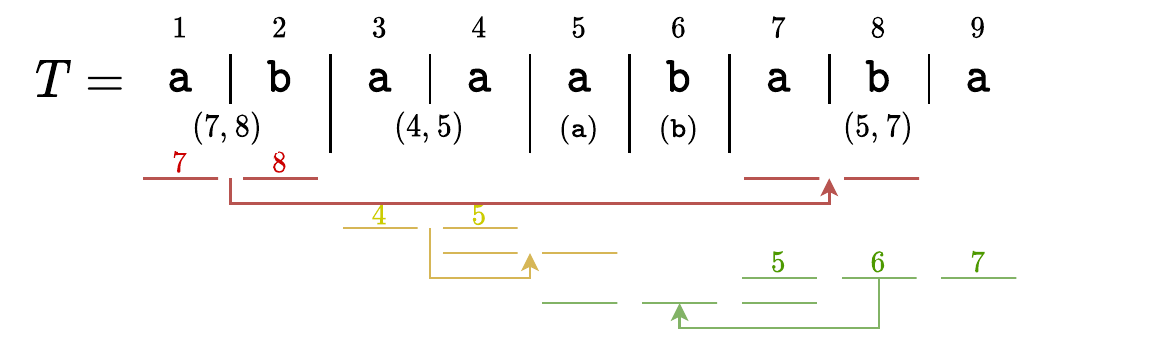}
    \end{center}
    \hfill
    \caption{A bidirectional macro scheme (BMS) of $T = \mathtt{abaaababa}$.
        The figure depicts the BMS $(7,8),(4,5),\mathtt{a},\mathtt{b},(5,7)$.
        The reference of each non-ground phrase is visualized by an arrow.
        The phrase references imply a reference for each symbol in non-ground phrases.
    }
    \label{fig:k_bms}
\end{figure}

A \teigi{bidirectional macro scheme} (BMS) ~\cite{storer1982data} of size $m$ representing~$T$,
is a factorization $T = F_1, \cdots, F_m$,
where each factor (or phrase) is a single symbol (which we call a ground phrase), or,
is encoded as a pair of integers $(i,j)$ indicating that it
references (i.e., is a copy of) substring $T[i..j]$.
A BMS is said to be {\em valid}, if $T$ can be reconstructed from
the representation of such a factorization, i.e.,
the implied references of each symbol in a non-ground phrase is acyclic,
and eventually leads to a ground phrase.
We denote the size of the smallest valid BMS that represents~$T$ by $b$.
Figure~\ref{fig:k_bms} shows a valid BMS
$(7,8),(4,5),\mathtt{a},\mathtt{b},(5,7)$ representing the string $\mathtt{abaaababa}$.
For example, the $\mathtt{a}$ at position $9$ references position $7$,
which in turn references position $5$, a ground phrase.

The satisfiability (SAT) problem asks
for an assignment of variables that satisfies a given Boolean formula~\cite{10.5555/2898950,biere2009handbook}.
The input formula is usually given in \teigi{conjunctive normal form} (CNF),
which consists of a conjunction of clauses, and each clause is a disjunction of literals.
A \teigi{literal} is a Boolean variable or its negation.
In this form, the given formula is satisfied if and only if all the clauses
(which we will sometimes call constraints) are satisfied.
The \teigi{size} of a CNF is the sum of the literals in all clauses.

A \teigi{maximum satisfiability} (MAX-SAT) problem is an extension of SAT,
where two types of clauses, \teigi{hard} and \teigi{soft}, are considered~\cite{biere2009handbook}.
A solution to a MAX-SAT instance is a truth assignment of the variables
such that the number of satisfied soft clauses is maximized
under the restriction that all hard clauses must be satisfied.

We will use $1$ to denote true, and $0$ to denote false.
Furthermore, for a set $\{ v_i \}_{i=1}^k$ of Boolean variables,
cardinality constraints of the form $\sum_{i=1}^k v_i \leq 1$ are known as
\teigi{atmost-one constraints}.
Although a straightforward encoding has size $\Theta(k^2)$,
$O(k)$ size encodings are known~\cite{DBLP:conf/cp/Sinz05}.
Constraints of the form
$\sum_{i=1}^k v_i = 1$ can be encoded using a combination of an atmost-one constraint
and a simple disjunction of all the variables (i.e., atleast-one)
and thus can also be encoded in $O(k)$ size.

\section{Reductions to MAX-SAT}
In what follows, we present our encodings for the aforementioned problems.
Common to all encodings is the idea that we have a Boolean variable $p_i$ for each text position~$i \in [1,n]$,
which counts, when set to true,
an element of a string attractor,
a non-terminal (actually, to be precise, a factor in a grammar parsing) of an SLP,
or a phrase of a BMS.
Since our goal is to have as few $p_i$'s set to true as possible,
our soft clauses have the form $D_i = \lnot p_i$ for $i\in[1,n]$.
Consequently, all our encodings have the same number of soft clauses,
and only differ in how the hard clauses are defined.

\subsection{Smallest String Attractor as MAX-SAT}
We start with a simple
encoding based on the definition of string attractors.
Subsequently, we utilize an observation similar to but slightly more generalized than that made in~\cite{kempa18string},
in order to reduce the size of hard clauses.

\subsubsection{Simple Encoding}
\label{sec:attr:naiive}
\newcommand*{\SubstringSet}{\ensuremath{\mathcal{S}_T}}
Our idea is to design a CNF so that a MAX-SAT solution will encode a string attractor $\Gamma$,
where $p_i = 1$ if and only if position $i$ is an element of $\Gamma$~(i.e., $\Gamma = \{ i \mid 1 \leq i \leq n, p_{i} = 1 \}$).
Let $\SubstringSet$ denote the set of all non-empty substrings of $T$, i.e., $\SubstringSet = \{T[i..j] \mid 1 \leq i \leq j \leq n\}$.
For each substring $S$ of $\SubstringSet$, we define a hard clause
$C_S = \bigvee_{i \in \cover(S)}p_i$.
(See Figure~\ref{fig:sat_attr} for an example.)
By the definition of $\cover(S)$,
the set $\Gamma$ corresponding to any truth assignment for $p_i$ will be a string attractor if and only if
all hard clauses $C_S$ are satisfied.
Since our soft clauses have the form $D_i = \lnot p_i$ for $i \in [1, n]$,
the soft clauses ensure that the MAX-SAT solution minimizes the number of $p_i$'s being true.
Thus, we can obtain the smallest string attractor by solving the MAX-SAT on $C_S$ and $D_i$.

Each hard clause $C_{S}$ has size $|\cover(S)|=O(n)$.
Since there are $O(n^2)$ substrings, the number of hard clauses is $O(n^2)$.
Hence, the total size of the CNF is $O(n^3)$.
In the next subsection, we reduce the size to $O(n^2)$.

\begin{figure}[h]
    \begin{center}
        \includegraphics[scale=0.5]{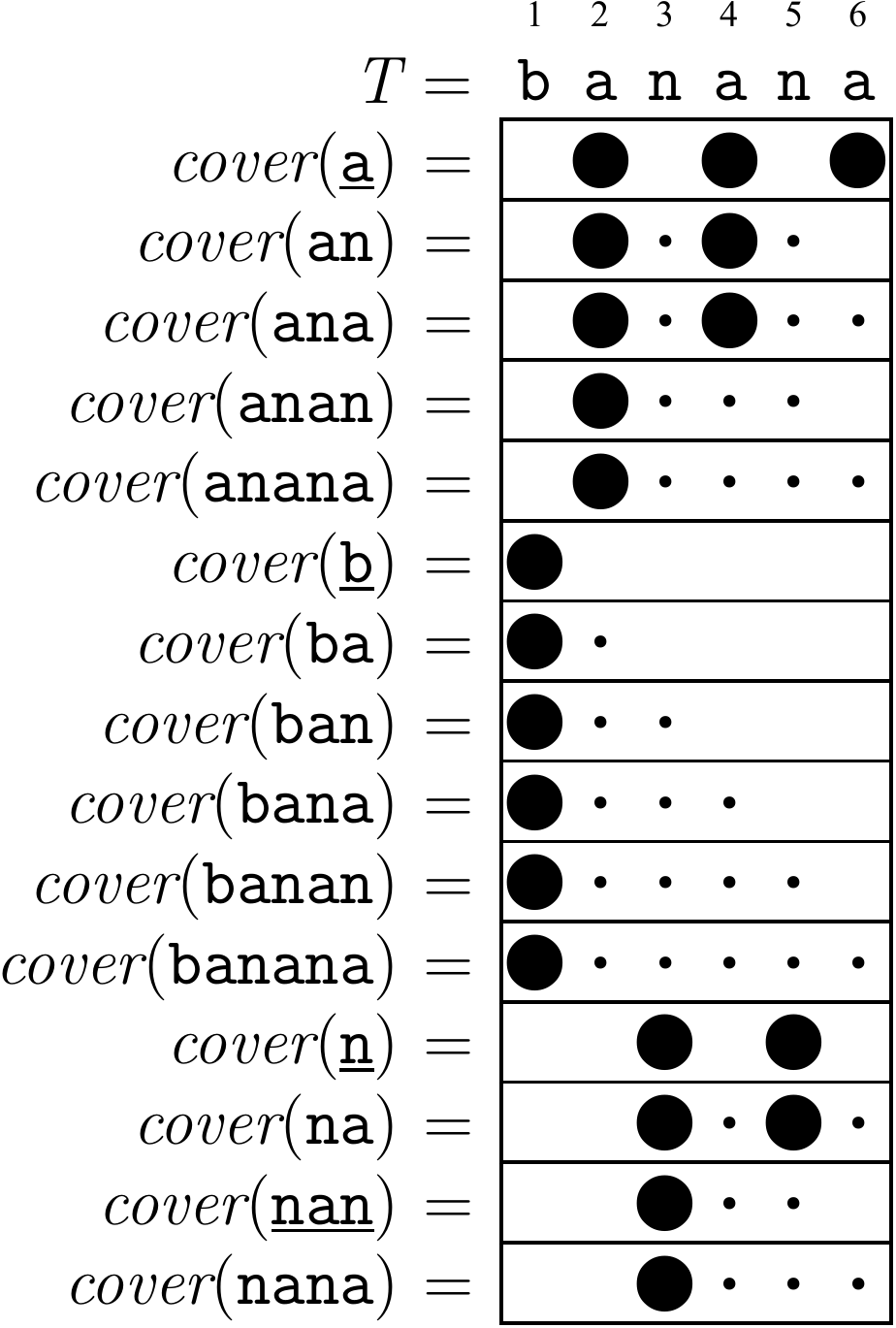}
    \end{center}
    \caption{String $T=\mathtt{banana}$ and the positions that each distinct substring of $T$ covers.
        We list all distinct substrings of $T$ on the left hand side, and show on the right hand side
        their covers.
        A dot at position~$k$ in the row for substring $S$ indicates that $k$ is included in $\cover(S)$ (i.e. is covered by $S$),
        and a large dot indicates $k\in\occ(S)$.
        Underlined substrings are minimal substrings of $T$.
        For example,
        $\cover(\mathtt{an}) =\{ 2, 3, 4, 5 \}$.
        The clause defined for $\mathtt{an}$ in our encoding
        is $C_{\mathtt{an}} = p_{2} \vee p_{3} \vee p_{4} \vee p_{5}$.
    }
    \label{fig:sat_attr}
\end{figure}

\subsubsection{Reducing CNF Clauses via Minimal Substrings}
\label{sec:minsub}

We can reduce the number of hard clauses in our CNF by considering only members of $\SubstringSet$ that are \teigi{minimal substrings}\footnote{Kempa et al.~\cite{kempa18string} use a similar idea when reducing the problem to set cover.
    Their formulation can be regarded as considering only right-minimal substrings (i.e., $|\occ(S[1..|S|-1])| > |\occ(S)|$),
    while we consider a potentially smaller subset requiring both right-minimality and left-minimality.
    For texts in the Calgary corpus, we observed that the difference between
    minimal and right-minimal substrings can result in a difference
    as large as $50$ times in their total lengths (progp and trans),
    i.e., the total size of hard clauses.
}.
A substring $S$ of string $T$ is called a minimal substring of $T$ if all proper substrings of $S$ occur more often than $S$ in $T$~(i.e.,
$|\occ(S[i..j])| > |\occ(S)|$ for every proper substring $S[i..j]$ of $S$).
By the definition of minimal substrings, the following lemma holds.

\begin{lemma}\label{lem:minimal_substring}
    For every non-minimal substring $S$ of $T$,
    there is a minimal substring $S_{\mathsf{min}}$ of $S$
    with $\cover(S_{\mathsf{min}}) \subseteq \cover(S)$.
\end{lemma}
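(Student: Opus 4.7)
The plan is to prove the lemma by (strong) induction on $|S|$. Assume $S$ is non-minimal. By definition, there is some proper substring $S' = S[i..j]$ of $S$ with $|\occ(S')| \leq |\occ(S)|$. The first step is to argue that, in fact, $|\occ(S')| = |\occ(S)|$ and every occurrence of $S'$ sits inside an occurrence of $S$ at the fixed offset $i-1$.

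To see this, note that each position $p \in \occ(S)$ yields an occurrence of $S'$ at position $p + i - 1$, and distinct values of $p$ give distinct positions. Hence $|\occ(S')| \geq |\occ(S)|$, which combined with the previous inequality forces equality. Since we have exhibited $|\occ(S)|$ occurrences of $S'$ already, there can be no others, and $\occ(S') = \{\, p + i - 1 \mid p \in \occ(S)\,\}$. From this, $\cover(S') = \bigcup_{p \in \occ(S)} \{p + i - 1, \ldots, p + j - 1\}$, which is contained in $\bigcup_{p \in \occ(S)} \{p, \ldots, p + |S| - 1\} = \cover(S)$.

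Now I apply the inductive step. If $S'$ is a minimal substring of $T$, we take $S_{\mathsf{min}} := S'$ and we are done; note that $S'$ is indeed a substring of $S$. Otherwise, since $|S'| < |S|$, the induction hypothesis supplies a minimal substring $S_{\mathsf{min}}$ of $S'$ with $\cover(S_{\mathsf{min}}) \subseteq \cover(S')$, and then $\cover(S_{\mathsf{min}}) \subseteq \cover(S')\subseteq \cover(S)$. The base case is trivial: any length-$1$ substring is vacuously minimal since it has no proper non-empty substrings, so the recursion always terminates in at most $|S|-1$ steps.

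The only step that requires genuine thought is the counting argument that upgrades $|\occ(S')| \leq |\occ(S)|$ to equality with a full pairing of occurrences; everything else is bookkeeping with the definition of $\cover$ and a straightforward induction on length. I anticipate no real obstacle beyond being careful that the offset $i-1$ is applied consistently so that the inclusion $\cover(S') \subseteq \cover(S)$ comes out cleanly.
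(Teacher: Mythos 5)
Your proof is correct and follows essentially the same route as the paper's: the core step in both is the counting argument that a substring of $S$ occurring no more often than $S$ must occur exactly as often, with every occurrence sitting inside an occurrence of $S$ at a fixed offset, which yields the cover containment. Your explicit induction merely fills in a detail the paper asserts without elaboration (that among the substrings of $S$ with the same occurrence count one can always find a minimal one), so it is a slightly more careful write-up of the same argument rather than a different approach.
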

\begin{proof}
    Because $S$ is not minimal, it has substrings that have the same number of occurrences as $S$.
    Let $S_{\mathsf{min}} = S[e..e + |S_{\mathsf{min}}| - 1]$ be one of these substrings that is minimal, for some~$e$.
    Then by definition,
    for each occurrence $i_{\mathsf{min}} \in \occ(S_{\mathsf{min}})$ of $S_{\mathsf{min}}$,
    there exists an occurrence $i \in \occ(S)$ of $S$ such that $i = i_{\mathsf{min}} - e + 1$.
    $\{ i_{\mathsf{min}}, i_{\mathsf{min}} + 1, \ldots, i_{\mathsf{min}} + |S_{\mathsf{min}}| - 1 \} \subseteq \{ i, i+1, \ldots, i + |S| - 1 \}$,
    and hence, $\cover(S_{\mathsf{min}}) \subseteq \cover(S)$.
\end{proof}

In the example $T=\mathtt{banana}$,
$|\occ(\mathtt{nan})| < |\occ(\mathtt{na})|, |\occ(\mathtt{an})|, |\occ(\mathtt{a})|, |\occ(\mathtt{n})|$,
and thus, substring $\mathtt{nan}$ is a minimal substring of $T$.
Furthermore, $\mathtt{nan}$ is a substring of $\mathtt{nana}$, and $|\occ(\mathtt{nana})| = |\occ(\mathtt{nan})|$.
Thus, $\cover(\mathtt{nan}) \subseteq \cover(\mathtt{nana})$ by Lemma~\ref{lem:minimal_substring}.
(See also Figure~\ref{fig:sat_attr})

Lemma~\ref{lem:minimal_substring} ensures that if an assignment of variables satisfies the hard clauses~$C_S$ for all minimal substrings~$S$ of $T$,
then the assignment satisfies the hard clauses~$C_S$ for all substrings~$S$ of $T$.
With this observation we can conclude that we can omit the hard clauses for all substrings~$S$ of $T$ that are not minimal.

The number $m$ of minimal substrings is $O(n)$ because
minimal substrings correspond to minimal strings,
defined by Blumer et al.~\cite{DBLP:journals/jacm/BlumerBHME87}
based on an equivalence relation over substrings of $T$,
and their number is known to be $O(n)$ (Lemma 3 in~\cite{DBLP:journals/algorithmica/NarisawaHIBT17}).
Hence, the total size of the CNF is reduced to $O(mn) \subseteq O(n^2)$.

In particular, the size of the CNF is $o(n^2)$ if $m = o(n)$. We can show that there exists a family of strings~$\{T_d\}_{d \in \mathcal{I}}$ for a non-finite set of natural numbers $\mathcal{I}$ with $|T_d| = d^2$ having $o(d^2)$ minimal substrings (hence, for $n = d^2$, $m = o(n)$).
To this end, let $T_d$ be the string $S_{1}S_{2} \cdots S_{d}$
of length $n = d^{2}$ over
the alphabet $\Sigma = \{ \mathtt{a}, \$_{1}, \$_{2}, \ldots, \$_{d} \}$,
where $S_{i} = \mathtt{a}^{d-1}\$_{i}$, and $\mathtt{a}^{d-1}$ is the repetition of character $\mathtt{a}$ with length $d-1$.
Then $m = 2\sqrt{n} - 1$
because the minimal substrings of $T_d$ are $\mathtt{a}^{1}$, $\mathtt{a}^{2}$, $\ldots$, $\mathtt{a}^{\sqrt{n}-1}$, $\$_{1}$, $\$_{2}$, $\ldots$, $\$_{\sqrt{n}}$.

\subsection{Smallest Straight-Line Program as MAX-SAT}
To encode a grammar in SAT, we utilize
a notion called \teigi{grammar parsing}
introduced by Rytter~\cite{DBLP:journals/tcs/Rytter03}.
Given an SLP $G$ that produces $T$, the \teigi{parse tree} of $T$ with respect to $G$ is
a derivation tree of $T$, where internal nodes are non-terminal symbols that derive two non-terminal symbols, and leaves are non-terminal symbols that derive a single terminal symbol.
The \teigi{partial parse tree} of $T$ with respect to $G$ is the tree obtained by
pruning the parse tree of $T$ with respect to $G$ so that any internal node is always a first occurrence
in a left to right pre-order traversal of the parse tree,
i.e., the non-terminal symbol of an internal node is not used in the partial parse tree for any
corresponding substring to its left.
In other words,
if a non-terminal symbol $X$ that derives two non-terminal symbols is a leaf of the partial parse tree,
the existence of a unique internal node having the same non-terminal symbol $X$ corresponding to a substring to its left is implied.
We will say that the leaf \teigi{references} the internal node.
The \teigi{grammar parsing} of $T$ with respect to $G$,
is the factorization of $T$ consisting of substrings corresponding to the leaves of the partial parse tree of $T$ with respect to $G$.
See Figure~\ref{fig:slp_example} for an example.
\begin{figure}
    \begin{center}
        \includegraphics[width=0.8\linewidth]{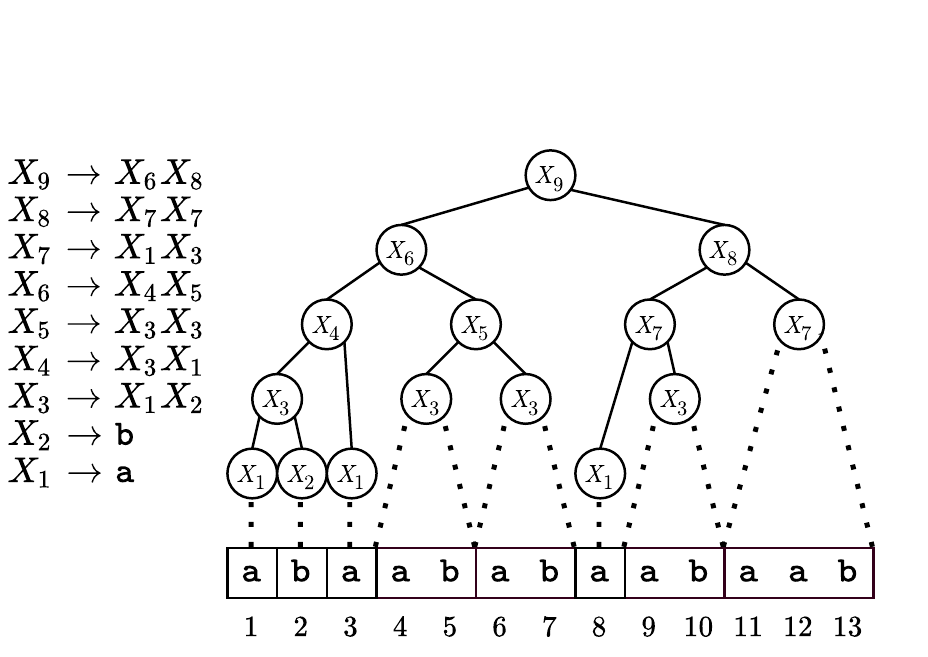}
    \end{center}
    \caption{The partial parse tree and the grammar parsing
        of an SLP for the string $T=\mathtt{abaababaabaab}$.
        Each internal node is a unique non-terminal symbol.
        The grammar parsing represented by the rectangles partitioning $T$ is
        $\mathtt{a},\mathtt{b},\mathtt{a},\mathtt{ab},\mathtt{ab},\mathtt{a},\mathtt{ab},\mathtt{aab}$ of size $8$.
        The size of the SLP is $8+|\{\mathtt{a},\mathtt{b}\}| - 1 =9$.
    }
    \label{fig:slp_example}
\end{figure}

The \teigi{size} of the grammar parsing is equal to the number of leaves in the partial parse tree.
It is easy to see that by definition,
the internal nodes in the partial parse tree are distinct,
consisting of (all) non-terminal symbols that derive two non-terminal symbols.
There are $\sigma$ more non-terminal symbols that derive a single terminal symbol.
Therefore,
$(\mbox{\# of internal nodes})+ \sigma$ is the size of the SLP\@.
Since the partial parse tree is a full binary tree,
$(\mbox{\# of internal nodes}) = (\mbox{\# of leaves})-1$,
and thus the size of the SLP is equal to
$(\mbox{size of the grammar parsing})+ \sigma - 1$.
As $\sigma$ is independent of the choice of the SLP for $T$,
minimizing the size of the grammar parsing is equivalent to minimizing the SLP.

Our formulation is based on the following lemma.

\begin{lemma}\label{lemma:grammar_parsing_and_reference_structure}
    A factorization $T=F_1\cdots F_m$ for $T$
    is the grammar parsing of an SLP for $T$ if and only if
    (i) for each factor $F_k$ longer than $1$, there exist $i_k < j_k < k$ such that:
    $F_k = F_{i_k}\cdots F_{j_k}$
    and (ii) for any pair of factors $F_{x}=F_{i_{x}}\cdots F_{j_{x}}$
    and $F_{y}=F_{i_{y}}\cdots F_{j_{y}}$
    longer than $1$,
    (i.e., $(x, y) \in \{ (x', y') \mid 1 \leq x', y' \leq m, |F_{x'}| > 1, |F_{y'}| > 1  \}$),
    the intervals $[i_{x}..j_{x}]$ and $[i_{y}..j_{y}]$
    are either disjoint or one is a sub-interval of the other.
\end{lemma}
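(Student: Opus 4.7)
The plan is to prove both directions via the correspondence between grammar parsings and partial parse trees.

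For the forward direction, given an SLP $G$ whose grammar parsing is $F_1,\ldots,F_m$, I would consider the partial parse tree $\mathcal{P}$ of $T$ with respect to $G$. Each non-trivial factor $F_k$ (with $|F_k|>1$) is a leaf of $\mathcal{P}$ labeled by some non-terminal, and by the definition of partial parse tree it references a unique internal node $v_k$ of $\mathcal{P}$ carrying the same label strictly to the left in left-to-right pre-order. Since $\mathcal{P}$ is a full binary tree whose leaves are exactly $F_1,\ldots,F_m$ in order, the subtree at $v_k$ spans a contiguous leaf range $F_{i_k},\ldots,F_{j_k}$ with $i_k<j_k$ (two or more leaves since $v_k$ is internal) and $j_k<k$ (the entire subtree lies left of position $k$); the leaf-vs-subtree label equality forces $F_k=F_{i_k}\cdots F_{j_k}$, giving (i). Condition (ii) then follows from the classical observation that any two subtrees of a rooted tree have leaf ranges that are either disjoint or nested.

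For the backward direction, condition (ii) makes $\mathcal{L}_0:=\{[i_k..j_k]:|F_k|>1\}$ a laminar family on $[1..m]$. Augmenting to $\mathcal{L}:=\mathcal{L}_0\cup\{[k..k]:1\leq k\leq m\}\cup\{[1..m]\}$ retains laminarity, and I would build the Hasse tree $\mathcal{T}$ whose children of each node are its maximal proper sub-intervals in $\mathcal{L}$. I would then binarize every internal node of $\mathcal{T}$ with more than two children by inserting fresh intermediate labels, producing a full binary labeled tree $\mathcal{T}'$. Labels are assigned as follows: each $I\in\mathcal{L}_0$ gets a single non-terminal $X_I$ (shared across all $k$ with $[i_k..j_k]=I$); the root $[1..m]$ gets a start symbol $S$; each leaf $[k..k]$ with $|F_k|=1$ gets a terminal-deriving non-terminal $X_{F_k}$; and each leaf $[k..k]$ with $|F_k|>1$ gets $X_{[i_k..j_k]}$ (a later occurrence of the same non-terminal as its internal node). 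Reading off productions $X\to AB$ at each internal node of $\mathcal{T}'$ plus $X_c\to c$ at each terminal yields an SLP in Chomsky normal form; acyclicity follows because every $X_I$'s production refers only to labels associated with strictly smaller intervals in $\mathcal{L}$.

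To complete the backward direction, I would verify that the partial parse tree of $T$ under the constructed SLP is exactly $\mathcal{T}'$. For each $k$ with $|F_k|>1$, the internal-node occurrence at $[i_k..j_k]$ strictly precedes the leaf occurrence at $[k..k]$ in pre-order (since $j_k<k$), making the internal node the first occurrence of $X_{[i_k..j_k]}$ and rendering the $[k..k]$ node a pruned reference. Intermediate binarization labels appear uniquely in $\mathcal{T}'$; any duplicate copies in the parse tree of $T$ sit inside the pruned expansion of a later occurrence of some ancestor $X_I$, so they remain unique internal nodes and add no extra leaves. Hence the leaves of the partial parse tree coincide with the leaves of $\mathcal{T}'$, yielding the grammar parsing $F_1,\ldots,F_m$. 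I anticipate this verification—tracking first vs.\ later occurrences and ruling out new leaves from binarization artifacts—as the main obstacle, with the inequalities $i_k<j_k<k$ and the laminar condition (ii) being exactly what makes it succeed.
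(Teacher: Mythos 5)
Your proof follows essentially the same route as the paper's: the forward direction reads conditions (i) and (ii) off the partial parse tree, and the backward direction builds the laminar interval tree, binarizes it into a full binary tree, and labels it to obtain a partial parse tree of an SLP whose grammar parsing is $F_1\cdots F_m$; you are in fact more explicit than the paper about verifying that the constructed tree really is the partial parse tree. One small imprecision: a production $X_I\to AB$ can refer, via a pruned leaf $[k..k]$ with $|F_k|>1$, to $X_{[i_k..j_k]}$ whose interval is disjoint from and to the \emph{left} of $I$ rather than a sub-interval of it, so acyclicity should be argued via a well-founded order (e.g.\ by right endpoint, then length, using $j_k<k$) rather than by interval containment alone.
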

\begin{proof}
    ($\Rightarrow$)
    Suppose $F_1\cdots F_m$ is the grammar parsing of some SLP for $T$.
    Then, any $F_k$ longer than $1$ has an implied corresponding internal node to the left in the partial parse tree.
    Since an internal node derives at least two leaves, it
    derives $F_{i_k} \cdots F_{j_k}$ corresponding to the interval $[i_k..j_k]$ of the factorization for some $i_k < j_k < k$.
    Furthermore, since all of these intervals are derived from internal nodes of a tree,
    they must respect the tree structure, i.e., any two of them must be disjoint or contained in one another.\\
    ($\Leftarrow$)
    Suppose we are given a factorization $T=F_1\cdots F_m$ of $T$,
    as well as for each $F_k$, a corresponding interval $[i_k..j_k]$
    of the factorization satisfying the conditions of the lemma.
    Since, for any pair of factors $F_{x}=F_{i_{x}}\cdots F_{j_{x}}$
    and $F_{y}=F_{i_{y}}\cdots F_{j_{y}}$,
    the intervals $[i_{x}..j_{x}]$ and $[i_{y}..j_{y}]$
    are disjoint or contained in one another, we can construct a
    tree with the internal nodes corresponding to the intervals and the leaves corresponding to the factors of the factorization,
    where a node is a descendant of another if and only if it is a sub-interval.
    Although such a tree can be multi-ary in general,
    we can add internal nodes and transform it into a full binary tree while preserving ancestor/descendant relations of nodes/leaves in the original tree
    (note that the resulting tree may not be determined uniquely, but its size will always be the same).
    We assign to each internal node a distinct non-terminal symbol.
    To each leaf corresponding to a factor $F_k$ longer than $1$,
    we assign the same non-terminal symbol that we assigned to the internal node corresponding to $F_{i_k}\cdots F_{j_k}$.
    Finally, we assign each leaf corresponding to a factor of length $1$ a non-terminal symbol that derives the corresponding terminal symbol.
    The resulting tree is a partial parse tree for an SLP of size $m+\sigma-1$ for $T$ with $F_1\cdots F_m$ as its grammar parsing.
\end{proof}

We define Boolean variables as follows to encode Lemma~\ref{lemma:grammar_parsing_and_reference_structure}.
\begin{itemize}
    \item $f_{i,\ell}$ for $i \in [1,n], \ell\in [1,n+1-i]$: $f_{i,\ell} = 1$ if and only if $T[i..i+\ell)$ is a factor of the grammar parsing.
    \item $p_{i}$ for $i\in[1,n+1]$: For $i \neq n+1$,
          $p_i = 1$ if and only if $i$ is a starting position of a factor of the grammar parsing.
          $p_{n+1}$ is for technical reasons. We set $p_1 = p_{n+1} = 1$.
    \item $\mathit{ref}_{i'\leftarrow i,\ell}$ for
          $i,\ell,i' \in [1,n]$, s.t. $\ell \geq 2$, $i'\leq i - \ell$ and
          $T[i'..i'+\ell) = T[i..i+\ell)$:
          $\mathit{ref}_{i'\leftarrow i,\ell} = 1$ if and only if $T[i..i+\ell)$ is a factor of the grammar parsing,
          and the implied internal node of the partial parse tree corresponds to $T[i'..i'+\ell)$.
    \item $q_{i',\ell}$ for $i' \in [1,n-1], \ell \in[2,n+1-i']$ s.t. $T[i'..i'+\ell)$ has an occurrence in $T[i'+\ell..n]$:
          $q_{i',\ell} = 1$ if and only if $T[i'..i'+\ell)$ corresponds to an internal node of the partial parse tree that is referenced by at least one factor of the grammar parsing.
\end{itemize}

We next define constraints that the above variables must satisfy.

First, since each factor of the grammar parsing is disjoint and the concatenation of all factors must be equal to $T$,
the truth values of $f_{i,\ell}$ must uniquely define the truth values for $p_i$ and vice versa. This can be encoded as
\begin{align}
    \forall i\in[1,n], \ell \in[1,n+1-i]:
    f_{i,\ell} \iff p_i \land (\lnot p_{i+1}) \cdots (\lnot p_{i+\ell-1}) \land p_{i+\ell}\label{constraint:slp:f_iff_p}
\end{align}
For all $i$ and $\ell \geq 2$ such that $T[i..i+\ell)$ is the first occurrence of a substring $S = T[i..i+\ell)$ of~$T$,
$T[i..i+\ell)$ cannot be a factor of a grammar parsing. Thus, we require:
\begin{align}
    \forall i\in[1,n-1], \ell\in[2,n-i+1] \mbox{ s.t. }
    T[i..i+\ell) \mbox{ does not occur in } T[1..i):
    \lnot f_{i,\ell}\label{constraint:firstoccurrence_not_factor}
\end{align}
If $T[i..i+\ell)$ is not the first occurrence of~$S$, $T[i..i+\ell)$ can be a factor.
If $T[i..i+\ell)$ is a factor of the grammar parsing of length at least $2$,
then, there must exist at least one
$i' \leq i-\ell$ such that
$T[i'..i'+\ell) = S$ and
$T[i'..i'+\ell)$ corresponds to an internal node of the partial parse tree.
This can be encoded as
\begin{align}
    \begin{split}
        \forall i\in[1,n],\ell\in[2,n+1-i]
        \mbox{ s.t. } T[i..i+\ell) \mbox{ occurs in } T[1..i):\\
        f_{i,\ell} \implies
        \bigvee_{i' \in \{ k \mid  T[k..k+\ell) = T[i..i+\ell), k\in[1,i-\ell) \}}
        \mathit{ref}_{i'\leftarrow i,\ell}.
        \label{constraint:slp:f_implies_ref}
    \end{split}
\end{align}
Furthermore, for any $i,\ell$,
a factor $T[i..i+\ell)$ references at most one position, i.e.,
\begin{align}
    \forall i\in[1,n],\ell\in[2,n+1-i]:
    \sum_{i' \in \{ k \mid  T[k..k+\ell) = T[i..i+\ell), k\in[1,i-\ell) \}} \mathit{ref}_{i'\leftarrow i, \ell}\leq 1.
    \label{constraint:slp:ref_atmost_one}
\end{align}
On the other hand, $\mathit{ref}_{i'\leftarrow i,\ell}=1$ implies that
$T[i..i+\ell)$ is a factor of the grammar parsing.
Therefore,
\begin{align}
    \begin{split}
        \forall i\in[1,n],\ell\in[2,n+1-i],
        i' \in \{ k \mid  T[k..k+\ell) = T[i..i+\ell), k\in[1,i-\ell) \}:\\
        \mathit{ref}_{i'\leftarrow i,\ell} \implies f_{i,\ell}
        \label{constraint:slp:ref_implies_f}
    \end{split}
\end{align}
By definition, it holds that
\begin{align}
    \begin{split}
        \forall i'\in[1,n-1], \ell\in[2,n+1-i']
        \mbox{ s.t. } T[i'..i'+\ell) \mbox{ has an occurrence in } T[i'+\ell..n]:\\
        q_{i',\ell} \iff \bigvee_{1\leq i'+\ell\leq i\leq n}\mathit{ref}_{i'\leftarrow i, \ell}
        \label{constraint:slp:q_iff_ref}
    \end{split}
\end{align}
Next, as shown in Lemma~\ref{lemma:grammar_parsing_and_reference_structure},
we require that the implied internal node that is referenced by some factor
must be an interval of size at least 2 of the factorization.
We encode this as:
\begin{align}
    \begin{split}
        \forall i'\in[1,n-1],\ell \in[2,n+1-i']
        \mbox{ s.t. } T[i'..i'+\ell) \mbox{ has an occurrence in } T[i'+\ell..n]:\\
        q_{i',\ell} \implies
        \lnot f_{i',\ell}\land p_{i'}\land p_{i'+\ell}
        \label{constraint:slp:q_implies_notf_and_p}
    \end{split}
\end{align}
Also, for any two such implied internal nodes
$T[i_1..i_1+\ell_1)$ and $T[i_2..i_2+\ell_2)$,
they must either be disjoint, or one is a sub-interval of the other.
In other words, it cannot be that
a proper prefix interval of one is a proper suffix interval of the other,
i.e.,
\begin{align}
    \begin{split}
        \forall i_1,i_2,\ell_1,\ell_2 \mbox{ s.t. } i_1 < i_2 < i_1+\ell_1 < i_2+\ell_2
        \mbox{ s.t. }\\
        T[i_k..i_k+\ell) \mbox{ has an occurrence in } T[i_k+\ell..n] \mbox{ for } k\in\{1,2\}:\\
        \lnot q_{i_1,\ell_1} \lor \lnot q_{i_2,\ell_2}\label{constraint:nooverlapping}
    \end{split}
\end{align}

In total, we have $O(n^3)$ Boolean variables dominated by $\mathit{ref}_{i'\leftarrow i,\ell}$.
The size of each clause is at most $O(n)$.
The total size of the resulting CNF is $O(n^4)$,
dominated by Constraint~(\ref{constraint:nooverlapping}) where there are $O(n^4)$ clauses of $O(1)$ size each.

\paragraph*{Correctness of the Encoding}
We now prove the correctness of our formulation.
From Lemma~\ref{lemma:grammar_parsing_and_reference_structure},
if we are given some SLP producing $T$, it is clear that the above Boolean variables corresponding to
its partial parse tree, referencing structure, and grammar parsing
will satisfy all of the constraints.

Next, suppose we are given $T$ and a truth assignment satisfying the above constraints.
Starting from the truth assignments of $p_i$ and Constraint~(\ref{constraint:slp:f_iff_p}),
we can obtain a factorization of $T$
where we regard $T[i..i+\ell)$ as a factor if and only if $f_{i,\ell} = 1$.
For any $i\in[1,n]$ and $\ell\in[2,n-i+1]$,
Constraint~(\ref{constraint:firstoccurrence_not_factor}) ensures that
$T[i..i+\ell)$ having an occurrence in $T[1..i)$ is a necessary condition for $f_{i,\ell}=1$.
If $f_{i,\ell} = 1$, Constraint~(\ref{constraint:slp:f_implies_ref})
implies that there is some $i' \in [1..i-\ell)$ such that
$T[i'..i'+\ell) = T[i..i+\ell)$ and $\mathit{ref}_{i'\leftarrow i,\ell} = 1$.
From Constraint~(\ref{constraint:slp:ref_atmost_one}),
we know that there is exactly one such $i'$.
On the other hand, Constraint~(\ref{constraint:slp:ref_implies_f}) ensures that
$\mathit{ref}_{i'\leftarrow i, \ell} = 0$ for all $i'$
when $f_{i,\ell} = 0$.
Thus, for each $f_{i,\ell}=1$ with $\ell > 1$ there exists
exactly one $i'$ such that $\mathit{ref}_{i'\leftarrow i,\ell}=1$,
and all other $\mathit{ref}_{\cdot\leftarrow\cdot,\cdot}$ are $0$.
From Constraint~(\ref{constraint:slp:q_iff_ref}),
it holds that $q_{i',\ell} = 1$ if and only if there is at least one
$i,\ell$ with $\mathit{ref}_{i'\leftarrow i,\ell}=1$ and thus $f_{i,\ell}=1$.
If $q_{i',\ell} = 1$, from Constraint~(\ref{constraint:slp:q_implies_notf_and_p}),
we have
$f_{i',\ell} = 0$, $p_{i'} = p_{i'+\ell} = 1$, implying that
$T[i'..i'+\ell)$ is not a factor, but is a concatenation of two or more factors.
Constraint~(\ref{constraint:nooverlapping}) requires that all such
$T[i'..i'+\ell)$ are either disjoint or that one is a sub-interval of the other.

Thus, from the above arguments, we can see that for the factorization
defined by the $p_i$'s,
we can associate for each factor, a subinterval of the factorization that satisfies the conditions of Lemma~\ref{lemma:grammar_parsing_and_reference_structure},
thus implying that the factorization is a grammar parsing of some SLP.

\subsection{Smallest Bidirectional Macro Scheme to MAX-SAT}

For our SLP encoding, we used the fact that only the leftmost occurrences of the non-terminals are internal nodes -- we modeled every later occurrence as a leaf referring to this leftmost occurrence.
We could therefore evade the problem of constructing reference cycles since all references point in the same direction.
However, in a bidirectional scheme, the references can point in either direction,
and the difficulty in defining the encoding is how to ensure that no cycles are introduced in the referencing.

\begin{figure}[h]
    \begin{center}
        \includegraphics[width=0.8\linewidth]{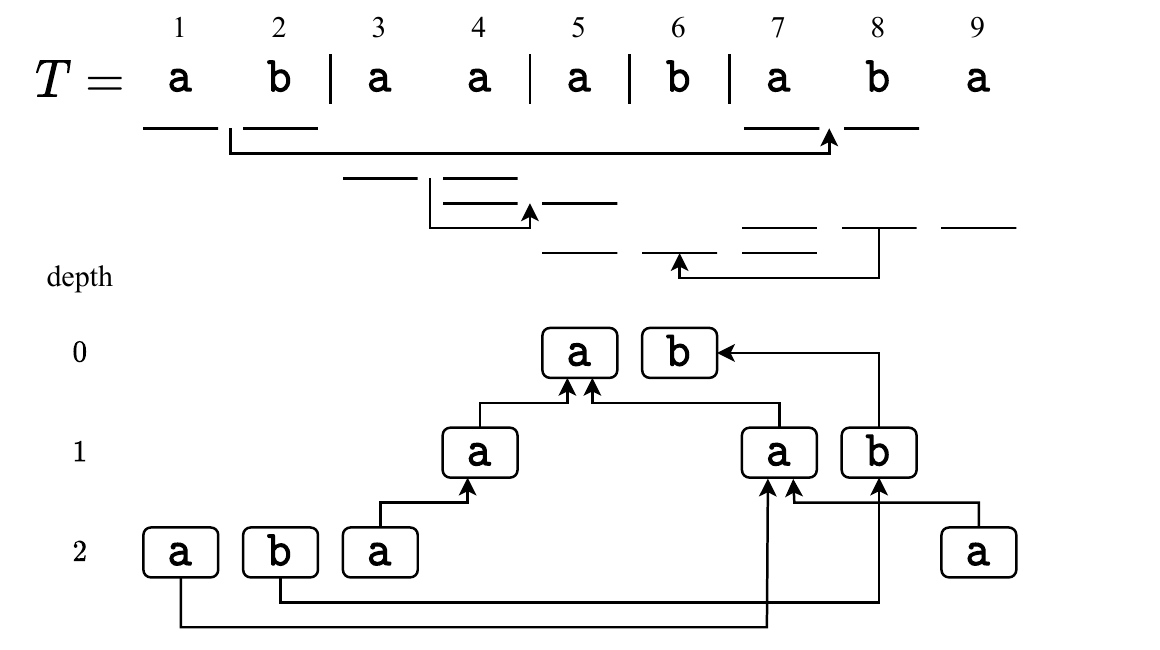}
    \end{center}
    \caption{Reference forest of the BMS of \cref{fig:k_bms}.
        The forest consists of two trees. The root of each tree corresponds to one of the two ground phrases of the BMS\@.
        For instance, to decode $T[1]$, we first need to decode $T[7]$ (the parent of $T[7]$), which has the tree root $T[5]$ as its parent.
        Hence, the number of ancestors of a node~$T[i]$ is the number of references we need to traverse to obtain a ground phrase storing the character of $T[i]$.
    }
    \label{fig:one_bms}
\end{figure}

Here, we present a solution that again works with a tree structure, but this time we have multiple trees --- a forest that represents the references.
In detail, we follow Dinklage et al.~\cite[Definition~6]{dinklage19plcpcomp}, who represented a bidirectional macro scheme by a reference forest,
where a text position~$i$ has text position~$j$ as its parent if the phrase covering $T[i]$ has a reference stating that $T[i]$ is copied from $T[j]$.
Figure~\ref{fig:one_bms} visualizes such a forest.
The roots of this reference forest are the positions of the ground phrases.

In order to find a BMS, we go in the inverse direction, and first encode a reference forest from which we
subsequently derive a BMS\@.
Since a forest has no cycles,
we can use the edges of the forest to define a valid BMS, where each factor has length one (each factor is represented by a node in the reference forest).
The final step is to glue together adjacent positions that have adjacent references into larger factors to obtain BMSs with fewer factors.

We start with the encoding for our reference forest.
The nodes of the forest coincide with the text positions, and are therefore enumerated from $1$ to $n$.
Since a text position $i$ can reference text position~$j$ only when $T[i] = T[j]$, it makes sense to restrict $j$ to belong to the set $\match{i} := \{j \in [1,n] \mid T[i] = T[j], i \neq j\}$.
In that case, we say that $j$ is the parent of $i$.
We make use of the following variables.
\begin{itemize}
    \item $\rot{i}$ for $i \in [1,n]$ : $\rot{i}=1$ if and only if node $i$ is the root of a tree.
          All roots are at depth $0$.
    \item $\dref{d, i\rightarrow j}$ for $d \in [1,n-1], i \in [1,n], j \in \match{i}$ : $\dref{d, i\rightarrow j}=1$ if and only if node $i$ at depth $d$ has a parent node $j$ at depth $d-1$.
\end{itemize}

To obtain a valid reference forest, we define the following constraints.
First, each node is a root node or has a parent.
\begin{align}\label{eqBMSDref}
    \forall i \in [1,n]: \rot{i} + \sum_{d\in[1,n],j\in \match{i}} \dref{d,i\rightarrow j}=1
\end{align}

According to Constraint~(\ref{eqBMSDref}),
a node~$i$ at depth~$d \ge 2$ must have exactly one parent~$j$, and $j$ must also have a parent node~$k$ (since $d \ge 2$).
To enforce acyclicity,
we additionally want that $k$ is exactly two levels above of~$i$.

\begin{align}\label{eqBMSTransitiveRef}
    \forall d \in [2,n], \forall i\in[1,n], \forall j \in \match{i}:
    \dref{d,i\rightarrow j} \implies \sum_{k\in \match{j}} \dref{d-1,j\rightarrow k}=1
\end{align}

Next, to translate our reference forest to a BMS, we additionally introduce the following Boolean variables.
\begin{itemize}
    \item $\refpos{i\rightarrow j}$ for $i\in [1,n], j\in \match{i}$: $\refpos{i\rightarrow j}=1$ if and only if position $i$ references position $j$.
    \item $\fbeg{i}$ for $i\in[1,n]$ : $\fbeg{i}=1$ if and only if position $i$ is a beginning of a phrase.
          Note that $p_1=1$.
\end{itemize}

The connection between the variables of the reference forest and our BMS is as follows.
For each position $i \in [1, n]$, $i$ can reference at most one position $j \in \match{i}$, i.e.,
\begin{align}\label{eqBMSAtMostOneRef}
    \forall i\in[1,n]: \sum_{j\in \match{i}} \refpos{i\rightarrow j} \leq 1
\end{align}
A position $i$ references $j$ if, on any depth~$d$ of the reference forest, there is an edge from $i$ to its parent~$j$ modeled by $\dref{d,i\rightarrow j}$.
\begin{align}\label{eqBMSOnlyOneParent}
    \forall d \in [1,n], \forall i\in[1,n], \forall j \in \match{i}:
    \dref{d,i\rightarrow j} \implies \refpos{i\rightarrow j}
\end{align}

Finally, the roots in our reference forest model the ground phrases of the BMS\@. The roots therefore cannot have a reference, but instead introduce a factor (of length one).
\begin{align}\label{eqBMSGroundPhrase}
    \forall i\in[1,n]:
    \rot{i} \implies \fbeg{i}.
    \text{~Additionally,~}
    \forall j \in \match{i}:
    \rot{i} \implies \lnot \refpos{i\rightarrow j}
\end{align}

Remembering that the phrases are determined by the variables $\fbeg{i}$'s witnessing their starting positions,
it is left to model the constraints for the truth assignment of the $p_i$'s.
For that, let us conceptually fix a text position~$i$ for which we assume that it references text position~$j$.
We consider two cases where $T[i-1]$ and $T[i]$ cannot be in the same phrase.
The first case is when $i$ or $j$ are at the start of the text
or $j-1\not\in M_{i-1}$:
\begin{align}\label{eqBMSStartingRef}
    \forall i\in[1,n], j \in M_i \mbox{ s.t. } i = 1 \mbox{ or } j = 1 \mbox{ or } T[i-1]\neq T[j-1]:
    \refpos{i\rightarrow j}\implies \fbeg{i}
\end{align}
The second case is when $j-1\in M_{i-1}$
but the position~$i-1$ does not reference position $j-1$ (it may reference a different position, or it could be a ground phrase):
\begin{align}\label{eqBMSDifferentRef}
    \begin{split}
        \forall i\in[2,n], \forall j \in \match{i} \text{~s.t.~} j>1 \mbox{ and }
        T[i-1]=T[j-1],\\
        \lnot \refpos{i-1\rightarrow j-1} \land \refpos{i\rightarrow j} \implies \fbeg{i}
    \end{split}
\end{align}

In total, we have $O(n^3)$ Boolean variables, dominated by $\dref{d,i\rightarrow j}$.
The size of the largest clause is $O(n^2)$ due to Constraint~(\ref{eqBMSDref}).
The total size of the resulting CNF is $O(n^4)$,
dominated by Constraint~(\ref{eqBMSTransitiveRef}) where there are $O(n^3)$ clauses of $O(n)$ size each.

\paragraph*{Correctness of the Encoding}
It is easy to see that any valid BMS satisfies the above constraints.
We now show that any solution that satisfies the hard clauses yields a valid BMS\@.
The truth assignments for all $p_i$ define a factorization of $T$.
We claim that each position is either a ground phrase,
or is assigned exactly one reference consistent with the factorization forming a valid BMS\@,
i.e., the references are acyclic, and, adjacent positions in the same non-ground phrase will refer to adjacent positions thus allowing the phrase to be encoded with the pair of references at both ends of the phrase.

Suppose $p_i = 1$.
If $\rot{i}=1$, then Constraint~(\ref{eqBMSDref}) ensures that
all $\dref{\cdot,i\rightarrow\cdot}=0$ and
Constraint~(\ref{eqBMSGroundPhrase}) ensures that all $\refpos{i\rightarrow\cdot}=0$,
i.e., $i$ does not have a reference.
Note that, $p_{i+1}=0$ implies $\refpos{i\rightarrow j}$ for some $j$ (shown in the next paragraph),
so $p_{i+1}=1$ must hold. Thus, position $i$ is properly factorized as a ground phrase.
If $\rot{i}=0$, then
Constraint~(\ref{eqBMSDref}) ensures that there exist unique $d,j$ such that $\dref{d,i\rightarrow j}=1$.
Furthermore, Constraint~(\ref{eqBMSOnlyOneParent}) ensures that $\refpos{i\rightarrow j}=1$.

Next, consider the case for $p_i = 0$ (which implies $i>2$).
From Constraint~(\ref{eqBMSGroundPhrase}) we have
$\rot{i} = 0$,
and from Constraint~(\ref{eqBMSDifferentRef})
we have $\forall j>1\in M_i$ s.t. $T[i-1] = T[j-1]$, $\refpos{i\rightarrow j}\implies \refpos{{i-1}\rightarrow{j-1}}$.
Since $\rot{i} = 0$,
Constraint~(\ref{eqBMSDref}) ensures that there exists unique $d,j$ such that $\dref{d,i\rightarrow j}=1$.
Furthermore, Constraint~(\ref{eqBMSOnlyOneParent})
ensures that $\refpos{i\rightarrow j}=1$.
Note that due to Constraint~(\ref{eqBMSStartingRef}),
neither $j = 1$ nor $T[i-1]\neq T[j-1]$ is possible, since this would imply $p_i=1$, contradicting the assumption that $p_i=0$.
Thus $j > 1$ and $T[i-1]=T[j-1]$, and thus we have $\refpos{{i-1}\rightarrow{j-1}}=1$.

The uniqueness of the reference $j$ for each position $i$ of a non-ground phrase is ensured by Constraint~(\ref{eqBMSAtMostOneRef}).
Thus, we have that references in adjacent positions in the same non-ground phrase point to adjacent positions.
Since the acyclicity of the references are ensured by Constraint~(\ref{eqBMSTransitiveRef}),
we have a valid BMS\@.

\section{Computational Experiments}\label{secExperiments}

We have implemented our encodings in PySAT~(\url{https://pysathq.github.io/}) written in the Python language\footnote{As far as we are aware of, this implementation is single threaded.}.
As datasets we used the files \textsc{trans}, \textsc{news}, \textsc{E.coli}, and \textsc{progc} from the Canterbury and Calgary corpus~(\url{https://corpus.canterbury.ac.nz/}).

Here, we evaluated the sum of the literals in all hard clauses, i.e., the size of the encoded CNF, and the execution time of the SAT solver for computing a solution.
In \cref{figPlots}, we evaluated our approach on different prefix lengths of the chosen datasets, starting from a prefix of 10 characters up to a prefix with 3000 characters.
We aborted an execution after reaching one hour of computation or after exceeding 16 GB of RAM, and hence the lines for computing $b$ and $g$ prematurely end due to these limits on all datasets.
Our experiments ran on an Ubuntu 20.04
machine with an AMD Ryzen Threadripper 3990X CPU.

As expected, the size of the encoded CNF correlates with the execution time in all instances.
We can see that the encoding for $\gamma$ needs the least number of literals, and is consequently not only the fastest, but also uses the least amount of memory,
allowing us to compute $\gamma$ for moderately large texts.
This is followed by $g$, and lastly by $b$.
Although the size of the CNF for $b$ is smaller than for $g$ in most cases,
clauses formed by Constraint~(\ref{eqBMSDref}) for computing $b$ can become quite large,
making the computation cumbersome.

\begin{figure}[h]
    \centerline{
        \includegraphics[page=3,width=0.9\textwidth]{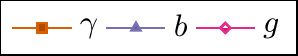}
    }
    \begin{minipage}{0.75\textwidth}
        \caption{Evaluation of our encoded CNFs.
            The first row shows the running time of PySAT on our CNF instance in seconds.
            We omit the time needed to specify the CNFs, which is negligible for larger instances.
            The second row plots the size of the respective CNF.
            All axes are in logscale.
        }
        \label{figPlots}
    \end{minipage}\hfill
    \begin{minipage}{0.24\textwidth}
        \includegraphics[page=1]{plot/miniplot.pdf}
    \end{minipage}
\end{figure}

\section{Application: Sensitivity of \texorpdfstring{$\gamma$}{Gamma}}
Akagi et al.~\cite{DBLP:journals/corr/abs-2107-08615} introduced and studied the
notion of \teigi{sensitivity} of a repetitiveness measure.
Given a repetitiveness measure $C$ (such as $\gamma$) for a string~$T$, the sensitivity of $C$ measures how much $C$ can increase when a single character edit operation is performed on $T$.
The authors studied an additive and a multiplicative sensitivity measure.
The latter, denoted $\mathit{MS}_\mathit{op}$, is defined as:
\[
    \mathit{MS}_\mathit{op}(C,n) := \max_{T\in\Sigma^n,T'\in\Sigma^*} \left\{ \frac{C(T')}{C(T)} \,\middle\vert\, \mathit{ed}_\mathit{op}(T,T')=1 \right\},
\]
i.e., the maximum multiplicative increase over all strings with the same length~$n$,
where
$\mathit{ed}_\mathit{op}(T,T')=1$ means that $T'$ can be built from $T$ by inserting a character into $T$, or deleting/replacing a character of~$T$.
Parameterizing $\gamma$ with the input string~$T$, for $C(T) = \gamma(T)$,
Akagi et al. showed $2 \le \mathit{MS}_\mathit{op}(\gamma,n) \in O(\log n)$.

To improve the lower bound, we conducted exhaustive search for short binary strings when inserting a unique character.
This search led us to
the string family
$\{T_k\}_{k \geq 2}$ with $T_k :=
    \mathtt{abbb{a}aa{b}}^{k}$,
with which we can improve the lower bound of $2$ to $5/2$.
For that, let us consider $\gamma(T_k)$ and its size after an insertion of a new character $\mathtt{c}$.
First, we observe that
$\gamma(T_k) =
    \gamma(\mathtt{abbb\underline{a}aa\underline{b}}^{k}) = 2$.
This is because a smallest string attractor is given by
$\Gamma(T_2) = \{ 4,7 \}$ and $\Gamma(T_k) = \{5,8\}$ for $k \ge 3$ (the characters at the positions in $\Gamma(T_k)$ are underlined).
Now let $T'_k$ denote $T_k$ after inserting the character $\mathtt{c}$ at text position~$9$.
For $k \ge 5$, it holds that $T'_k$ has a string attractor of size~$5$, i.e.,
$\gamma(T'_{k'}) =
    \gamma(\mathtt{\underline{a}bb\underline{b}a\underline{a}ab\underline{c}\underline{b}}^{k'}) = 5$ for $k' \ge 4$.
A minimal string attractor is given by $\Gamma(T'_{k'}) = \{1,4,6,9,10\}$.
We cannot remove a position from $\Gamma(T'_{k'})$ since
$\mathtt{abb},\mathtt{ba},\mathtt{aab},\mathtt{c}$, and $\mathtt{b}^{k'}$ are five substrings of $T'_{k'}$ having exactly one occurrence in $T'_{k'}$, and all of them are non-overlapping.
Since a string attractor has to be in the cover set of all substrings, we need a string attractor with at least five text positions.
Consequently, $\mathit{MS}_\mathit{op}(\gamma,n) \geq 2.5$ for any $n\geq 13$
with the insertion or replacement operation.

The availability of computer-aided search facilitated the discovery of strings having certain string attractors.

\bibliography{ref}
\clearpage
\appendix{}

\section{Minimal Substrings and Right-Minimal Substrings}
\label{appendix:minsubstr}
Table~\ref{tab:min-substr} shows statistics on the number and total lengths of minimal substrings, and right-minimal substrings discussed in the footnote of \cref{sec:minsub},
which directly correspond to the total size of the hard clauses for computing
the smallest string attractor.
\begin{table}[h]
    \caption{Number of minimal substrings of strings in the Calgary Corpus. \textrm{\#rmin} and \textrm{total\_rmin} are the number of right-minimal substrings and their total length, respectively. \textrm{\#lrmin} and \textrm{total\_lrmin} are the number of minimal substrings and their total length, respectively.}
    \label{tab:min-substr}
    \centering
    \begin{tabular}{lrrrrrrr}                       \\
        file             & $n$ &
        \textrm{\#lrmin} &
        \textrm{\#rmin}  &
        \parbox{1.3cm}{\textrm{\#lrmin}/ \\\textrm{\#rmin}} & \textrm{total\_lrmin}  &  \textrm{total\_rmin} & \parbox{1.7cm}{\textrm{total\_lrmin}/\\\textrm{total\_rmin}} \\
        bib & \num{111261} & \num{46197} & \num{171083} & \num{0.2700} & \num{285617} & \num{2361086} & \num{0.1210} \\
book1 & \num{768771} & \num{500936} & \num{1154048} & \num{0.4341} & \num{3532043} & \num{9958234} & \num{0.3547} \\
book2 & \num{610856} & \num{313379} & \num{935376} & \num{0.3350} & \num{2272154} & \num{10461032} & \num{0.2172} \\
geo & \num{102400} & \num{68169} & \num{130104} & \num{0.5240} & \num{241099} & \num{609743} & \num{0.3954} \\
news & \num{377109} & \num{189274} & \num{573175} & \num{0.3302} & \num{1123327} & \num{12639205} & \num{0.0889} \\
obj1 & \num{21504} & \num{11469} & \num{27540} & \num{0.4164} & \num{543281} & \num{1258644} & \num{0.4316} \\
obj2 & \num{246814} & \num{90807} & \num{380169} & \num{0.2389} & \num{460038} & \num{8416783} & \num{0.0547} \\
paper1 & \num{53161} & \num{27795} & \num{82189} & \num{0.3382} & \num{146463} & \num{816030} & \num{0.1795} \\
paper2 & \num{82199} & \num{46594} & \num{125407} & \num{0.3715} & \num{265472} & \num{1071864} & \num{0.2477} \\
paper3 & \num{46526} & \num{27979} & \num{70443} & \num{0.3972} & \num{141373} & \num{516646} & \num{0.2736} \\
paper4 & \num{13286} & \num{8355} & \num{20155} & \num{0.4145} & \num{35540} & \num{135281} & \num{0.2627} \\
paper5 & \num{11954} & \num{7080} & \num{18173} & \num{0.3896} & \num{28449} & \num{122620} & \num{0.2320} \\
paper6 & \num{38105} & \num{19923} & \num{59184} & \num{0.3366} & \num{99148} & \num{660580} & \num{0.1501} \\
pic & \num{513215} & \num{126267} & \num{763557} & \num{0.1654} & \num{661361467} & \num{1303736731} & \num{0.5073} \\
progc & \num{39611} & \num{19008} & \num{60777} & \num{0.3127} & \num{94940} & \num{616961} & \num{0.1539} \\
progl & \num{71646} & \num{25530} & \num{118137} & \num{0.2161} & \num{171982} & \num{3359343} & \num{0.0512} \\
progp & \num{49379} & \num{16323} & \num{82443} & \num{0.1980} & \num{99494} & \num{5769774} & \num{0.0172} \\
trans & \num{93695} & \num{25532} & \num{160087} & \num{0.1595} & \num{211012} & \num{10608211} & \num{0.0199} \\     \end{tabular}
\end{table}
\section{Elaborated Evaluation}\label{secAppExperiments}
Here, we present an extended benchmark to \cref{secExperiments}
using the datasets from the Canterbury corpus\footnote{\url{https://corpus.canterbury.ac.nz/}},
the Calgary corpus, and the four morphic word sequences Fibonacci
($F_0 = \mathtt{a}, F_1 = \mathtt{ab}, F_k = F_{k-1}F_{k-2}~\forall k \ge 2$),
period-doubling
($P_0 = \mathtt{a}, P_k = P_{k-1}[1..|P_{k-1}-1|]\overline{P_{k-1}[|P_{k-1}|]}$, with $\overline{\mathtt{a}} = \mathtt{b}$ and $\overline{\mathtt{b}} = \mathtt{a}$),
Thue--Morse $(T_0 = \mathtt{a}, T_k = T_{k-1}\overline{T_{k-1}}~\forall k \ge 1$), and
paper-folding ($A_0 = 11, A_k = \varphi(A_{k-1})$ with
$ \varphi(\mathtt{11}) = \mathtt{1101}, \varphi(\mathtt{01}) = \mathtt{1001}, \varphi(\mathtt{10}) = \mathtt{1100}, \varphi(\mathtt{00}) = \mathtt{1000}, $).
We evaluated the time (\cref{tab:time}),
the number of variables (\cref{tab:nvars}),
the number of hard clauses (\cref{tab:nhard}),
the number of literals in the largest clause (\cref{tab:sol_nmaxclause}), and
the sum of the literals in all clauses, i.e., the size of the encoded CNF (\cref{tab:sol_ntotalvars}).
The number of soft clauses is omitted --- this number is always equal to the text length.
We aborted an execution after reaching one hour of computation (no-time) or after exceeding 16 GB of RAM (no-mem).\footnote{The error ``unknown'' is caused for $g$ on datasets of length 1. This minor bug has subsequently been fixed.}
Unfortunately, we can only compute only all compression characteristics on some of the morphic words -- we are only able to compute $\gamma$ on the other datasets.
Finally, we complement our plots shown in \cref{figPlots} with a detailed evaluation for every dataset of the Canterbury and Calgary corpus.
In Figures~\ref{plot:aaa.txt} to~\ref{plot:xargs.1} we measured the total execution time of our program,
the time needed for the SAT solver (like in \cref{figPlots}),
the output size (i.e., $\gamma$, $b$, or~$g$),
the size of the encoded CNF,
the maximum clause size, i.e., the maximum number of literals a clause in our CNF attains,
and finally the number of hard clauses.
Like already hinted in \cref{secExperiments}, the slow execution times for computing~$b$
can likely be linked to the fact that our encoding for $b$ always needs the largest number of hard clauses,
and some of them are much larger than the clauses for $g$ or $\gamma$.
The precomputation time for $g$ is non-negligible since we check
Constraint~\ref{constraint:firstoccurrence_not_factor} with a longest non-overlapping factor table~\cite{crochemore11computing},
which we compute naively.
\def\justbeingincluded{justbeingincluded}

\begin{longtable}{l*{4}{r}{c}}
	\caption{Time in Seconds}
	\label{tab:time}
	\\
	file & $n$ & attractor~$\gamma$ & BMS~$b$ & SLCP~$g$ \\
alice29.txt & \num{152089} & \num{537.99} & no mem & no mem \\
asyoulik.txt & \num{125179} & \num{452.41} & no mem & no mem \\
bib & \num{111261} & \num{224.11} & no mem & no mem \\
cp.html & \num{24603} & \num{12.84} & no mem & no mem \\
fibonacci.00 & \num{1} & \num{0.00} & \num{0.00} & unknown \\
fibonacci.01 & \num{2} & \num{0.00} & \num{0.00} & \num{0.00} \\
fibonacci.02 & \num{3} & \num{0.00} & \num{0.00} & \num{0.00} \\
fibonacci.03 & \num{5} & \num{0.00} & \num{0.00} & \num{0.00} \\
fibonacci.04 & \num{8} & \num{0.00} & \num{0.00} & \num{0.00} \\
fibonacci.05 & \num{13} & \num{0.00} & \num{0.01} & \num{0.01} \\
fibonacci.06 & \num{21} & \num{0.00} & \num{0.06} & \num{0.02} \\
fibonacci.07 & \num{34} & \num{0.00} & \num{0.27} & \num{0.07} \\
fibonacci.08 & \num{55} & \num{0.00} & \num{1.21} & \num{0.37} \\
fibonacci.09 & \num{89} & \num{0.00} & \num{5.71} & \num{1.75} \\
fibonacci.10 & \num{144} & \num{0.00} & \num{25.94} & \num{10.71} \\
fibonacci.11 & \num{233} & \num{0.01} & \num{116.40} & \num{59.09} \\
fibonacci.12 & \num{377} & \num{0.01} & no mem & no mem \\
fibonacci.13 & \num{610} & \num{0.01} & no mem & no mem \\
fibonacci.14 & \num{987} & \num{0.02} & no mem & no mem \\
fibonacci.15 & \num{1597} & \num{0.04} & no mem & no mem \\
fibonacci.16 & \num{2584} & \num{0.07} & no mem & no mem \\
fibonacci.17 & \num{4181} & \num{0.15} & no mem & no mem \\
fibonacci.18 & \num{6765} & \num{0.24} & no mem & no mem \\
fibonacci.19 & \num{10946} & \num{0.49} & no mem & no mem \\
fibonacci.20 & \num{17711} & \num{2.87} & no mem & no mem \\
fields.c & \num{11150} & \num{2.55} & no mem & no mem \\
geo & \num{102400} & \num{403.34} & no mem & no mem \\
grammar.lsp & \num{3721} & \num{0.40} & no mem & no mem \\
obj1 & \num{21504} & \num{35.30} & no mem & no mem \\
paper1 & \num{53161} & \num{136.13} & no mem & no mem \\
paper2 & \num{82199} & \num{158.26} & no mem & no mem \\
paper3 & \num{46526} & \num{51.84} & no mem & no mem \\
paper4 & \num{13286} & \num{5.01} & no mem & no mem \\
paper5 & \num{11954} & \num{3.88} & no mem & no mem \\
paper6 & \num{38105} & \num{31.04} & no mem & no mem \\
paperfold.00 & \num{2} & \num{0.00} & \num{0.00} & \num{0.00} \\
paperfold.01 & \num{4} & \num{0.00} & \num{0.00} & \num{0.00} \\
paperfold.02 & \num{8} & \num{0.00} & \num{0.00} & \num{0.00} \\
paperfold.03 & \num{16} & \num{0.00} & \num{0.02} & \num{0.01} \\
paperfold.04 & \num{32} & \num{0.00} & \num{0.20} & \num{0.03} \\
paperfold.05 & \num{64} & \num{0.00} & \num{3.78} & \num{0.26} \\
paperfold.06 & \num{128} & \num{0.00} & no time & \num{2.12} \\
paperfold.07 & \num{256} & \num{0.01} & no time & \num{20.19} \\
paperfold.08 & \num{512} & \num{0.03} & no mem & \num{254.18} \\
paperfold.09 & \num{1024} & \num{0.54} & no mem & no mem \\
paperfold.10 & \num{2048} & \num{1.08} & no mem & no mem \\
paperfold.11 & \num{4096} & \num{9.34} & no mem & no mem \\
paperfold.12 & \num{8192} & \num{15.73} & no mem & no mem \\
paperfold.13 & \num{16384} & \num{378.53} & no mem & no mem \\
paperfold.14 & \num{32768} & \num{2543.88} & no mem & no mem \\
perioddoubling.00 & \num{1} & \num{0.00} & \num{0.00} & unknown \\
perioddoubling.01 & \num{2} & \num{0.00} & \num{0.00} & \num{0.00} \\
perioddoubling.02 & \num{4} & \num{0.00} & \num{0.00} & \num{0.00} \\
perioddoubling.03 & \num{8} & \num{0.00} & \num{0.00} & \num{0.00} \\
perioddoubling.04 & \num{16} & \num{0.00} & \num{0.04} & \num{0.01} \\
perioddoubling.05 & \num{32} & \num{0.00} & \num{1.89} & \num{0.04} \\
perioddoubling.06 & \num{64} & \num{0.00} & \num{38.01} & \num{0.42} \\
perioddoubling.07 & \num{128} & \num{0.00} & no time & \num{4.32} \\
perioddoubling.08 & \num{256} & \num{0.01} & no time & \num{49.51} \\
perioddoubling.09 & \num{512} & \num{0.01} & no mem & no mem \\
perioddoubling.10 & \num{1024} & \num{0.03} & no mem & no mem \\
perioddoubling.11 & \num{2048} & \num{0.06} & no mem & no mem \\
perioddoubling.12 & \num{4096} & \num{0.13} & no mem & no mem \\
perioddoubling.13 & \num{8192} & \num{0.31} & no mem & no mem \\
perioddoubling.14 & \num{16384} & \num{0.78} & no mem & no mem \\
perioddoubling.15 & \num{32768} & \num{2.16} & no mem & no mem \\
perioddoubling.16 & \num{65536} & \num{6.42} & no mem & no mem \\
perioddoubling.17 & \num{131072} & \num{23.80} & no mem & no time \\
perioddoubling.18 & \num{262144} & \num{83.41} & no mem & no time \\
perioddoubling.19 & \num{524288} & \num{348.78} & no mem & no time \\
perioddoubling.20 & \num{1048576} & \num{1351.12} & no mem & no time \\
progc & \num{39611} & \num{33.81} & no mem & no mem \\
random.txt & \num{100000} & \num{524.78} & no mem & no mem \\
sum & \num{38240} & \num{39.40} & no mem & no mem \\
thuemorse.00 & \num{1} & \num{0.00} & \num{0.00} & unknown \\
thuemorse.01 & \num{2} & \num{0.00} & \num{0.00} & \num{0.00} \\
thuemorse.02 & \num{4} & \num{0.00} & \num{0.00} & \num{0.00} \\
thuemorse.03 & \num{8} & \num{0.00} & \num{0.00} & \num{0.00} \\
thuemorse.04 & \num{16} & \num{0.00} & \num{0.02} & \num{0.01} \\
thuemorse.05 & \num{32} & \num{0.00} & \num{0.21} & \num{0.04} \\
thuemorse.06 & \num{64} & \num{0.00} & \num{18.68} & \num{0.25} \\
thuemorse.07 & \num{128} & \num{0.00} & \num{3397.38} & \num{2.22} \\
thuemorse.08 & \num{256} & \num{0.01} & no time & \num{20.06} \\
thuemorse.09 & \num{512} & \num{0.01} & no mem & no mem \\
thuemorse.10 & \num{1024} & \num{0.03} & no mem & no mem \\
thuemorse.11 & \num{2048} & \num{0.06} & no mem & no mem \\
thuemorse.12 & \num{4096} & \num{0.13} & no mem & no mem \\
thuemorse.13 & \num{8192} & \num{0.28} & no mem & no mem \\
thuemorse.14 & \num{16384} & \num{0.63} & no mem & no mem \\
thuemorse.15 & \num{32768} & \num{1.43} & no mem & no mem \\
thuemorse.16 & \num{65536} & \num{3.41} & no mem & no mem \\
thuemorse.17 & \num{131072} & \num{8.85} & no mem & no time \\
thuemorse.18 & \num{262144} & \num{26.15} & no mem & no time \\
thuemorse.19 & \num{524288} & \num{84.90} & no mem & no time \\
thuemorse.20 & \num{1048576} & \num{296.43} & no mem & no time \\
xargs.1 & \num{4227} & \num{0.61} & no mem & no mem \\
 \end{longtable}

\begin{longtable}{l*{4}{r}{c}}
	\caption{Number of Defined Literals}
	\label{tab:nvars}
	\\
	file & $n$ & attractor~$\gamma$ & BMS~$b$ & SLCP~$g$ \\
alice29.txt & \num{152089} & \num{152089} & no mem & no mem \\
asyoulik.txt & \num{125179} & \num{125179} & no mem & no mem \\
bib & \num{111261} & \num{111261} & no mem & no mem \\
cp.html & \num{24603} & \num{24603} & no mem & no mem \\
fibonacci.00 & \num{1} & \num{1} & \num{4} & unknown \\
fibonacci.01 & \num{2} & \num{2} & \num{6} & \num{7} \\
fibonacci.02 & \num{3} & \num{3} & \num{17} & \num{12} \\
fibonacci.03 & \num{5} & \num{5} & \num{56} & \num{34} \\
fibonacci.04 & \num{8} & \num{8} & \num{287} & \num{109} \\
fibonacci.05 & \num{13} & \num{13} & \num{1322} & \num{375} \\
fibonacci.06 & \num{21} & \num{21} & \num{5787} & \num{1313} \\
fibonacci.07 & \num{34} & \num{34} & \num{24908} & \num{4442} \\
fibonacci.08 & \num{55} & \num{55} & \num{106728} & \num{14684} \\
fibonacci.09 & \num{89} & \num{89} & \num{455017} & \num{47049} \\
fibonacci.10 & \num{144} & \num{144} & \num{1935574} & \num{147278} \\
fibonacci.11 & \num{233} & \num{233} & \num{8219666} & \num{450933} \\
fibonacci.12 & \num{377} & \num{377} & no mem & no mem \\
fibonacci.13 & \num{610} & \num{610} & no mem & no mem \\
fibonacci.14 & \num{987} & \num{987} & no mem & no mem \\
fibonacci.15 & \num{1597} & \num{1597} & no mem & no mem \\
fibonacci.16 & \num{2584} & \num{2584} & no mem & no mem \\
fibonacci.17 & \num{4181} & \num{4181} & no mem & no mem \\
fibonacci.18 & \num{6765} & \num{6765} & no mem & no mem \\
fibonacci.19 & \num{10946} & \num{10946} & no mem & no mem \\
fibonacci.20 & \num{17711} & \num{17711} & no mem & no mem \\
fields.c & \num{11150} & \num{11150} & no mem & no mem \\
geo & \num{102400} & \num{102400} & no mem & no mem \\
grammar.lsp & \num{3721} & \num{3721} & no mem & no mem \\
obj1 & \num{21504} & \num{21504} & no mem & no mem \\
paper1 & \num{53161} & \num{53161} & no mem & no mem \\
paper2 & \num{82199} & \num{82199} & no mem & no mem \\
paper3 & \num{46526} & \num{46526} & no mem & no mem \\
paper4 & \num{13286} & \num{13286} & no mem & no mem \\
paper5 & \num{11954} & \num{11954} & no mem & no mem \\
paper6 & \num{38105} & \num{38105} & no mem & no mem \\
paperfold.00 & \num{2} & \num{2} & \num{14} & \num{7} \\
paperfold.01 & \num{4} & \num{4} & \num{42} & \num{19} \\
paperfold.02 & \num{8} & \num{8} & \num{287} & \num{91} \\
paperfold.03 & \num{16} & \num{16} & \num{2198} & \num{434} \\
paperfold.04 & \num{32} & \num{32} & \num{16934} & \num{2133} \\
paperfold.05 & \num{64} & \num{64} & \num{133190} & \num{10155} \\
paperfold.06 & \num{128} & \num{128} & no time & \num{47511} \\
paperfold.07 & \num{256} & \num{256} & no time & \num{218983} \\
paperfold.08 & \num{512} & \num{512} & no mem & \num{994951} \\
paperfold.09 & \num{1024} & \num{1024} & no mem & no mem \\
paperfold.10 & \num{2048} & \num{2048} & no mem & no mem \\
paperfold.11 & \num{4096} & \num{4096} & no mem & no mem \\
paperfold.12 & \num{8192} & \num{8192} & no mem & no mem \\
paperfold.13 & \num{16384} & \num{16384} & no mem & no mem \\
paperfold.14 & \num{32768} & \num{32768} & no mem & no mem \\
perioddoubling.00 & \num{1} & \num{1} & \num{4} & unknown \\
perioddoubling.01 & \num{2} & \num{2} & \num{6} & \num{7} \\
perioddoubling.02 & \num{4} & \num{4} & \num{42} & \num{19} \\
perioddoubling.03 & \num{8} & \num{8} & \num{287} & \num{93} \\
perioddoubling.04 & \num{16} & \num{16} & \num{3038} & \num{546} \\
perioddoubling.05 & \num{32} & \num{32} & \num{22934} & \num{3095} \\
perioddoubling.06 & \num{64} & \num{64} & \num{194190} & \num{16544} \\
perioddoubling.07 & \num{128} & \num{128} & no time & \num{83951} \\
perioddoubling.08 & \num{256} & \num{256} & no time & \num{409696} \\
perioddoubling.09 & \num{512} & \num{512} & no mem & no mem \\
perioddoubling.10 & \num{1024} & \num{1024} & no mem & no mem \\
perioddoubling.11 & \num{2048} & \num{2048} & no mem & no mem \\
perioddoubling.12 & \num{4096} & \num{4096} & no mem & no mem \\
perioddoubling.13 & \num{8192} & \num{8192} & no mem & no mem \\
perioddoubling.14 & \num{16384} & \num{16384} & no mem & no mem \\
perioddoubling.15 & \num{32768} & \num{32768} & no mem & no mem \\
perioddoubling.16 & \num{65536} & \num{65536} & no mem & no mem \\
perioddoubling.17 & \num{131072} & \num{131072} & no mem & no time \\
perioddoubling.18 & \num{262144} & \num{262144} & no mem & no time \\
perioddoubling.19 & \num{524288} & \num{524288} & no mem & no time \\
perioddoubling.20 & \num{1048576} & \num{1048576} & no mem & no time \\
progc & \num{39611} & \num{39611} & no mem & no mem \\
random.txt & \num{100000} & \num{100000} & no mem & no mem \\
sum & \num{38240} & \num{38240} & no mem & no mem \\
thuemorse.00 & \num{1} & \num{1} & \num{4} & unknown \\
thuemorse.01 & \num{2} & \num{2} & \num{6} & \num{7} \\
thuemorse.02 & \num{4} & \num{4} & \num{26} & \num{19} \\
thuemorse.03 & \num{8} & \num{8} & \num{226} & \num{87} \\
thuemorse.04 & \num{16} & \num{16} & \num{1922} & \num{449} \\
thuemorse.05 & \num{32} & \num{32} & \num{15874} & \num{2323} \\
thuemorse.06 & \num{64} & \num{64} & \num{129026} & \num{11535} \\
thuemorse.07 & \num{128} & \num{128} & \num{1040386} & \num{55553} \\
thuemorse.08 & \num{256} & \num{256} & no time & \num{260215} \\
thuemorse.09 & \num{512} & \num{512} & no mem & no mem \\
thuemorse.10 & \num{1024} & \num{1024} & no mem & no mem \\
thuemorse.11 & \num{2048} & \num{2048} & no mem & no mem \\
thuemorse.12 & \num{4096} & \num{4096} & no mem & no mem \\
thuemorse.13 & \num{8192} & \num{8192} & no mem & no mem \\
thuemorse.14 & \num{16384} & \num{16384} & no mem & no mem \\
thuemorse.15 & \num{32768} & \num{32768} & no mem & no mem \\
thuemorse.16 & \num{65536} & \num{65536} & no mem & no mem \\
thuemorse.17 & \num{131072} & \num{131072} & no mem & no time \\
thuemorse.18 & \num{262144} & \num{262144} & no mem & no time \\
thuemorse.19 & \num{524288} & \num{524288} & no mem & no time \\
thuemorse.20 & \num{1048576} & \num{1048576} & no mem & no time \\
xargs.1 & \num{4227} & \num{4227} & no mem & no mem \\
 \end{longtable}

\begin{longtable}{l*{4}{r}{c}}
	\caption{Number of Hard Clauses}
	\label{tab:nhard}
	\\
	file & $n$ & attractor~$\gamma$ & BMS~$b$ & SLCP~$g$ \\
alice29.txt & \num{152089} & \num{86964} & no mem & no mem \\
asyoulik.txt & \num{125179} & \num{78822} & no mem & no mem \\
bib & \num{111261} & \num{46197} & no mem & no mem \\
cp.html & \num{24603} & \num{10855} & no mem & no mem \\
fibonacci.00 & \num{1} & \num{1} & \num{6} & unknown \\
fibonacci.01 & \num{2} & \num{2} & \num{11} & \num{12} \\
fibonacci.02 & \num{3} & \num{2} & \num{37} & \num{27} \\
fibonacci.03 & \num{5} & \num{4} & \num{161} & \num{98} \\
fibonacci.04 & \num{8} & \num{4} & \num{913} & \num{358} \\
fibonacci.05 & \num{13} & \num{7} & \num{4444} & \num{1397} \\
fibonacci.06 & \num{21} & \num{7} & \num{20467} & \num{5759} \\
fibonacci.07 & \num{34} & \num{10} & \num{90582} & \num{24989} \\
fibonacci.08 & \num{55} & \num{10} & \num{394644} & \num{117964} \\
fibonacci.09 & \num{89} & \num{13} & \num{1699154} & \num{614887} \\
fibonacci.10 & \num{144} & \num{13} & \num{7271531} & \num{3525773} \\
fibonacci.11 & \num{233} & \num{16} & \num{30992634} & \num{21741434} \\
fibonacci.12 & \num{377} & \num{16} & no mem & no mem \\
fibonacci.13 & \num{610} & \num{19} & no mem & no mem \\
fibonacci.14 & \num{987} & \num{19} & no mem & no mem \\
fibonacci.15 & \num{1597} & \num{22} & no mem & no mem \\
fibonacci.16 & \num{2584} & \num{22} & no mem & no mem \\
fibonacci.17 & \num{4181} & \num{25} & no mem & no mem \\
fibonacci.18 & \num{6765} & \num{25} & no mem & no mem \\
fibonacci.19 & \num{10946} & \num{28} & no mem & no mem \\
fibonacci.20 & \num{17711} & \num{28} & no mem & no mem \\
fields.c & \num{11150} & \num{4206} & no mem & no mem \\
geo & \num{102400} & \num{68169} & no mem & no mem \\
grammar.lsp & \num{3721} & \num{1669} & no mem & no mem \\
obj1 & \num{21504} & \num{11469} & no mem & no mem \\
paper1 & \num{53161} & \num{27795} & no mem & no mem \\
paper2 & \num{82199} & \num{46594} & no mem & no mem \\
paper3 & \num{46526} & \num{27979} & no mem & no mem \\
paper4 & \num{13286} & \num{8355} & no mem & no mem \\
paper5 & \num{11954} & \num{7080} & no mem & no mem \\
paper6 & \num{38105} & \num{19923} & no mem & no mem \\
paperfold.00 & \num{2} & \num{2} & \num{31} & \num{11} \\
paperfold.01 & \num{4} & \num{3} & \num{121} & \num{51} \\
paperfold.02 & \num{8} & \num{8} & \num{907} & \num{312} \\
paperfold.03 & \num{16} & \num{14} & \num{7489} & \num{1864} \\
paperfold.04 & \num{32} & \num{28} & \num{60657} & \num{11885} \\
paperfold.05 & \num{64} & \num{38} & \num{488401} & \num{82596} \\
paperfold.06 & \num{128} & \num{47} & no time & \num{673692} \\
paperfold.07 & \num{256} & \num{56} & no time & \num{6643404} \\
paperfold.08 & \num{512} & \num{65} & no mem & \num{78308716} \\
paperfold.09 & \num{1024} & \num{74} & no mem & no mem \\
paperfold.10 & \num{2048} & \num{83} & no mem & no mem \\
paperfold.11 & \num{4096} & \num{92} & no mem & no mem \\
paperfold.12 & \num{8192} & \num{101} & no mem & no mem \\
paperfold.13 & \num{16384} & \num{110} & no mem & no mem \\
paperfold.14 & \num{32768} & \num{119} & no mem & no mem \\
perioddoubling.00 & \num{1} & \num{1} & \num{6} & unknown \\
perioddoubling.01 & \num{2} & \num{2} & \num{11} & \num{12} \\
perioddoubling.02 & \num{4} & \num{3} & \num{121} & \num{51} \\
perioddoubling.03 & \num{8} & \num{8} & \num{910} & \num{317} \\
perioddoubling.04 & \num{16} & \num{11} & \num{10583} & \num{2209} \\
perioddoubling.05 & \num{32} & \num{18} & \num{83457} & \num{16336} \\
perioddoubling.06 & \num{64} & \num{21} & \num{722991} & \num{138636} \\
perioddoubling.07 & \num{128} & \num{28} & no time & \num{1451220} \\
perioddoubling.08 & \num{256} & \num{31} & no time & \num{18467052} \\
perioddoubling.09 & \num{512} & \num{38} & no mem & no mem \\
perioddoubling.10 & \num{1024} & \num{41} & no mem & no mem \\
perioddoubling.11 & \num{2048} & \num{48} & no mem & no mem \\
perioddoubling.12 & \num{4096} & \num{51} & no mem & no mem \\
perioddoubling.13 & \num{8192} & \num{58} & no mem & no mem \\
perioddoubling.14 & \num{16384} & \num{61} & no mem & no mem \\
perioddoubling.15 & \num{32768} & \num{68} & no mem & no mem \\
perioddoubling.16 & \num{65536} & \num{71} & no mem & no mem \\
perioddoubling.17 & \num{131072} & \num{78} & no mem & no time \\
perioddoubling.18 & \num{262144} & \num{81} & no mem & no time \\
perioddoubling.19 & \num{524288} & \num{88} & no mem & no time \\
perioddoubling.20 & \num{1048576} & \num{91} & no mem & no time \\
progc & \num{39611} & \num{19008} & no mem & no mem \\
random.txt & \num{100000} & \num{97166} & no mem & no mem \\
sum & \num{38240} & \num{15328} & no mem & no mem \\
thuemorse.00 & \num{1} & \num{1} & \num{6} & unknown \\
thuemorse.01 & \num{2} & \num{2} & \num{11} & \num{12} \\
thuemorse.02 & \num{4} & \num{5} & \num{61} & \num{51} \\
thuemorse.03 & \num{8} & \num{8} & \num{666} & \num{304} \\
thuemorse.04 & \num{16} & \num{16} & \num{6476} & \num{1916} \\
thuemorse.05 & \num{32} & \num{24} & \num{56612} & \num{12681} \\
thuemorse.06 & \num{64} & \num{32} & \num{472180} & \num{90587} \\
thuemorse.07 & \num{128} & \num{40} & \num{3854804} & \num{747417} \\
thuemorse.08 & \num{256} & \num{48} & no time & \num{7427323} \\
thuemorse.09 & \num{512} & \num{56} & no mem & no mem \\
thuemorse.10 & \num{1024} & \num{64} & no mem & no mem \\
thuemorse.11 & \num{2048} & \num{72} & no mem & no mem \\
thuemorse.12 & \num{4096} & \num{80} & no mem & no mem \\
thuemorse.13 & \num{8192} & \num{88} & no mem & no mem \\
thuemorse.14 & \num{16384} & \num{96} & no mem & no mem \\
thuemorse.15 & \num{32768} & \num{104} & no mem & no mem \\
thuemorse.16 & \num{65536} & \num{112} & no mem & no mem \\
thuemorse.17 & \num{131072} & \num{120} & no mem & no time \\
thuemorse.18 & \num{262144} & \num{128} & no mem & no time \\
thuemorse.19 & \num{524288} & \num{136} & no mem & no time \\
thuemorse.20 & \num{1048576} & \num{144} & no mem & no time \\
xargs.1 & \num{4227} & \num{2366} & no mem & no mem \\
 \end{longtable}

\begin{longtable}{l*{4}{r}{c}}
	\caption{Size of Largest Clause (\# Literals)}
	\label{tab:sol_nmaxclause}
	\\
	file & $n$ & attractor~$\gamma$ & BMS~$b$ & SLCP~$g$ \\
alice29.txt & \num{152089} & \num{28900} & no mem & no mem \\
asyoulik.txt & \num{125179} & \num{19359} & no mem & no mem \\
bib & \num{111261} & \num{13739} & no mem & no mem \\
cp.html & \num{24603} & \num{1824} & no mem & no mem \\
fibonacci.00 & \num{1} & \num{1} & \num{3} & unknown \\
fibonacci.01 & \num{2} & \num{1} & \num{5} & \num{4} \\
fibonacci.02 & \num{3} & \num{2} & \num{14} & \num{5} \\
fibonacci.03 & \num{5} & \num{3} & \num{45} & \num{7} \\
fibonacci.04 & \num{8} & \num{5} & \num{179} & \num{10} \\
fibonacci.05 & \num{13} & \num{8} & \num{726} & \num{15} \\
fibonacci.06 & \num{21} & \num{13} & \num{3051} & \num{23} \\
fibonacci.07 & \num{34} & \num{26} & \num{12845} & \num{36} \\
fibonacci.08 & \num{55} & \num{35} & \num{54354} & \num{58} \\
fibonacci.09 & \num{89} & \num{73} & \num{230045} & \num{97} \\
fibonacci.10 & \num{144} & \num{100} & \num{974339} & \num{160} \\
fibonacci.11 & \num{233} & \num{196} & \num{4126842} & \num{262} \\
fibonacci.12 & \num{377} & \num{270} & no mem & no mem \\
fibonacci.13 & \num{610} & \num{518} & no mem & no mem \\
fibonacci.14 & \num{987} & \num{715} & no mem & no mem \\
fibonacci.15 & \num{1597} & \num{1361} & no mem & no mem \\
fibonacci.16 & \num{2584} & \num{1880} & no mem & no mem \\
fibonacci.17 & \num{4181} & \num{3568} & no mem & no mem \\
fibonacci.18 & \num{6765} & \num{4930} & no mem & no mem \\
fibonacci.19 & \num{10946} & \num{9346} & no mem & no mem \\
fibonacci.20 & \num{17711} & \num{12915} & no mem & no mem \\
fields.c & \num{11150} & \num{2213} & no mem & no mem \\
geo & \num{102400} & \num{28626} & no mem & no mem \\
grammar.lsp & \num{3721} & \num{802} & no mem & no mem \\
obj1 & \num{21504} & \num{5552} & no mem & no mem \\
paper1 & \num{53161} & \num{7301} & no mem & no mem \\
paper2 & \num{82199} & \num{12112} & no mem & no mem \\
paper3 & \num{46526} & \num{6154} & no mem & no mem \\
paper4 & \num{13286} & \num{1958} & no mem & no mem \\
paper5 & \num{11954} & \num{1869} & no mem & no mem \\
paper6 & \num{38105} & \num{5721} & no mem & no mem \\
paperfold.00 & \num{2} & \num{2} & \num{11} & \num{4} \\
paperfold.01 & \num{4} & \num{3} & \num{35} & \num{6} \\
paperfold.02 & \num{8} & \num{5} & \num{179} & \num{10} \\
paperfold.03 & \num{16} & \num{9} & \num{1187} & \num{18} \\
paperfold.04 & \num{32} & \num{18} & \num{8771} & \num{34} \\
paperfold.05 & \num{64} & \num{36} & \num{67715} & \num{66} \\
paperfold.06 & \num{128} & \num{72} & no time & \num{130} \\
paperfold.07 & \num{256} & \num{144} & no time & \num{258} \\
paperfold.08 & \num{512} & \num{288} & no mem & \num{514} \\
paperfold.09 & \num{1024} & \num{576} & no mem & no mem \\
paperfold.10 & \num{2048} & \num{1152} & no mem & no mem \\
paperfold.11 & \num{4096} & \num{2304} & no mem & no mem \\
paperfold.12 & \num{8192} & \num{4608} & no mem & no mem \\
paperfold.13 & \num{16384} & \num{9216} & no mem & no mem \\
paperfold.14 & \num{32768} & \num{18432} & no mem & no mem \\
perioddoubling.00 & \num{1} & \num{1} & \num{3} & unknown \\
perioddoubling.01 & \num{2} & \num{1} & \num{5} & \num{4} \\
perioddoubling.02 & \num{4} & \num{3} & \num{35} & \num{6} \\
perioddoubling.03 & \num{8} & \num{5} & \num{179} & \num{10} \\
perioddoubling.04 & \num{16} & \num{11} & \num{1623} & \num{18} \\
perioddoubling.05 & \num{32} & \num{21} & \num{11835} & \num{34} \\
perioddoubling.06 & \num{64} & \num{43} & \num{98535} & \num{66} \\
perioddoubling.07 & \num{128} & \num{85} & no time & \num{130} \\
perioddoubling.08 & \num{256} & \num{171} & no time & \num{258} \\
perioddoubling.09 & \num{512} & \num{341} & no mem & no mem \\
perioddoubling.10 & \num{1024} & \num{683} & no mem & no mem \\
perioddoubling.11 & \num{2048} & \num{1365} & no mem & no mem \\
perioddoubling.12 & \num{4096} & \num{2731} & no mem & no mem \\
perioddoubling.13 & \num{8192} & \num{5461} & no mem & no mem \\
perioddoubling.14 & \num{16384} & \num{10923} & no mem & no mem \\
perioddoubling.15 & \num{32768} & \num{21845} & no mem & no mem \\
perioddoubling.16 & \num{65536} & \num{43691} & no mem & no mem \\
perioddoubling.17 & \num{131072} & \num{87381} & no mem & no time \\
perioddoubling.18 & \num{262144} & \num{174763} & no mem & no time \\
perioddoubling.19 & \num{524288} & \num{349525} & no mem & no time \\
perioddoubling.20 & \num{1048576} & \num{699051} & no mem & no time \\
progc & \num{39611} & \num{6925} & no mem & no mem \\
random.txt & \num{100000} & \num{1668} & no mem & no mem \\
sum & \num{38240} & \num{12258} & no mem & no mem \\
thuemorse.00 & \num{1} & \num{1} & \num{3} & unknown \\
thuemorse.01 & \num{2} & \num{1} & \num{5} & \num{4} \\
thuemorse.02 & \num{4} & \num{2} & \num{21} & \num{6} \\
thuemorse.03 & \num{8} & \num{6} & \num{137} & \num{10} \\
thuemorse.04 & \num{16} & \num{10} & \num{1041} & \num{18} \\
thuemorse.05 & \num{32} & \num{22} & \num{8225} & \num{34} \\
thuemorse.06 & \num{64} & \num{42} & \num{65601} & \num{66} \\
thuemorse.07 & \num{128} & \num{86} & \num{524417} & \num{130} \\
thuemorse.08 & \num{256} & \num{170} & no time & \num{258} \\
thuemorse.09 & \num{512} & \num{342} & no mem & no mem \\
thuemorse.10 & \num{1024} & \num{682} & no mem & no mem \\
thuemorse.11 & \num{2048} & \num{1366} & no mem & no mem \\
thuemorse.12 & \num{4096} & \num{2730} & no mem & no mem \\
thuemorse.13 & \num{8192} & \num{5462} & no mem & no mem \\
thuemorse.14 & \num{16384} & \num{10922} & no mem & no mem \\
thuemorse.15 & \num{32768} & \num{21846} & no mem & no mem \\
thuemorse.16 & \num{65536} & \num{43690} & no mem & no mem \\
thuemorse.17 & \num{131072} & \num{87382} & no mem & no time \\
thuemorse.18 & \num{262144} & \num{174762} & no mem & no time \\
thuemorse.19 & \num{524288} & \num{349526} & no mem & no time \\
thuemorse.20 & \num{1048576} & \num{699050} & no mem & no time \\
xargs.1 & \num{4227} & \num{550} & no mem & no mem \\
 \end{longtable}

\begin{longtable}{l*{4}{r}{c}}
	\caption{Sum of all literal occurrences}
	\label{tab:sol_ntotalvars}
	\\
	file & $n$ & attractor~$\gamma$ & BMS~$b$ & SLCP~$g$ \\
alice29.txt & \num{152089} & \num{3022673} & no mem & no mem \\
asyoulik.txt & \num{125179} & \num{2111965} & no mem & no mem \\
bib & \num{111261} & \num{1573621} & no mem & no mem \\
cp.html & \num{24603} & \num{309987} & no mem & no mem \\
fibonacci.00 & \num{1} & \num{1} & \num{9} & unknown \\
fibonacci.01 & \num{2} & \num{2} & \num{18} & \num{20} \\
fibonacci.02 & \num{3} & \num{3} & \num{81} & \num{53} \\
fibonacci.03 & \num{5} & \num{9} & \num{383} & \num{212} \\
fibonacci.04 & \num{8} & \num{13} & \num{2196} & \num{813} \\
fibonacci.05 & \num{13} & \num{43} & \num{10655} & \num{3277} \\
fibonacci.06 & \num{21} & \num{61} & \num{49061} & \num{13792} \\
fibonacci.07 & \num{34} & \num{162} & \num{217049} & \num{59834} \\
fibonacci.08 & \num{55} & \num{236} & \num{945535} & \num{276845} \\
fibonacci.09 & \num{89} & \num{551} & \num{4070671} & \num{1394954} \\
fibonacci.10 & \num{144} & \num{817} & \num{17419865} & \num{7711627} \\
fibonacci.11 & \num{233} & \num{1766} & \num{74244963} & \num{46119239} \\
fibonacci.12 & \num{377} & \num{2654} & no mem & no mem \\
fibonacci.13 & \num{610} & \num{5456} & no mem & no mem \\
fibonacci.14 & \num{987} & \num{8285} & no mem & no mem \\
fibonacci.15 & \num{1597} & \num{16444} & no mem & no mem \\
fibonacci.16 & \num{2584} & \num{25173} & no mem & no mem \\
fibonacci.17 & \num{4181} & \num{48681} & no mem & no mem \\
fibonacci.18 & \num{6765} & \num{74999} & no mem & no mem \\
fibonacci.19 & \num{10946} & \num{142158} & no mem & no mem \\
fibonacci.20 & \num{17711} & \num{220134} & no mem & no mem \\
fields.c & \num{11150} & \num{101934} & no mem & no mem \\
geo & \num{102400} & \num{689428} & no mem & no mem \\
grammar.lsp & \num{3721} & \num{29809} & no mem & no mem \\
obj1 & \num{21504} & \num{2254811} & no mem & no mem \\
paper1 & \num{53161} & \num{727940} & no mem & no mem \\
paper2 & \num{82199} & \num{1345820} & no mem & no mem \\
paper3 & \num{46526} & \num{645173} & no mem & no mem \\
paper4 & \num{13286} & \num{139390} & no mem & no mem \\
paper5 & \num{11954} & \num{113544} & no mem & no mem \\
paper6 & \num{38105} & \num{488634} & no mem & no mem \\
paperfold.00 & \num{2} & \num{4} & \num{68} & \num{19} \\
paperfold.01 & \num{4} & \num{6} & \num{286} & \num{107} \\
paperfold.02 & \num{8} & \num{28} & \num{2172} & \num{707} \\
paperfold.03 & \num{16} & \num{85} & \num{17937} & \num{4499} \\
paperfold.04 & \num{32} & \num{317} & \num{145409} & \num{30206} \\
paperfold.05 & \num{64} & \num{764} & \num{1171425} & \num{215216} \\
paperfold.06 & \num{128} & \num{1699} & no time & \num{1732448} \\
paperfold.07 & \num{256} & \num{3714} & no time & \num{16269232} \\
paperfold.08 & \num{512} & \num{8033} & no mem & \num{179917264} \\
paperfold.09 & \num{1024} & \num{17248} & no mem & no mem \\
paperfold.10 & \num{2048} & \num{36831} & no mem & no mem \\
paperfold.11 & \num{4096} & \num{78302} & no mem & no mem \\
paperfold.12 & \num{8192} & \num{165853} & no mem & no mem \\
paperfold.13 & \num{16384} & \num{350172} & no mem & no mem \\
paperfold.14 & \num{32768} & \num{737243} & no mem & no mem \\
perioddoubling.00 & \num{1} & \num{1} & \num{9} & unknown \\
perioddoubling.01 & \num{2} & \num{2} & \num{18} & \num{20} \\
perioddoubling.02 & \num{4} & \num{6} & \num{286} & \num{107} \\
perioddoubling.03 & \num{8} & \num{27} & \num{2187} & \num{721} \\
perioddoubling.04 & \num{16} & \num{70} & \num{25341} & \num{5290} \\
perioddoubling.05 & \num{32} & \num{229} & \num{199881} & \num{40074} \\
perioddoubling.06 & \num{64} & \num{500} & \num{1731325} & \num{334754} \\
perioddoubling.07 & \num{128} & \num{1371} & no time & \num{3334218} \\
perioddoubling.08 & \num{256} & \num{2874} & no time & \num{40174754} \\
perioddoubling.09 & \num{512} & \num{7233} & no mem & no mem \\
perioddoubling.10 & \num{1024} & \num{14944} & no mem & no mem \\
perioddoubling.11 & \num{2048} & \num{35815} & no mem & no mem \\
perioddoubling.12 & \num{4096} & \num{73478} & no mem & no mem \\
perioddoubling.13 & \num{8192} & \num{170637} & no mem & no mem \\
perioddoubling.14 & \num{16384} & \num{348588} & no mem & no mem \\
perioddoubling.15 & \num{32768} & \num{791859} & no mem & no mem \\
perioddoubling.16 & \num{65536} & \num{1612882} & no mem & no mem \\
perioddoubling.17 & \num{131072} & \num{3604441} & no mem & no time \\
perioddoubling.18 & \num{262144} & \num{7325432} & no mem & no time \\
perioddoubling.19 & \num{524288} & \num{16165503} & no mem & no time \\
perioddoubling.20 & \num{1048576} & \num{32797086} & no mem & no time \\
progc & \num{39611} & \num{470793} & no mem & no mem \\
random.txt & \num{100000} & \num{640087} & no mem & no mem \\
sum & \num{38240} & \num{394529} & no mem & no mem \\
thuemorse.00 & \num{1} & \num{1} & \num{9} & unknown \\
thuemorse.01 & \num{2} & \num{2} & \num{18} & \num{20} \\
thuemorse.02 & \num{4} & \num{10} & \num{136} & \num{107} \\
thuemorse.03 & \num{8} & \num{28} & \num{1595} & \num{694} \\
thuemorse.04 & \num{16} & \num{90} & \num{15537} & \num{4625} \\
thuemorse.05 & \num{32} & \num{248} & \num{135841} & \num{31981} \\
thuemorse.06 & \num{64} & \num{604} & \num{1133057} & \num{232815} \\
thuemorse.07 & \num{128} & \num{1414} & \num{9250753} & \num{1890675} \\
thuemorse.08 & \num{256} & \num{3202} & no time & \num{17895815} \\
thuemorse.09 & \num{512} & \num{7132} & no mem & no mem \\
thuemorse.10 & \num{1024} & \num{15672} & no mem & no mem \\
thuemorse.11 & \num{2048} & \num{34130} & no mem & no mem \\
thuemorse.12 & \num{4096} & \num{73774} & no mem & no mem \\
thuemorse.13 & \num{8192} & \num{158536} & no mem & no mem \\
thuemorse.14 & \num{16384} & \num{338980} & no mem & no mem \\
thuemorse.15 & \num{32768} & \num{721726} & no mem & no mem \\
thuemorse.16 & \num{65536} & \num{1530906} & no mem & no mem \\
thuemorse.17 & \num{131072} & \num{3236660} & no mem & no time \\
thuemorse.18 & \num{262144} & \num{6822928} & no mem & no time \\
thuemorse.19 & \num{524288} & \num{14345002} & no mem & no time \\
thuemorse.20 & \num{1048576} & \num{30088198} & no mem & no time \\
xargs.1 & \num{4227} & \num{30749} & no mem & no mem \\
 \end{longtable}

\begin{longtable}{l*{4}{r}{c}}
	\caption{Average number of literals per clause}
	\label{tab:sol_navgclause}
	\\
	file & $n$ & attractor~$\gamma$ & BMS~$b$ & SLCP~$g$ \\
alice29.txt & \num{152089} & \num{34.76} & no mem & no mem \\
asyoulik.txt & \num{125179} & \num{26.79} & no mem & no mem \\
bib & \num{111261} & \num{34.06} & no mem & no mem \\
cp.html & \num{24603} & \num{28.56} & no mem & no mem \\
fibonacci.00 & \num{1} & \num{1.00} & \num{1.50} & unknown \\
fibonacci.01 & \num{2} & \num{1.00} & \num{1.64} & \num{1.67} \\
fibonacci.02 & \num{3} & \num{1.50} & \num{2.19} & \num{1.96} \\
fibonacci.03 & \num{5} & \num{2.25} & \num{2.38} & \num{2.16} \\
fibonacci.04 & \num{8} & \num{3.25} & \num{2.41} & \num{2.27} \\
fibonacci.05 & \num{13} & \num{6.14} & \num{2.40} & \num{2.35} \\
fibonacci.06 & \num{21} & \num{8.71} & \num{2.40} & \num{2.39} \\
fibonacci.07 & \num{34} & \num{16.20} & \num{2.40} & \num{2.39} \\
fibonacci.08 & \num{55} & \num{23.60} & \num{2.40} & \num{2.35} \\
fibonacci.09 & \num{89} & \num{42.38} & \num{2.40} & \num{2.27} \\
fibonacci.10 & \num{144} & \num{62.85} & \num{2.40} & \num{2.19} \\
fibonacci.11 & \num{233} & \num{110.38} & \num{2.40} & \num{2.12} \\
fibonacci.12 & \num{377} & \num{165.88} & no mem & no mem \\
fibonacci.13 & \num{610} & \num{287.16} & no mem & no mem \\
fibonacci.14 & \num{987} & \num{436.05} & no mem & no mem \\
fibonacci.15 & \num{1597} & \num{747.45} & no mem & no mem \\
fibonacci.16 & \num{2584} & \num{1144.23} & no mem & no mem \\
fibonacci.17 & \num{4181} & \num{1947.24} & no mem & no mem \\
fibonacci.18 & \num{6765} & \num{2999.96} & no mem & no mem \\
fibonacci.19 & \num{10946} & \num{5077.07} & no mem & no mem \\
fibonacci.20 & \num{17711} & \num{7861.93} & no mem & no mem \\
fields.c & \num{11150} & \num{24.24} & no mem & no mem \\
geo & \num{102400} & \num{10.11} & no mem & no mem \\
grammar.lsp & \num{3721} & \num{17.86} & no mem & no mem \\
obj1 & \num{21504} & \num{196.60} & no mem & no mem \\
paper1 & \num{53161} & \num{26.19} & no mem & no mem \\
paper2 & \num{82199} & \num{28.88} & no mem & no mem \\
paper3 & \num{46526} & \num{23.06} & no mem & no mem \\
paper4 & \num{13286} & \num{16.68} & no mem & no mem \\
paper5 & \num{11954} & \num{16.04} & no mem & no mem \\
paper6 & \num{38105} & \num{24.53} & no mem & no mem \\
paperfold.00 & \num{2} & \num{2.00} & \num{2.19} & \num{1.73} \\
paperfold.01 & \num{4} & \num{2.00} & \num{2.36} & \num{2.10} \\
paperfold.02 & \num{8} & \num{3.50} & \num{2.39} & \num{2.27} \\
paperfold.03 & \num{16} & \num{6.07} & \num{2.40} & \num{2.41} \\
paperfold.04 & \num{32} & \num{11.32} & \num{2.40} & \num{2.54} \\
paperfold.05 & \num{64} & \num{20.11} & \num{2.40} & \num{2.61} \\
paperfold.06 & \num{128} & \num{36.15} & no time & \num{2.57} \\
paperfold.07 & \num{256} & \num{66.32} & no time & \num{2.45} \\
paperfold.08 & \num{512} & \num{123.58} & no mem & \num{2.30} \\
paperfold.09 & \num{1024} & \num{233.08} & no mem & no mem \\
paperfold.10 & \num{2048} & \num{443.75} & no mem & no mem \\
paperfold.11 & \num{4096} & \num{851.11} & no mem & no mem \\
paperfold.12 & \num{8192} & \num{1642.11} & no mem & no mem \\
paperfold.13 & \num{16384} & \num{3183.38} & no mem & no mem \\
paperfold.14 & \num{32768} & \num{6195.32} & no mem & no mem \\
perioddoubling.00 & \num{1} & \num{1.00} & \num{1.50} & unknown \\
perioddoubling.01 & \num{2} & \num{1.00} & \num{1.64} & \num{1.67} \\
perioddoubling.02 & \num{4} & \num{2.00} & \num{2.36} & \num{2.10} \\
perioddoubling.03 & \num{8} & \num{3.38} & \num{2.40} & \num{2.27} \\
perioddoubling.04 & \num{16} & \num{6.36} & \num{2.39} & \num{2.39} \\
perioddoubling.05 & \num{32} & \num{12.72} & \num{2.40} & \num{2.45} \\
perioddoubling.06 & \num{64} & \num{23.81} & \num{2.39} & \num{2.41} \\
perioddoubling.07 & \num{128} & \num{48.96} & no time & \num{2.30} \\
perioddoubling.08 & \num{256} & \num{92.71} & no time & \num{2.18} \\
perioddoubling.09 & \num{512} & \num{190.34} & no mem & no mem \\
perioddoubling.10 & \num{1024} & \num{364.49} & no mem & no mem \\
perioddoubling.11 & \num{2048} & \num{746.15} & no mem & no mem \\
perioddoubling.12 & \num{4096} & \num{1440.75} & no mem & no mem \\
perioddoubling.13 & \num{8192} & \num{2942.02} & no mem & no mem \\
perioddoubling.14 & \num{16384} & \num{5714.56} & no mem & no mem \\
perioddoubling.15 & \num{32768} & \num{11644.99} & no mem & no mem \\
perioddoubling.16 & \num{65536} & \num{22716.65} & no mem & no mem \\
perioddoubling.17 & \num{131072} & \num{46210.78} & no mem & no time \\
perioddoubling.18 & \num{262144} & \num{90437.43} & no mem & no time \\
perioddoubling.19 & \num{524288} & \num{183698.90} & no mem & no time \\
perioddoubling.20 & \num{1048576} & \num{360407.54} & no mem & no time \\
progc & \num{39611} & \num{24.77} & no mem & no mem \\
random.txt & \num{100000} & \num{6.59} & no mem & no mem \\
sum & \num{38240} & \num{25.74} & no mem & no mem \\
thuemorse.00 & \num{1} & \num{1.00} & \num{1.50} & unknown \\
thuemorse.01 & \num{2} & \num{1.00} & \num{1.64} & \num{1.67} \\
thuemorse.02 & \num{4} & \num{2.00} & \num{2.23} & \num{2.10} \\
thuemorse.03 & \num{8} & \num{3.50} & \num{2.39} & \num{2.28} \\
thuemorse.04 & \num{16} & \num{5.63} & \num{2.40} & \num{2.41} \\
thuemorse.05 & \num{32} & \num{10.33} & \num{2.40} & \num{2.52} \\
thuemorse.06 & \num{64} & \num{18.88} & \num{2.40} & \num{2.57} \\
thuemorse.07 & \num{128} & \num{35.35} & \num{2.40} & \num{2.53} \\
thuemorse.08 & \num{256} & \num{66.71} & no time & \num{2.41} \\
thuemorse.09 & \num{512} & \num{127.36} & no mem & no mem \\
thuemorse.10 & \num{1024} & \num{244.88} & no mem & no mem \\
thuemorse.11 & \num{2048} & \num{474.03} & no mem & no mem \\
thuemorse.12 & \num{4096} & \num{922.18} & no mem & no mem \\
thuemorse.13 & \num{8192} & \num{1801.55} & no mem & no mem \\
thuemorse.14 & \num{16384} & \num{3531.04} & no mem & no mem \\
thuemorse.15 & \num{32768} & \num{6939.67} & no mem & no mem \\
thuemorse.16 & \num{65536} & \num{13668.80} & no mem & no mem \\
thuemorse.17 & \num{131072} & \num{26972.17} & no mem & no time \\
thuemorse.18 & \num{262144} & \num{53304.13} & no mem & no time \\
thuemorse.19 & \num{524288} & \num{105477.96} & no mem & no time \\
thuemorse.20 & \num{1048576} & \num{208945.82} & no mem & no time \\
xargs.1 & \num{4227} & \num{13.00} & no mem & no mem \\
 \end{longtable}

\begin{longtable}{l*{4}{r}{c}}
	\caption{Output Size}
	\label{tab:factor_size}
	\\
	file & $n$ & attractor~$\gamma$ & BMS~$b$ & SLCP~$g$ \\
alice29.txt & \num{152089} & \num{17138} & no mem & no mem \\
asyoulik.txt & \num{125179} & \num{15938} & no mem & no mem \\
bib & \num{111261} & \num{10371} & no mem & no mem \\
cp.html & \num{24603} & \num{2813} & no mem & no mem \\
fibonacci.00 & \num{1} & \num{1} & \num{1} & unknown \\
fibonacci.01 & \num{2} & \num{2} & \num{2} & \num{3} \\
fibonacci.02 & \num{3} & \num{2} & \num{3} & \num{4} \\
fibonacci.03 & \num{5} & \num{2} & \num{4} & \num{5} \\
fibonacci.04 & \num{8} & \num{2} & \num{4} & \num{6} \\
fibonacci.05 & \num{13} & \num{2} & \num{4} & \num{7} \\
fibonacci.06 & \num{21} & \num{2} & \num{4} & \num{8} \\
fibonacci.07 & \num{34} & \num{2} & \num{4} & \num{9} \\
fibonacci.08 & \num{55} & \num{2} & \num{4} & \num{10} \\
fibonacci.09 & \num{89} & \num{2} & \num{4} & \num{11} \\
fibonacci.10 & \num{144} & \num{2} & \num{4} & \num{12} \\
fibonacci.11 & \num{233} & \num{2} & \num{4} & \num{13} \\
fibonacci.12 & \num{377} & \num{2} & no mem & no mem \\
fibonacci.13 & \num{610} & \num{2} & no mem & no mem \\
fibonacci.14 & \num{987} & \num{2} & no mem & no mem \\
fibonacci.15 & \num{1597} & \num{2} & no mem & no mem \\
fibonacci.16 & \num{2584} & \num{2} & no mem & no mem \\
fibonacci.17 & \num{4181} & \num{2} & no mem & no mem \\
fibonacci.18 & \num{6765} & \num{2} & no mem & no mem \\
fibonacci.19 & \num{10946} & \num{2} & no mem & no mem \\
fibonacci.20 & \num{17711} & \num{2} & no mem & no mem \\
fields.c & \num{11150} & \num{1141} & no mem & no mem \\
geo & \num{102400} & \num{21590} & no mem & no mem \\
grammar.lsp & \num{3721} & \num{497} & no mem & no mem \\
obj1 & \num{21504} & \num{3866} & no mem & no mem \\
paper1 & \num{53161} & \num{6355} & no mem & no mem \\
paper2 & \num{82199} & \num{9884} & no mem & no mem \\
paper3 & \num{46526} & \num{6295} & no mem & no mem \\
paper4 & \num{13286} & \num{2055} & no mem & no mem \\
paper5 & \num{11954} & \num{1879} & no mem & no mem \\
paper6 & \num{38105} & \num{4668} & no mem & no mem \\
paperfold.00 & \num{2} & \num{1} & \num{2} & \num{2} \\
paperfold.01 & \num{4} & \num{2} & \num{4} & \num{5} \\
paperfold.02 & \num{8} & \num{2} & \num{5} & \num{7} \\
paperfold.03 & \num{16} & \num{3} & \num{6} & \num{10} \\
paperfold.04 & \num{32} & \num{4} & \num{8} & \num{14} \\
paperfold.05 & \num{64} & \num{5} & \num{9} & \num{18} \\
paperfold.06 & \num{128} & \num{5} & no time & \num{22} \\
paperfold.07 & \num{256} & \num{6} & no time & \num{26} \\
paperfold.08 & \num{512} & \num{6} & no mem & \num{30} \\
paperfold.09 & \num{1024} & \num{7} & no mem & no mem \\
paperfold.10 & \num{2048} & \num{7} & no mem & no mem \\
paperfold.11 & \num{4096} & \num{7} & no mem & no mem \\
paperfold.12 & \num{8192} & \num{7} & no mem & no mem \\
paperfold.13 & \num{16384} & \num{7} & no mem & no mem \\
paperfold.14 & \num{32768} & \num{7} & no mem & no mem \\
perioddoubling.00 & \num{1} & \num{1} & \num{1} & unknown \\
perioddoubling.01 & \num{2} & \num{2} & \num{2} & \num{3} \\
perioddoubling.02 & \num{4} & \num{2} & \num{4} & \num{5} \\
perioddoubling.03 & \num{8} & \num{2} & \num{5} & \num{7} \\
perioddoubling.04 & \num{16} & \num{2} & \num{6} & \num{9} \\
perioddoubling.05 & \num{32} & \num{2} & \num{7} & \num{11} \\
perioddoubling.06 & \num{64} & \num{2} & \num{7} & \num{13} \\
perioddoubling.07 & \num{128} & \num{2} & no time & \num{15} \\
perioddoubling.08 & \num{256} & \num{2} & no time & \num{17} \\
perioddoubling.09 & \num{512} & \num{2} & no mem & no mem \\
perioddoubling.10 & \num{1024} & \num{2} & no mem & no mem \\
perioddoubling.11 & \num{2048} & \num{2} & no mem & no mem \\
perioddoubling.12 & \num{4096} & \num{2} & no mem & no mem \\
perioddoubling.13 & \num{8192} & \num{2} & no mem & no mem \\
perioddoubling.14 & \num{16384} & \num{2} & no mem & no mem \\
perioddoubling.15 & \num{32768} & \num{2} & no mem & no mem \\
perioddoubling.16 & \num{65536} & \num{2} & no mem & no mem \\
perioddoubling.17 & \num{131072} & \num{2} & no mem & no time \\
perioddoubling.18 & \num{262144} & \num{2} & no mem & no time \\
perioddoubling.19 & \num{524288} & \num{2} & no mem & no time \\
perioddoubling.20 & \num{1048576} & \num{2} & no mem & no time \\
progc & \num{39611} & \num{4714} & no mem & no mem \\
random.txt & \num{100000} & \num{30208} & no mem & no mem \\
sum & \num{38240} & \num{4431} & no mem & no mem \\
thuemorse.00 & \num{1} & \num{1} & \num{1} & unknown \\
thuemorse.01 & \num{2} & \num{2} & \num{2} & \num{3} \\
thuemorse.02 & \num{4} & \num{2} & \num{4} & \num{5} \\
thuemorse.03 & \num{8} & \num{3} & \num{5} & \num{7} \\
thuemorse.04 & \num{16} & \num{4} & \num{6} & \num{9} \\
thuemorse.05 & \num{32} & \num{4} & \num{7} & \num{11} \\
thuemorse.06 & \num{64} & \num{4} & \num{8} & \num{13} \\
thuemorse.07 & \num{128} & \num{4} & \num{9} & \num{15} \\
thuemorse.08 & \num{256} & \num{4} & no time & \num{17} \\
thuemorse.09 & \num{512} & \num{4} & no mem & no mem \\
thuemorse.10 & \num{1024} & \num{4} & no mem & no mem \\
thuemorse.11 & \num{2048} & \num{4} & no mem & no mem \\
thuemorse.12 & \num{4096} & \num{4} & no mem & no mem \\
thuemorse.13 & \num{8192} & \num{4} & no mem & no mem \\
thuemorse.14 & \num{16384} & \num{4} & no mem & no mem \\
thuemorse.15 & \num{32768} & \num{4} & no mem & no mem \\
thuemorse.16 & \num{65536} & \num{4} & no mem & no mem \\
thuemorse.17 & \num{131072} & \num{4} & no mem & no time \\
thuemorse.18 & \num{262144} & \num{4} & no mem & no time \\
thuemorse.19 & \num{524288} & \num{4} & no mem & no time \\
thuemorse.20 & \num{1048576} & \num{4} & no mem & no time \\
xargs.1 & \num{4227} & \num{696} & no mem & no mem \\
 \end{longtable}

\begin{figure}
	\centering{\includegraphics[width=0.4\linewidth,page=4]{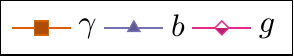}
\includegraphics[width=0.4\linewidth,page=5]{plot/plot}
\includegraphics[width=0.4\linewidth,page=6]{plot/plot}
\includegraphics[width=0.4\linewidth,page=7]{plot/plot}
\includegraphics[width=0.4\linewidth,page=8]{plot/plot}
\includegraphics[width=0.4\linewidth,page=9]{plot/plot}
	}\begin{minipage}{0.65\linewidth}
	\caption{Plots for dataset \textsc{aaa.txt}.}
	\label{plot:aaa.txt}
    \end{minipage}
    \hfill
    \begin{minipage}{0.25\linewidth}
        \includegraphics[page=1]{plot/plot}
    \end{minipage}
\end{figure}

\begin{figure}
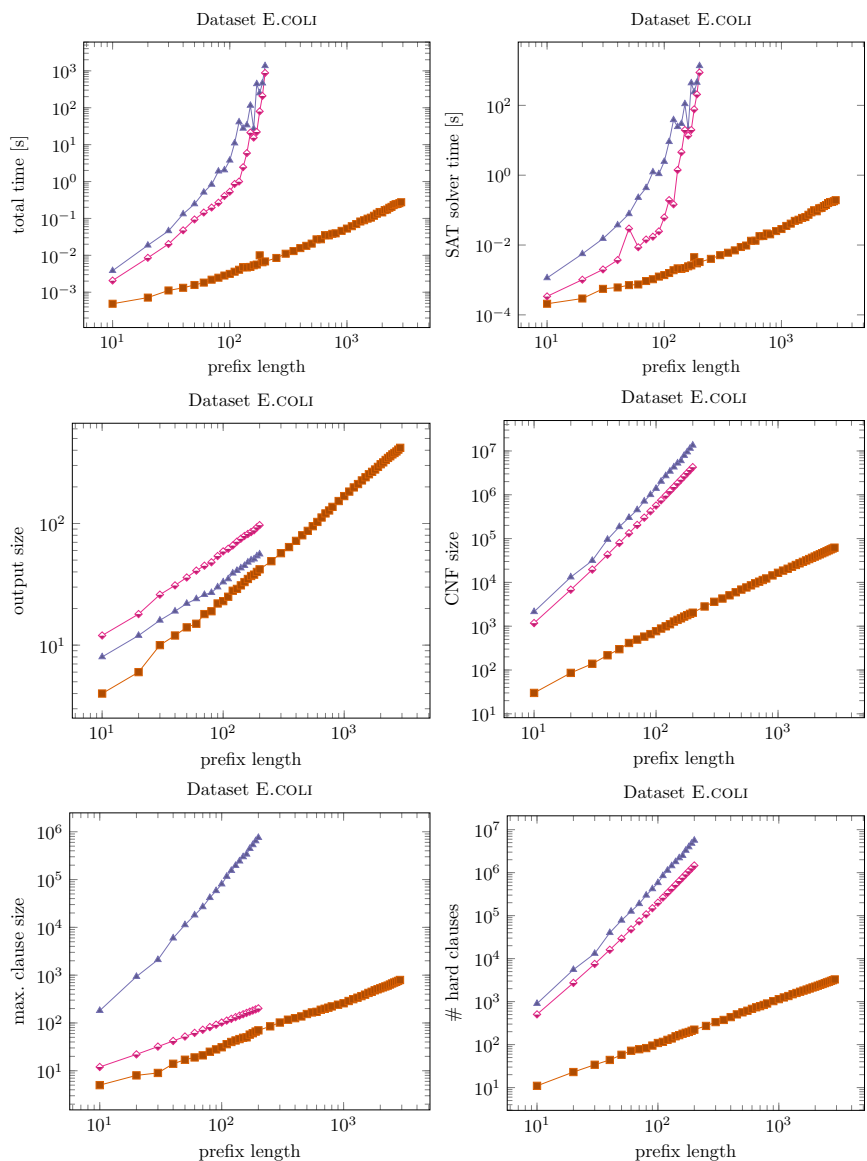

	\centering{\includegraphics[width=0.4\linewidth,page=10]{plot/plot}
\includegraphics[width=0.4\linewidth,page=11]{plot/plot}
\includegraphics[width=0.4\linewidth,page=12]{plot/plot}
\includegraphics[width=0.4\linewidth,page=13]{plot/plot}
\includegraphics[width=0.4\linewidth,page=14]{plot/plot}
\includegraphics[width=0.4\linewidth,page=15]{plot/plot}
	}\begin{minipage}{0.65\linewidth}
	\caption{Plots for dataset \textsc{E.coli}.}
	\label{plot:E.coli}
    \end{minipage}
    \hfill
    \begin{minipage}{0.25\linewidth}
        \includegraphics[page=1]{plot/plot}
    \end{minipage}
\end{figure}

\begin{figure}
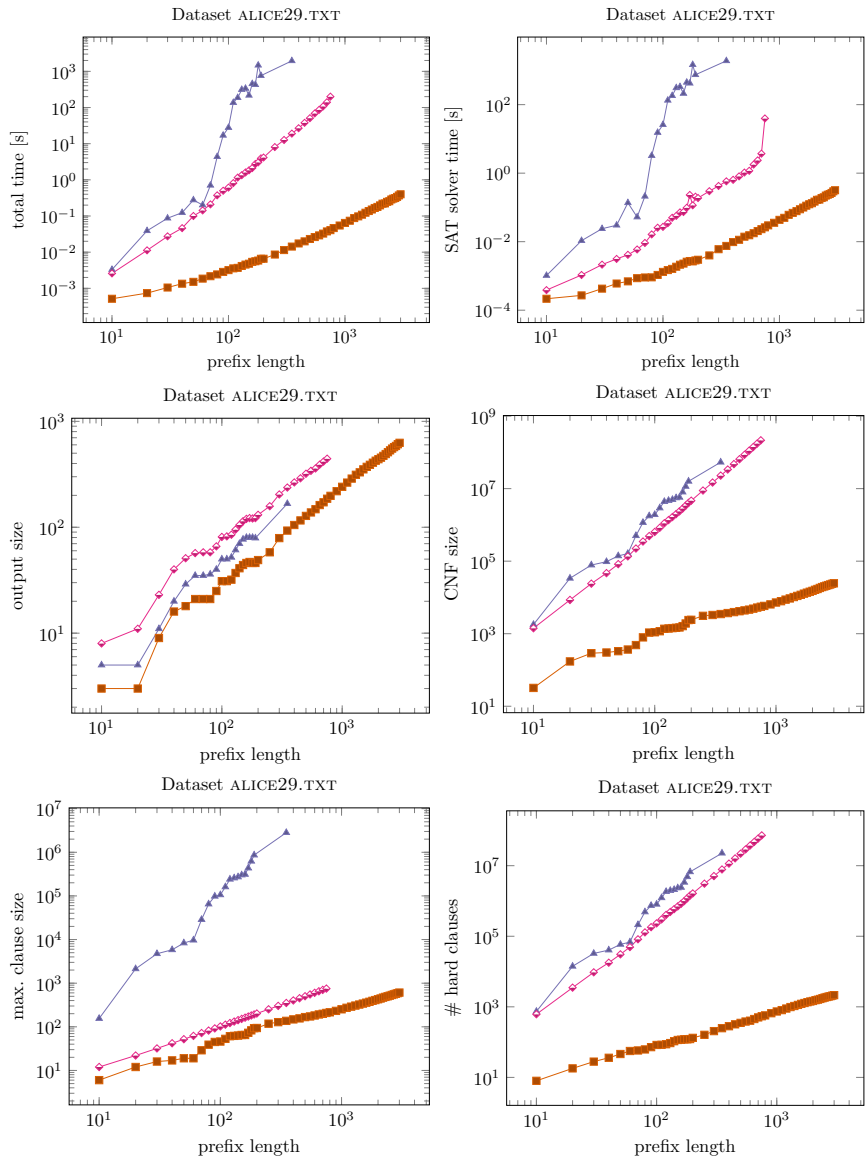

	\centering{\includegraphics[width=0.4\linewidth,page=16]{plot/plot}
\includegraphics[width=0.4\linewidth,page=17]{plot/plot}
\includegraphics[width=0.4\linewidth,page=18]{plot/plot}
\includegraphics[width=0.4\linewidth,page=19]{plot/plot}
\includegraphics[width=0.4\linewidth,page=20]{plot/plot}
\includegraphics[width=0.4\linewidth,page=21]{plot/plot}
	}\begin{minipage}{0.65\linewidth}
	\caption{Plots for dataset \textsc{alice29.txt}.}
	\label{plot:alice29.txt}
    \end{minipage}
    \hfill
    \begin{minipage}{0.25\linewidth}
        \includegraphics[page=1]{plot/plot}
    \end{minipage}
\end{figure}

\begin{figure}
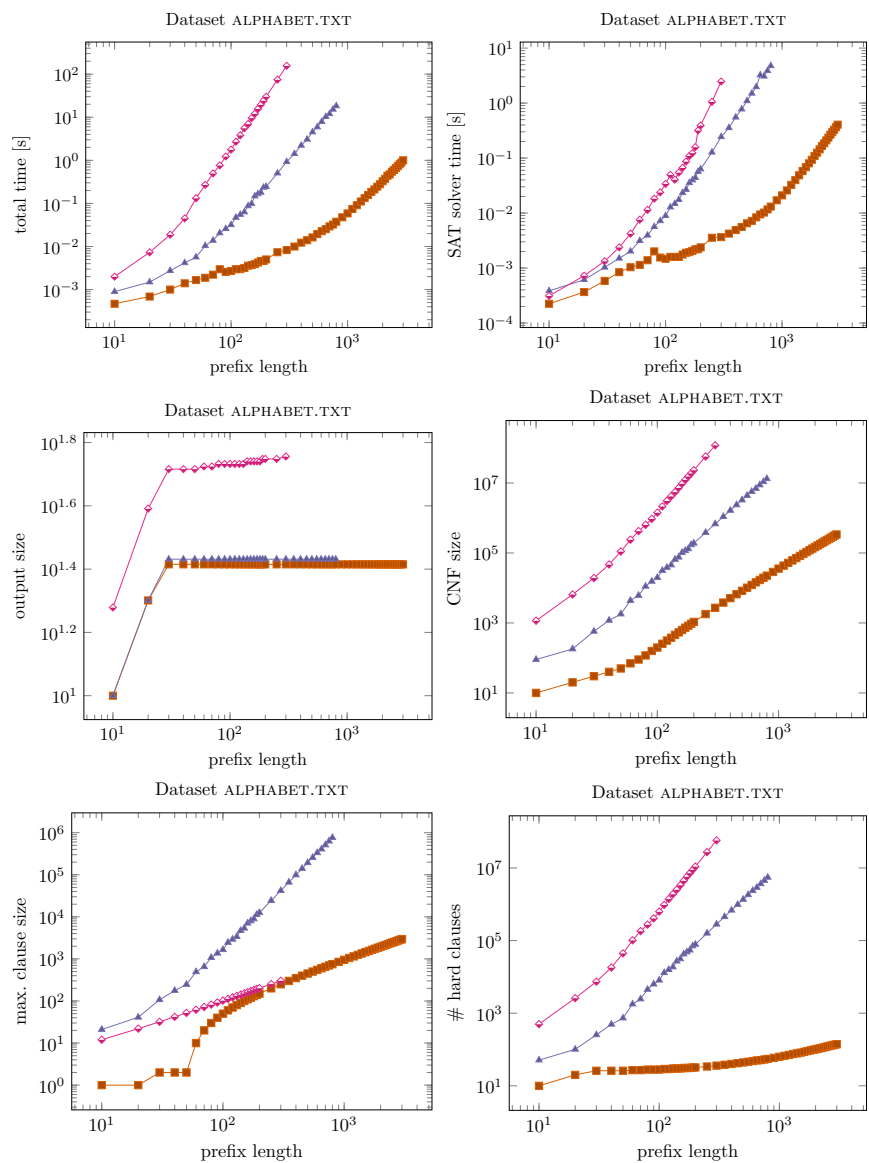

	\centering{\includegraphics[width=0.4\linewidth,page=22]{plot/plot}
\includegraphics[width=0.4\linewidth,page=23]{plot/plot}
\includegraphics[width=0.4\linewidth,page=24]{plot/plot}
\includegraphics[width=0.4\linewidth,page=25]{plot/plot}
\includegraphics[width=0.4\linewidth,page=26]{plot/plot}
\includegraphics[width=0.4\linewidth,page=27]{plot/plot}
	}\begin{minipage}{0.65\linewidth}
	\caption{Plots for dataset \textsc{alphabet.txt}.}
	\label{plot:alphabet.txt}
    \end{minipage}
    \hfill
    \begin{minipage}{0.25\linewidth}
        \includegraphics[page=1]{plot/plot}
    \end{minipage}
\end{figure}

\begin{figure}
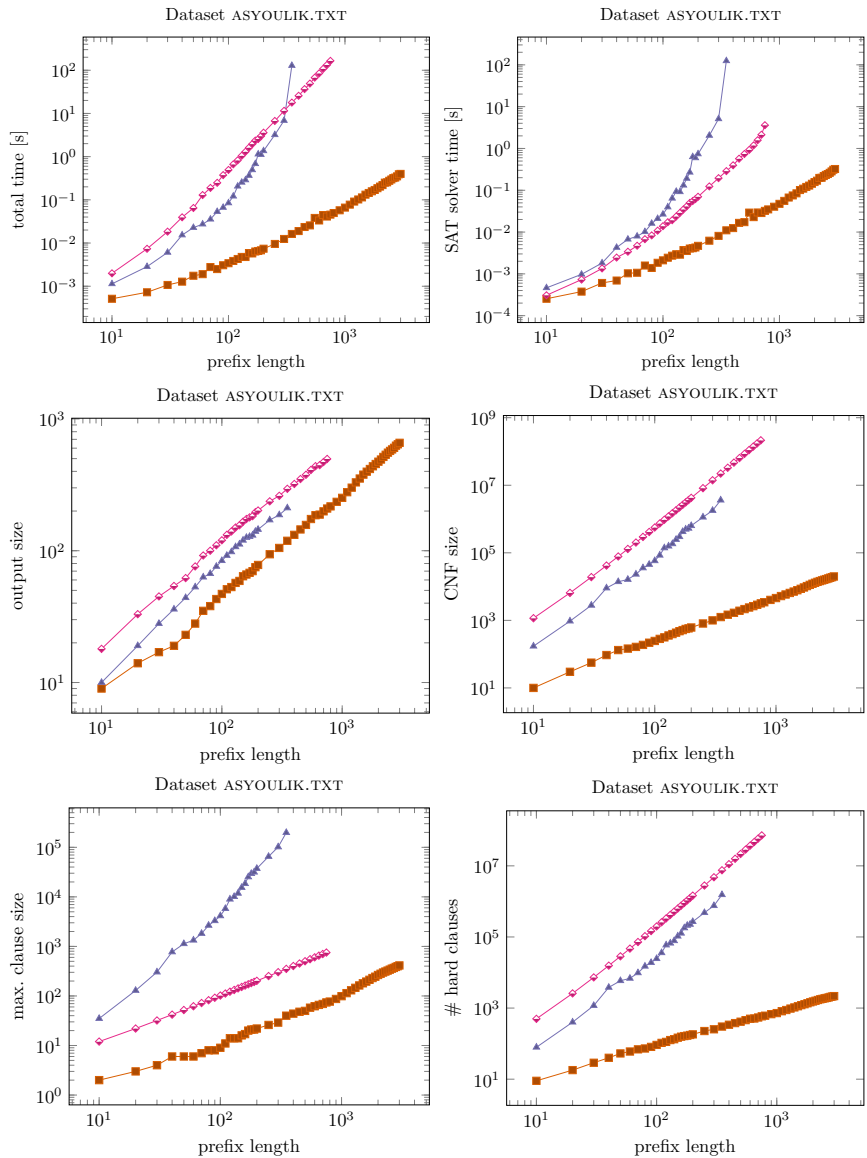

	\centering{\includegraphics[width=0.4\linewidth,page=28]{plot/plot}
\includegraphics[width=0.4\linewidth,page=29]{plot/plot}
\includegraphics[width=0.4\linewidth,page=30]{plot/plot}
\includegraphics[width=0.4\linewidth,page=31]{plot/plot}
\includegraphics[width=0.4\linewidth,page=32]{plot/plot}
\includegraphics[width=0.4\linewidth,page=33]{plot/plot}
	}\begin{minipage}{0.65\linewidth}
	\caption{Plots for dataset \textsc{asyoulik.txt}.}
	\label{plot:asyoulik.txt}
    \end{minipage}
    \hfill
    \begin{minipage}{0.25\linewidth}
        \includegraphics[page=1]{plot/plot}
    \end{minipage}
\end{figure}

\begin{figure}
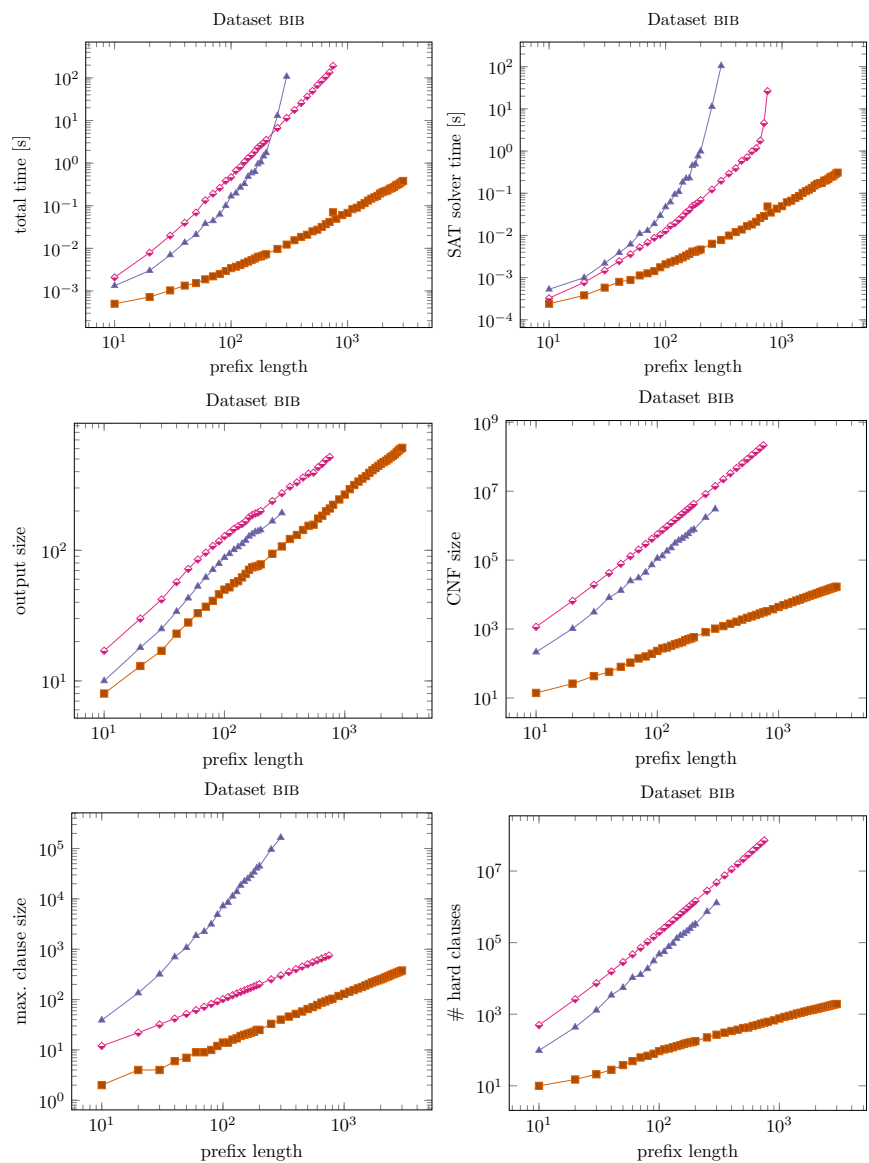

	\centering{\includegraphics[width=0.4\linewidth,page=34]{plot/plot}
\includegraphics[width=0.4\linewidth,page=35]{plot/plot}
\includegraphics[width=0.4\linewidth,page=36]{plot/plot}
\includegraphics[width=0.4\linewidth,page=37]{plot/plot}
\includegraphics[width=0.4\linewidth,page=38]{plot/plot}
\includegraphics[width=0.4\linewidth,page=39]{plot/plot}
	}\begin{minipage}{0.65\linewidth}
	\caption{Plots for dataset \textsc{bib}.}
	\label{plot:bib}
    \end{minipage}
    \hfill
    \begin{minipage}{0.25\linewidth}
        \includegraphics[page=1]{plot/plot}
    \end{minipage}
\end{figure}

\begin{figure}
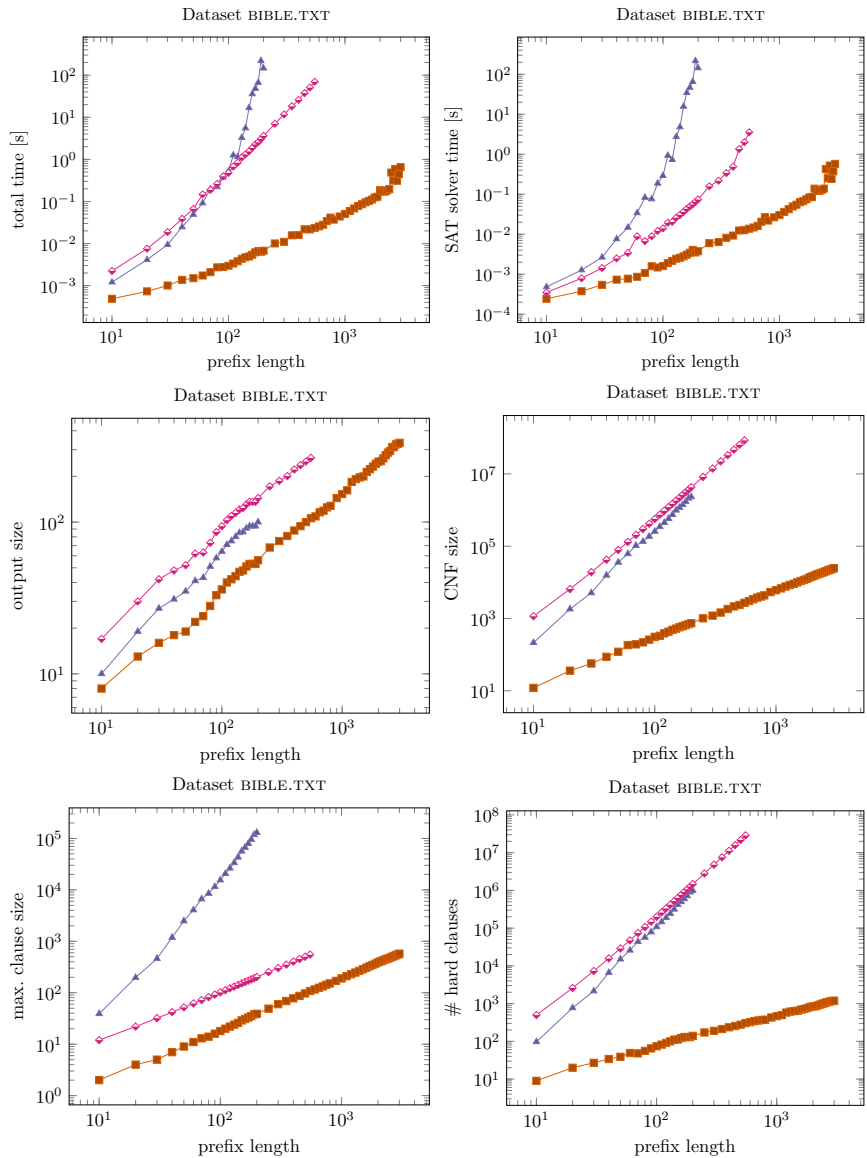

	\centering{\includegraphics[width=0.4\linewidth,page=40]{plot/plot}
\includegraphics[width=0.4\linewidth,page=41]{plot/plot}
\includegraphics[width=0.4\linewidth,page=42]{plot/plot}
\includegraphics[width=0.4\linewidth,page=43]{plot/plot}
\includegraphics[width=0.4\linewidth,page=44]{plot/plot}
\includegraphics[width=0.4\linewidth,page=45]{plot/plot}
	}\begin{minipage}{0.65\linewidth}
	\caption{Plots for dataset \textsc{bible.txt}.}
	\label{plot:bible.txt}
    \end{minipage}
    \hfill
    \begin{minipage}{0.25\linewidth}
        \includegraphics[page=1]{plot/plot}
    \end{minipage}
\end{figure}

\begin{figure}
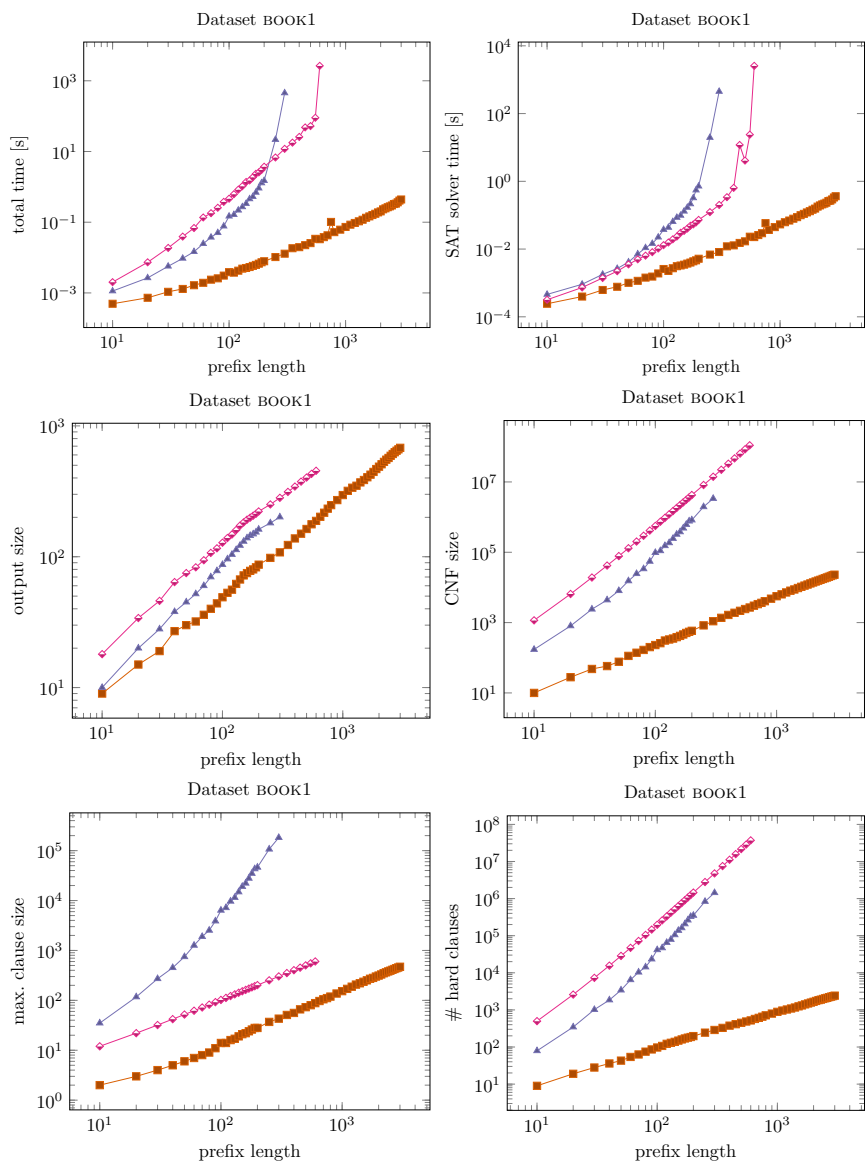

	\centering{\includegraphics[width=0.4\linewidth,page=46]{plot/plot}
\includegraphics[width=0.4\linewidth,page=47]{plot/plot}
\includegraphics[width=0.4\linewidth,page=48]{plot/plot}
\includegraphics[width=0.4\linewidth,page=49]{plot/plot}
\includegraphics[width=0.4\linewidth,page=50]{plot/plot}
\includegraphics[width=0.4\linewidth,page=51]{plot/plot}
	}\begin{minipage}{0.65\linewidth}
	\caption{Plots for dataset \textsc{book1}.}
	\label{plot:book1}
    \end{minipage}
    \hfill
    \begin{minipage}{0.25\linewidth}
        \includegraphics[page=1]{plot/plot}
    \end{minipage}
\end{figure}

\begin{figure}
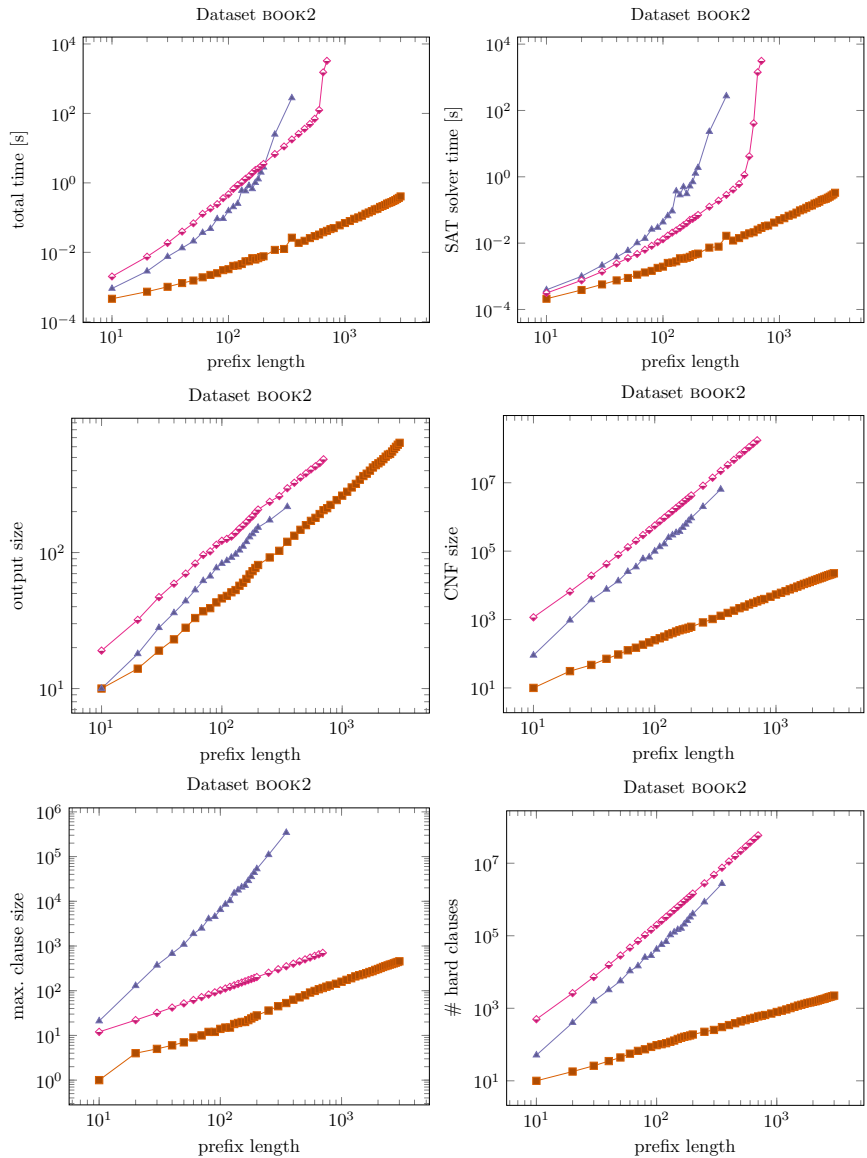

	\centering{\includegraphics[width=0.4\linewidth,page=52]{plot/plot}
\includegraphics[width=0.4\linewidth,page=53]{plot/plot}
\includegraphics[width=0.4\linewidth,page=54]{plot/plot}
\includegraphics[width=0.4\linewidth,page=55]{plot/plot}
\includegraphics[width=0.4\linewidth,page=56]{plot/plot}
\includegraphics[width=0.4\linewidth,page=57]{plot/plot}
	}\begin{minipage}{0.65\linewidth}
	\caption{Plots for dataset \textsc{book2}.}
	\label{plot:book2}
    \end{minipage}
    \hfill
    \begin{minipage}{0.25\linewidth}
        \includegraphics[page=1]{plot/plot}
    \end{minipage}
\end{figure}

\begin{figure}
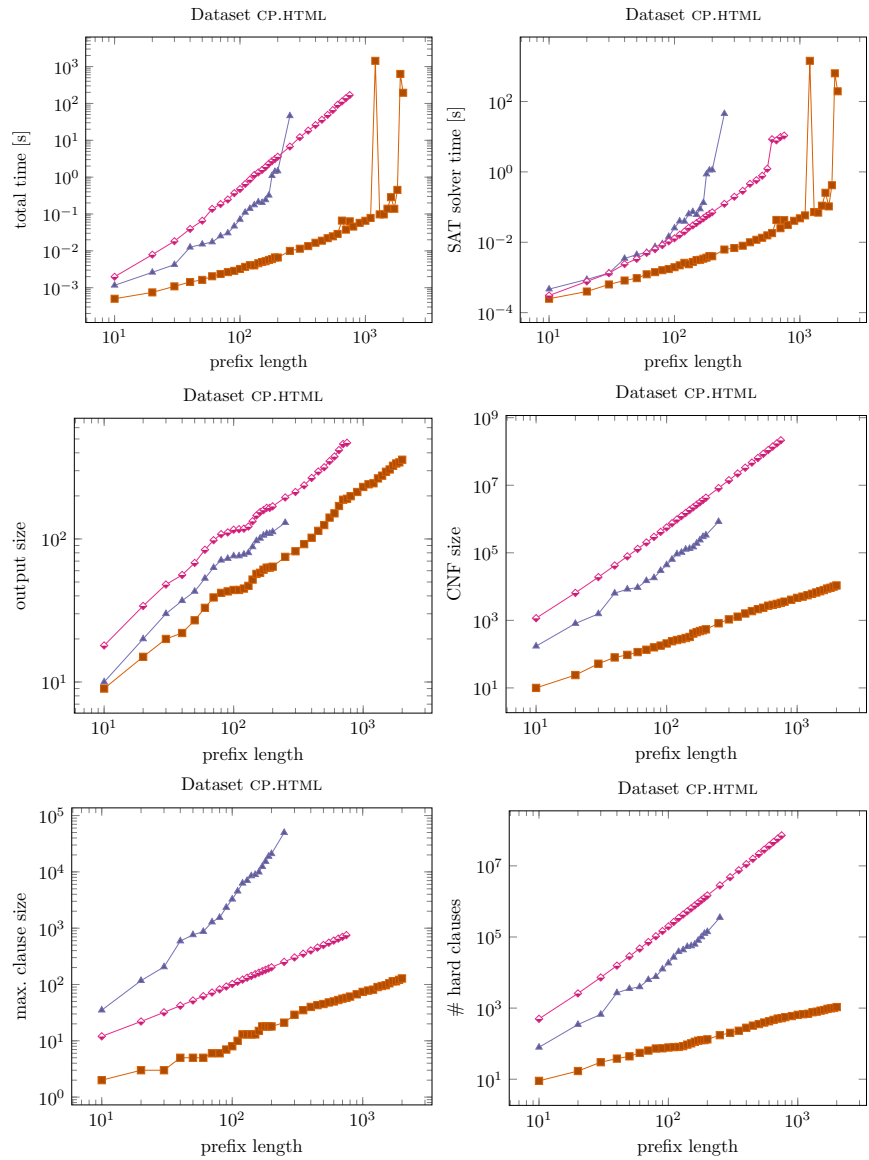

	\centering{\includegraphics[width=0.4\linewidth,page=58]{plot/plot}
\includegraphics[width=0.4\linewidth,page=59]{plot/plot}
\includegraphics[width=0.4\linewidth,page=60]{plot/plot}
\includegraphics[width=0.4\linewidth,page=61]{plot/plot}
\includegraphics[width=0.4\linewidth,page=62]{plot/plot}
\includegraphics[width=0.4\linewidth,page=63]{plot/plot}
	}\begin{minipage}{0.65\linewidth}
	\caption{Plots for dataset \textsc{cp.html}.}
	\label{plot:cp.html}
    \end{minipage}
    \hfill
    \begin{minipage}{0.25\linewidth}
        \includegraphics[page=1]{plot/plot}
    \end{minipage}
\end{figure}

\begin{figure}
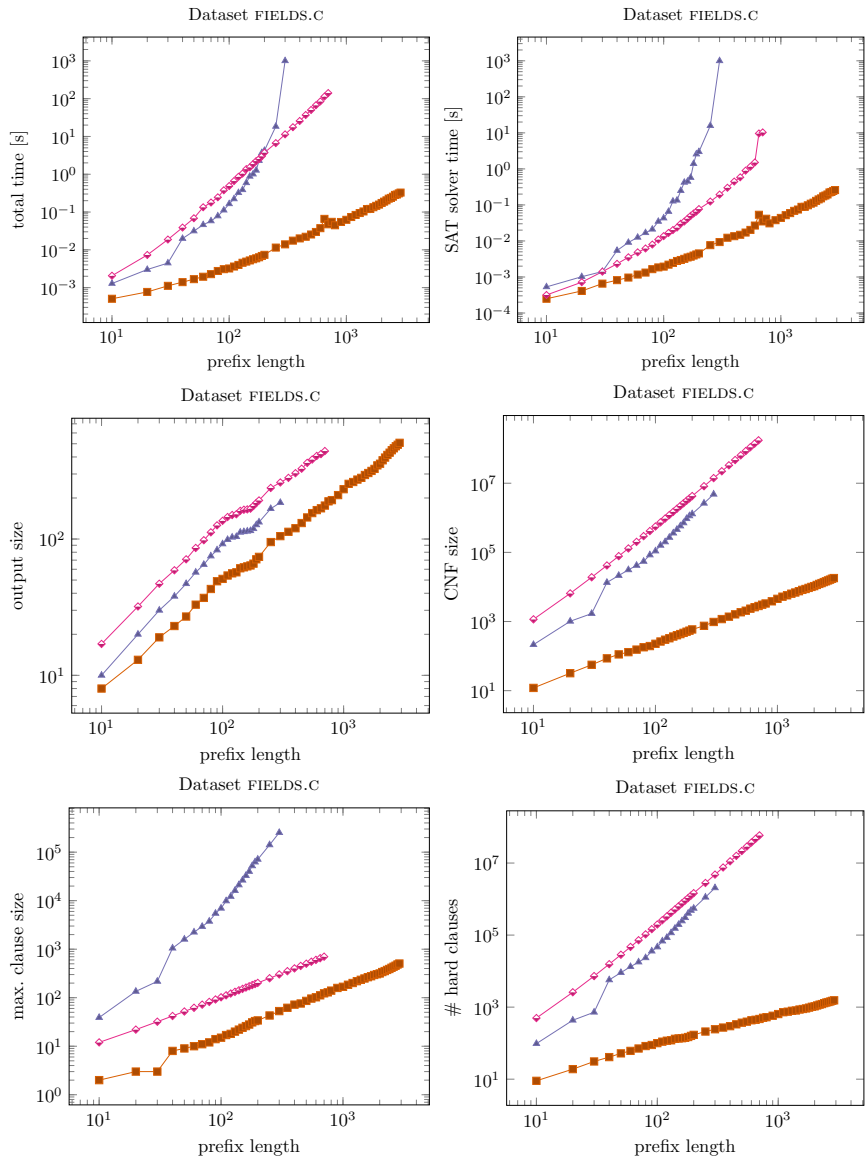

	\centering{\includegraphics[width=0.4\linewidth,page=64]{plot/plot}
\includegraphics[width=0.4\linewidth,page=65]{plot/plot}
\includegraphics[width=0.4\linewidth,page=66]{plot/plot}
\includegraphics[width=0.4\linewidth,page=67]{plot/plot}
\includegraphics[width=0.4\linewidth,page=68]{plot/plot}
\includegraphics[width=0.4\linewidth,page=69]{plot/plot}
	}\begin{minipage}{0.65\linewidth}
	\caption{Plots for dataset \textsc{fields.c}.}
	\label{plot:fields.c}
    \end{minipage}
    \hfill
    \begin{minipage}{0.25\linewidth}
        \includegraphics[page=1]{plot/plot}
    \end{minipage}
\end{figure}

\begin{figure}
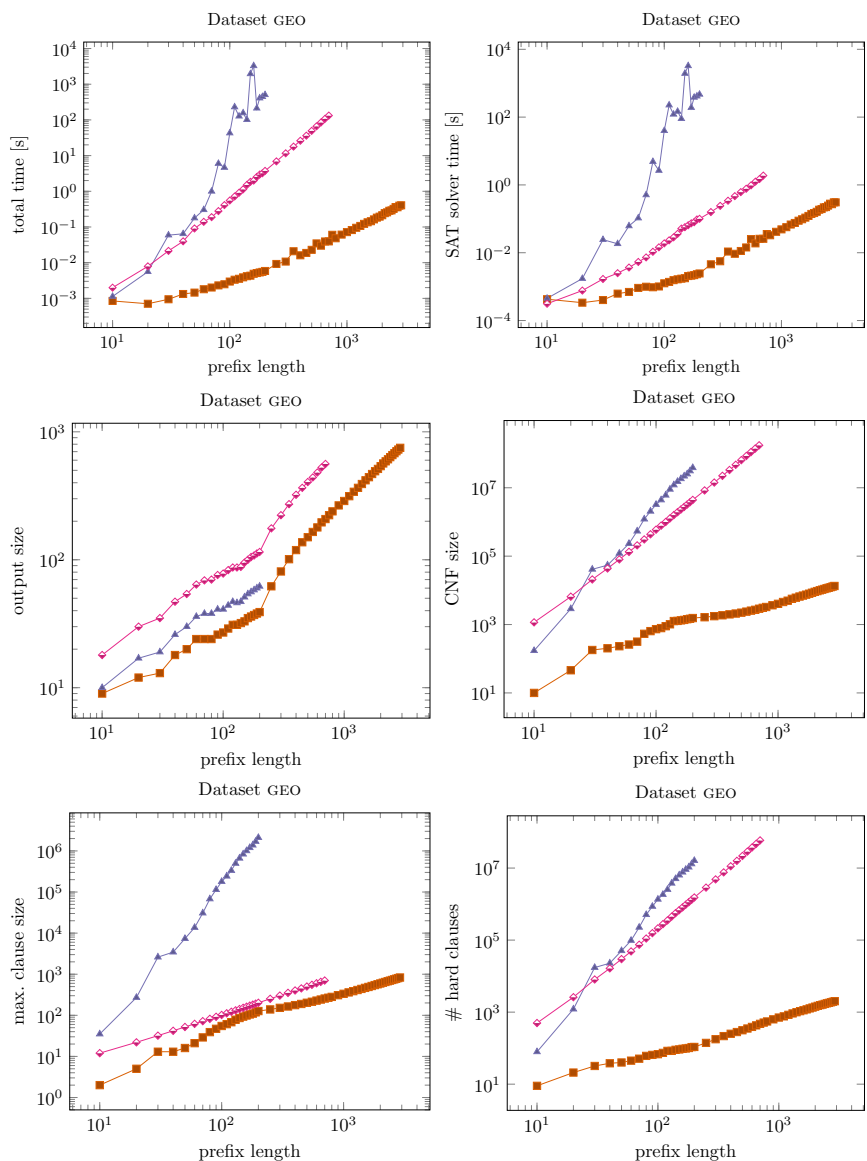

	\centering{\includegraphics[width=0.4\linewidth,page=70]{plot/plot}
\includegraphics[width=0.4\linewidth,page=71]{plot/plot}
\includegraphics[width=0.4\linewidth,page=72]{plot/plot}
\includegraphics[width=0.4\linewidth,page=73]{plot/plot}
\includegraphics[width=0.4\linewidth,page=74]{plot/plot}
\includegraphics[width=0.4\linewidth,page=75]{plot/plot}
	}\begin{minipage}{0.65\linewidth}
	\caption{Plots for dataset \textsc{geo}.}
	\label{plot:geo}
    \end{minipage}
    \hfill
    \begin{minipage}{0.25\linewidth}
        \includegraphics[page=1]{plot/plot}
    \end{minipage}
\end{figure}

\begin{figure}
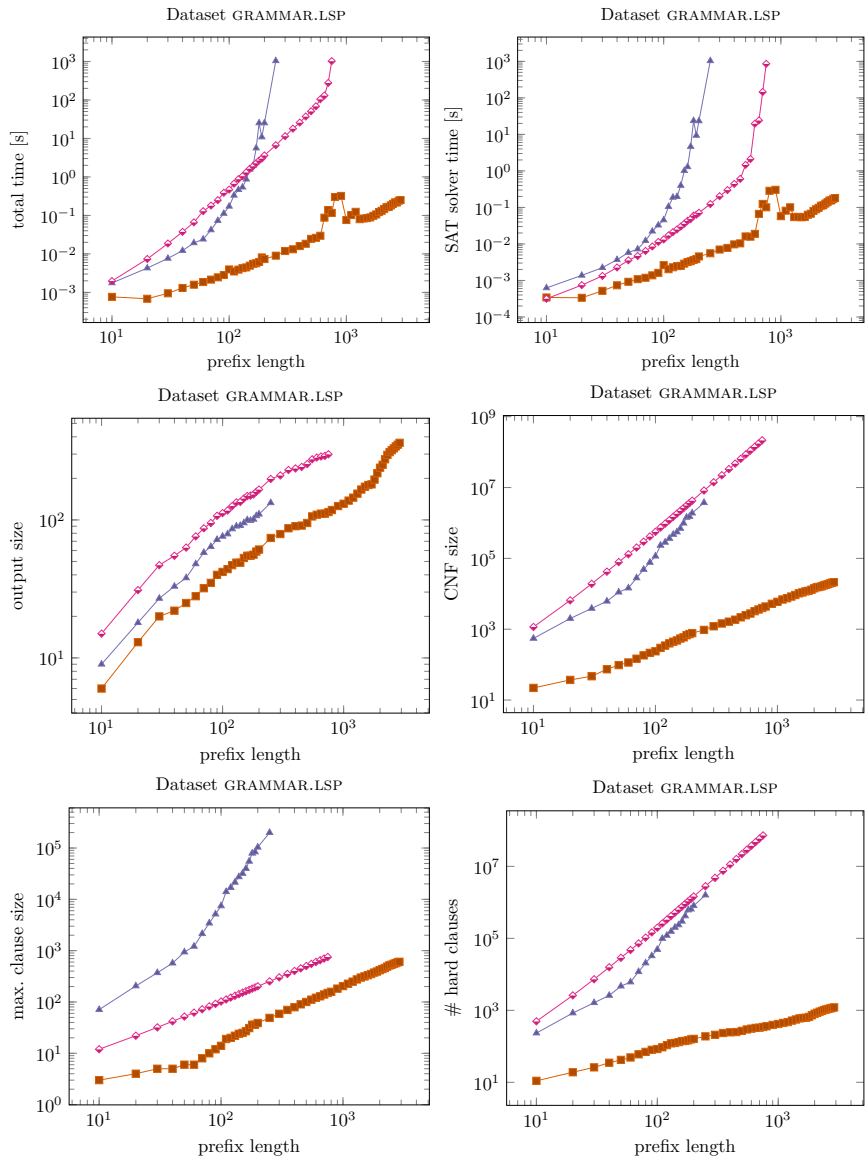

	\centering{\includegraphics[width=0.4\linewidth,page=76]{plot/plot}
\includegraphics[width=0.4\linewidth,page=77]{plot/plot}
\includegraphics[width=0.4\linewidth,page=78]{plot/plot}
\includegraphics[width=0.4\linewidth,page=79]{plot/plot}
\includegraphics[width=0.4\linewidth,page=80]{plot/plot}
\includegraphics[width=0.4\linewidth,page=81]{plot/plot}
	}\begin{minipage}{0.65\linewidth}
	\caption{Plots for dataset \textsc{grammar.lsp}.}
	\label{plot:grammar.lsp}
    \end{minipage}
    \hfill
    \begin{minipage}{0.25\linewidth}
        \includegraphics[page=1]{plot/plot}
    \end{minipage}
\end{figure}

\begin{figure}
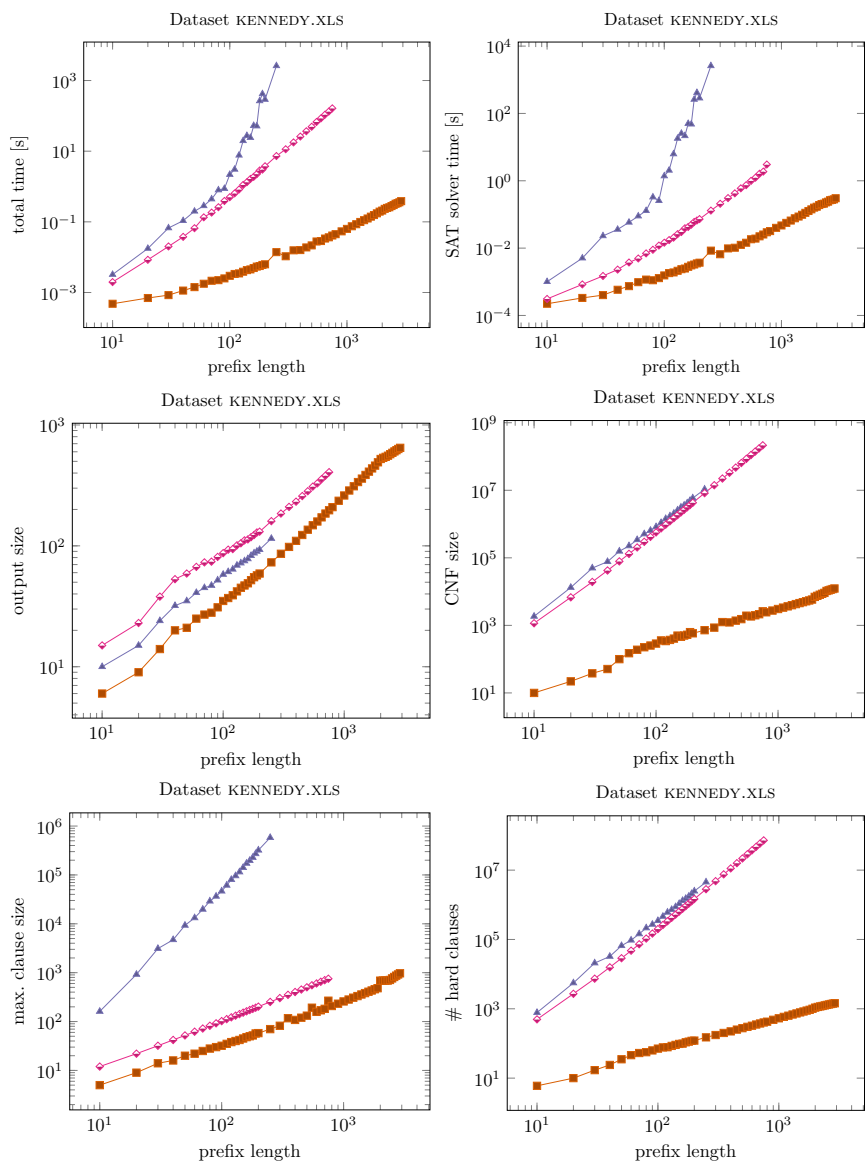

	\centering{\includegraphics[width=0.4\linewidth,page=82]{plot/plot}
\includegraphics[width=0.4\linewidth,page=83]{plot/plot}
\includegraphics[width=0.4\linewidth,page=84]{plot/plot}
\includegraphics[width=0.4\linewidth,page=85]{plot/plot}
\includegraphics[width=0.4\linewidth,page=86]{plot/plot}
\includegraphics[width=0.4\linewidth,page=87]{plot/plot}
	}\begin{minipage}{0.65\linewidth}
	\caption{Plots for dataset \textsc{kennedy.xls}.}
	\label{plot:kennedy.xls}
    \end{minipage}
    \hfill
    \begin{minipage}{0.25\linewidth}
        \includegraphics[page=1]{plot/plot}
    \end{minipage}
\end{figure}

\begin{figure}
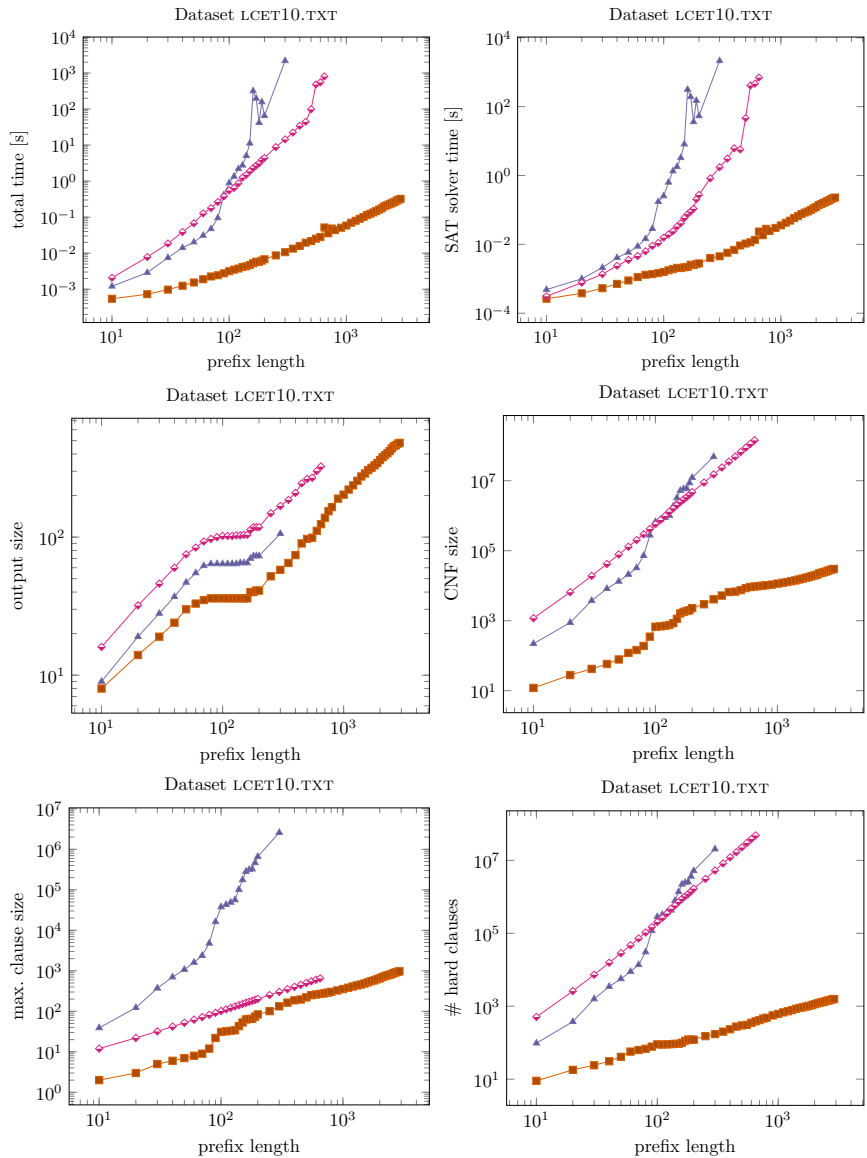

	\centering{\includegraphics[width=0.4\linewidth,page=88]{plot/plot}
\includegraphics[width=0.4\linewidth,page=89]{plot/plot}
\includegraphics[width=0.4\linewidth,page=90]{plot/plot}
\includegraphics[width=0.4\linewidth,page=91]{plot/plot}
\includegraphics[width=0.4\linewidth,page=92]{plot/plot}
\includegraphics[width=0.4\linewidth,page=93]{plot/plot}
	}\begin{minipage}{0.65\linewidth}
	\caption{Plots for dataset \textsc{lcet10.txt}.}
	\label{plot:lcet10.txt}
    \end{minipage}
    \hfill
    \begin{minipage}{0.25\linewidth}
        \includegraphics[page=1]{plot/plot}
    \end{minipage}
\end{figure}

\begin{figure}
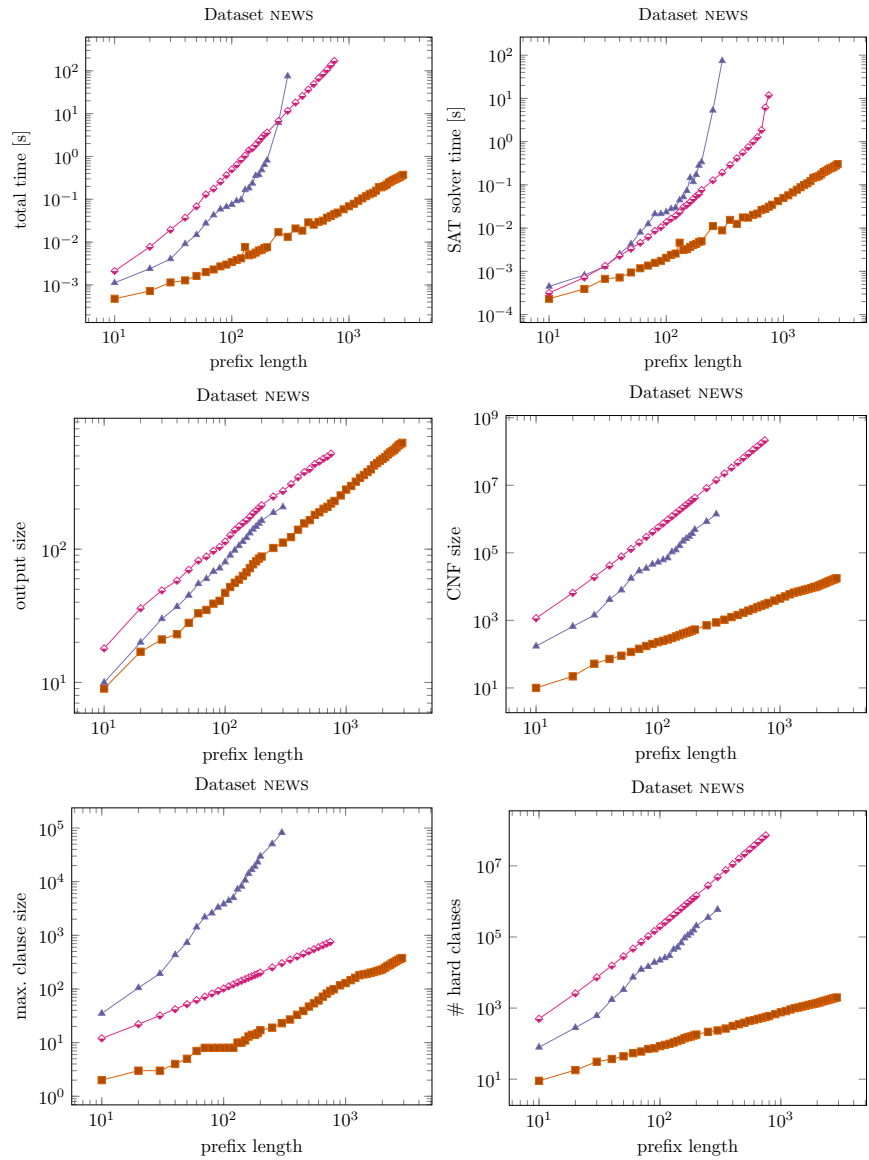

	\centering{\includegraphics[width=0.4\linewidth,page=94]{plot/plot}
\includegraphics[width=0.4\linewidth,page=95]{plot/plot}
\includegraphics[width=0.4\linewidth,page=96]{plot/plot}
\includegraphics[width=0.4\linewidth,page=97]{plot/plot}
\includegraphics[width=0.4\linewidth,page=98]{plot/plot}
\includegraphics[width=0.4\linewidth,page=99]{plot/plot}
	}\begin{minipage}{0.65\linewidth}
	\caption{Plots for dataset \textsc{news}.}
	\label{plot:news}
    \end{minipage}
    \hfill
    \begin{minipage}{0.25\linewidth}
        \includegraphics[page=1]{plot/plot}
    \end{minipage}
\end{figure}

\begin{figure}
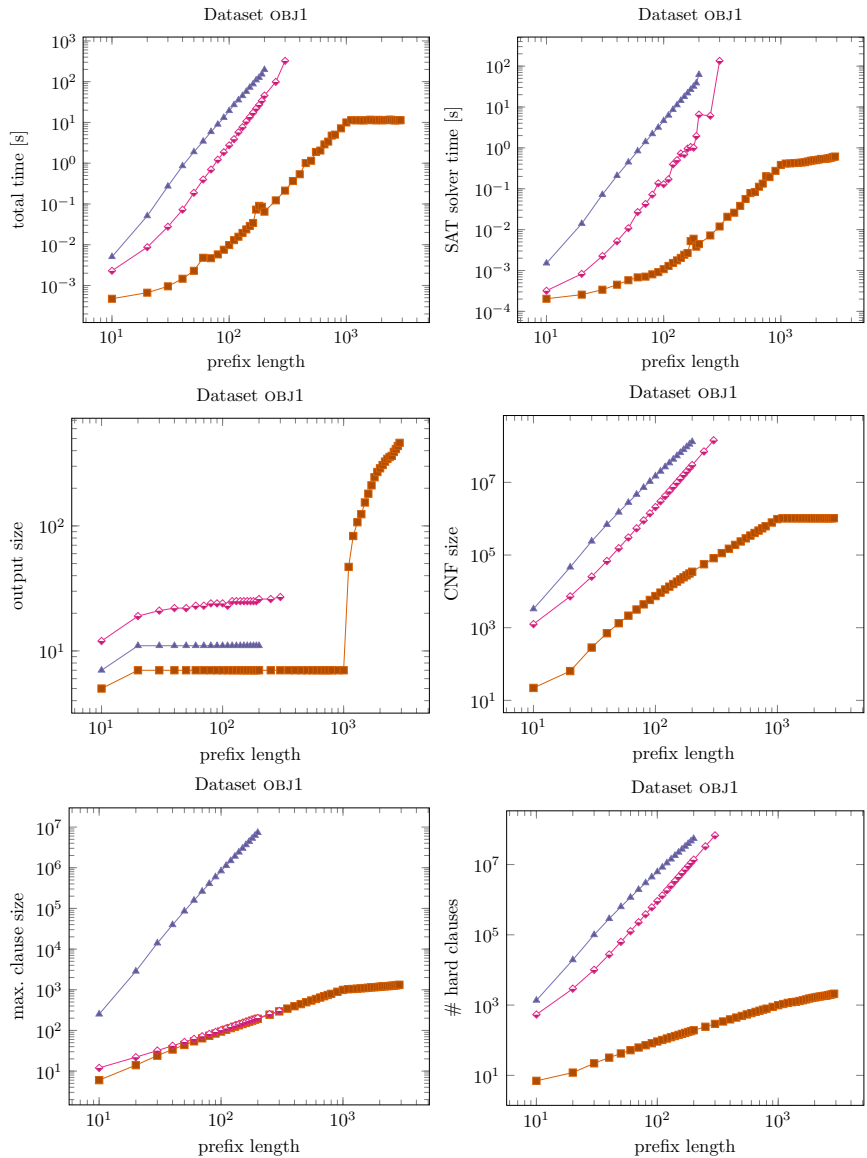

	\centering{\includegraphics[width=0.4\linewidth,page=100]{plot/plot}
\includegraphics[width=0.4\linewidth,page=101]{plot/plot}
\includegraphics[width=0.4\linewidth,page=102]{plot/plot}
\includegraphics[width=0.4\linewidth,page=103]{plot/plot}
\includegraphics[width=0.4\linewidth,page=104]{plot/plot}
\includegraphics[width=0.4\linewidth,page=105]{plot/plot}
	}\begin{minipage}{0.65\linewidth}
	\caption{Plots for dataset \textsc{obj1}.}
	\label{plot:obj1}
    \end{minipage}
    \hfill
    \begin{minipage}{0.25\linewidth}
        \includegraphics[page=1]{plot/plot}
    \end{minipage}
\end{figure}

\begin{figure}
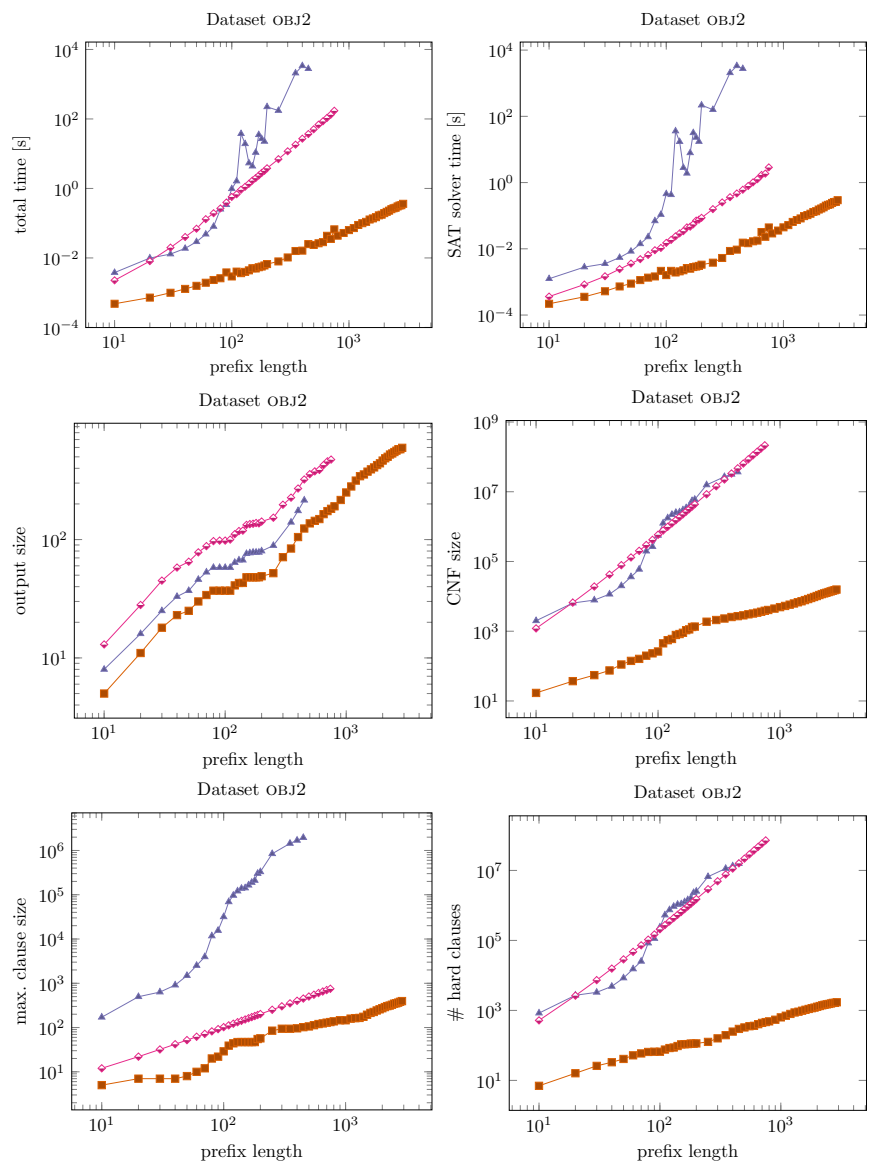

	\centering{\includegraphics[width=0.4\linewidth,page=106]{plot/plot}
\includegraphics[width=0.4\linewidth,page=107]{plot/plot}
\includegraphics[width=0.4\linewidth,page=108]{plot/plot}
\includegraphics[width=0.4\linewidth,page=109]{plot/plot}
\includegraphics[width=0.4\linewidth,page=110]{plot/plot}
\includegraphics[width=0.4\linewidth,page=111]{plot/plot}
	}\begin{minipage}{0.65\linewidth}
	\caption{Plots for dataset \textsc{obj2}.}
	\label{plot:obj2}
    \end{minipage}
    \hfill
    \begin{minipage}{0.25\linewidth}
        \includegraphics[page=1]{plot/plot}
    \end{minipage}
\end{figure}

\begin{figure}
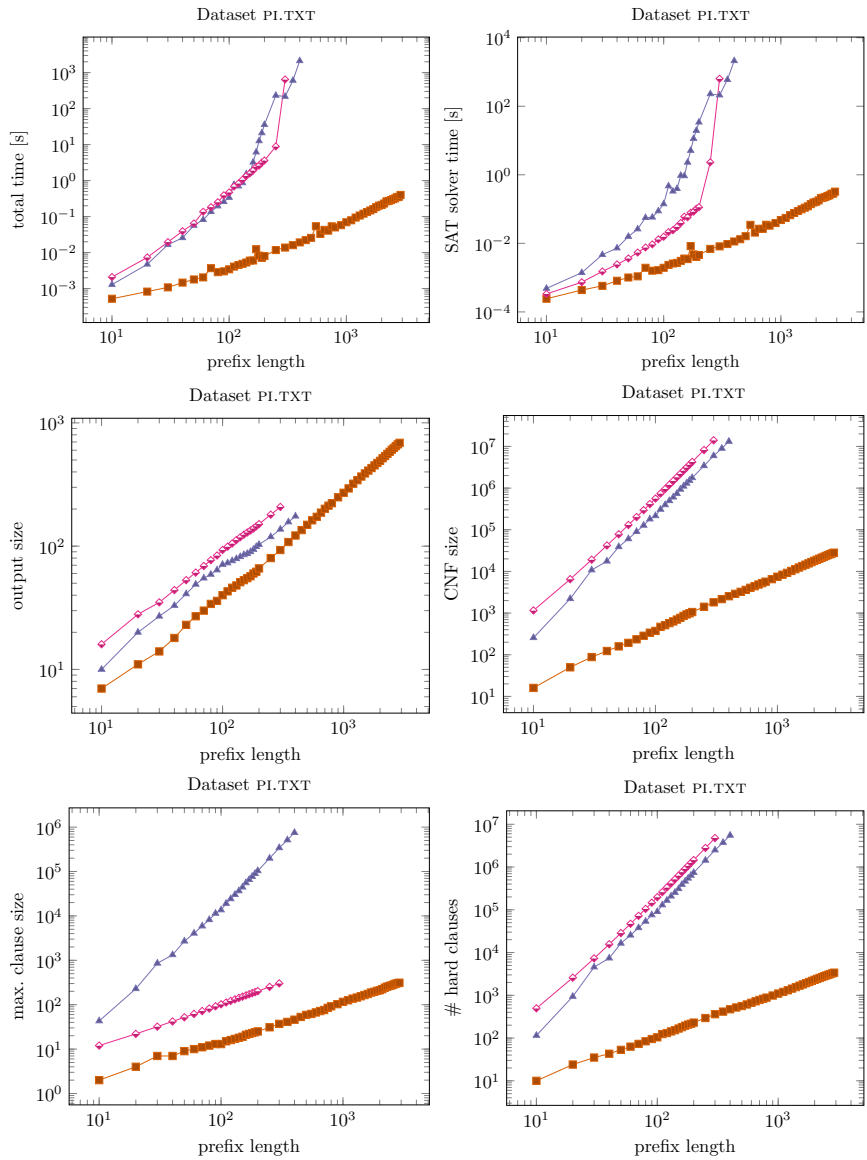

	\centering{\includegraphics[width=0.4\linewidth,page=112]{plot/plot}
\includegraphics[width=0.4\linewidth,page=113]{plot/plot}
\includegraphics[width=0.4\linewidth,page=114]{plot/plot}
\includegraphics[width=0.4\linewidth,page=115]{plot/plot}
\includegraphics[width=0.4\linewidth,page=116]{plot/plot}
\includegraphics[width=0.4\linewidth,page=117]{plot/plot}
	}\begin{minipage}{0.65\linewidth}
	\caption{Plots for dataset \textsc{pi.txt}.}
	\label{plot:pi.txt}
    \end{minipage}
    \hfill
    \begin{minipage}{0.25\linewidth}
        \includegraphics[page=1]{plot/plot}
    \end{minipage}
\end{figure}

\begin{figure}
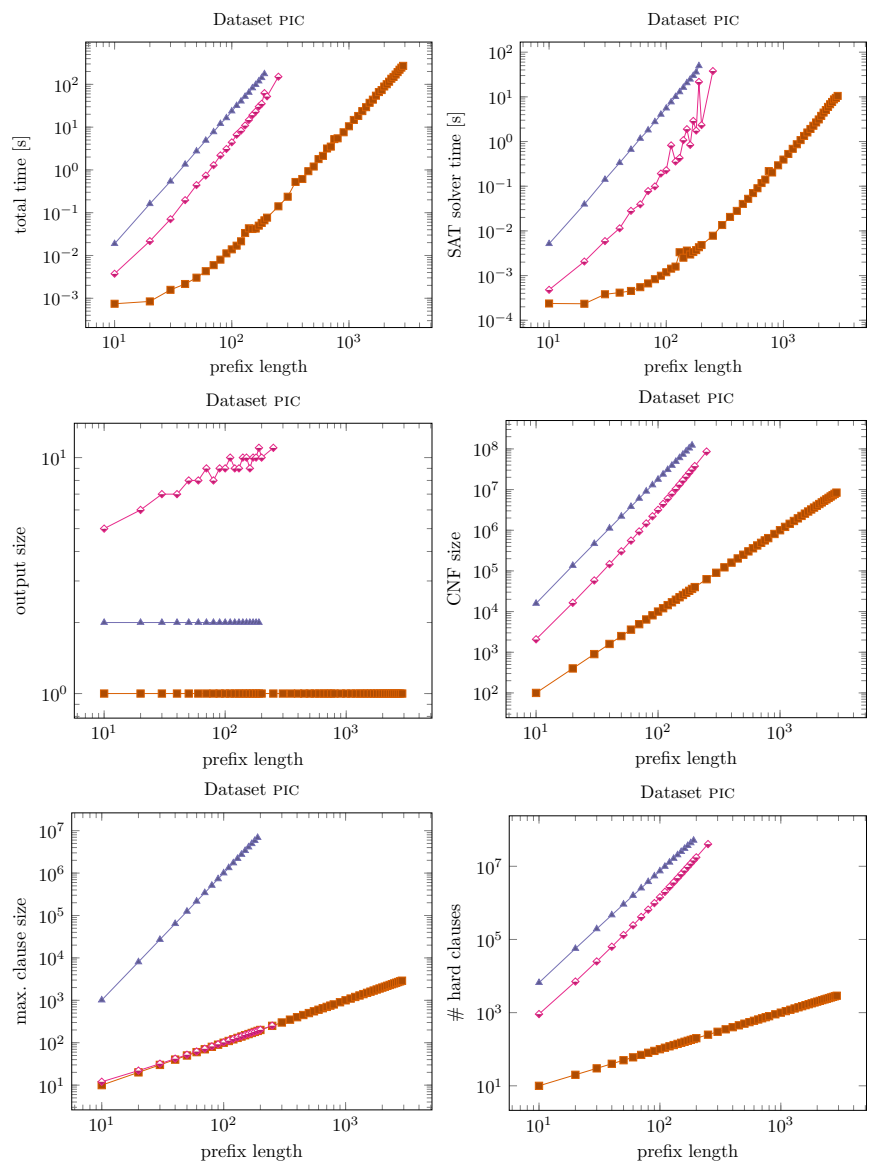

	\centering{\includegraphics[width=0.4\linewidth,page=118]{plot/plot}
\includegraphics[width=0.4\linewidth,page=119]{plot/plot}
\includegraphics[width=0.4\linewidth,page=120]{plot/plot}
\includegraphics[width=0.4\linewidth,page=121]{plot/plot}
\includegraphics[width=0.4\linewidth,page=122]{plot/plot}
\includegraphics[width=0.4\linewidth,page=123]{plot/plot}
	}\begin{minipage}{0.65\linewidth}
	\caption{Plots for dataset \textsc{pic}.}
	\label{plot:pic}
    \end{minipage}
    \hfill
    \begin{minipage}{0.25\linewidth}
        \includegraphics[page=1]{plot/plot}
    \end{minipage}
\end{figure}

\begin{figure}
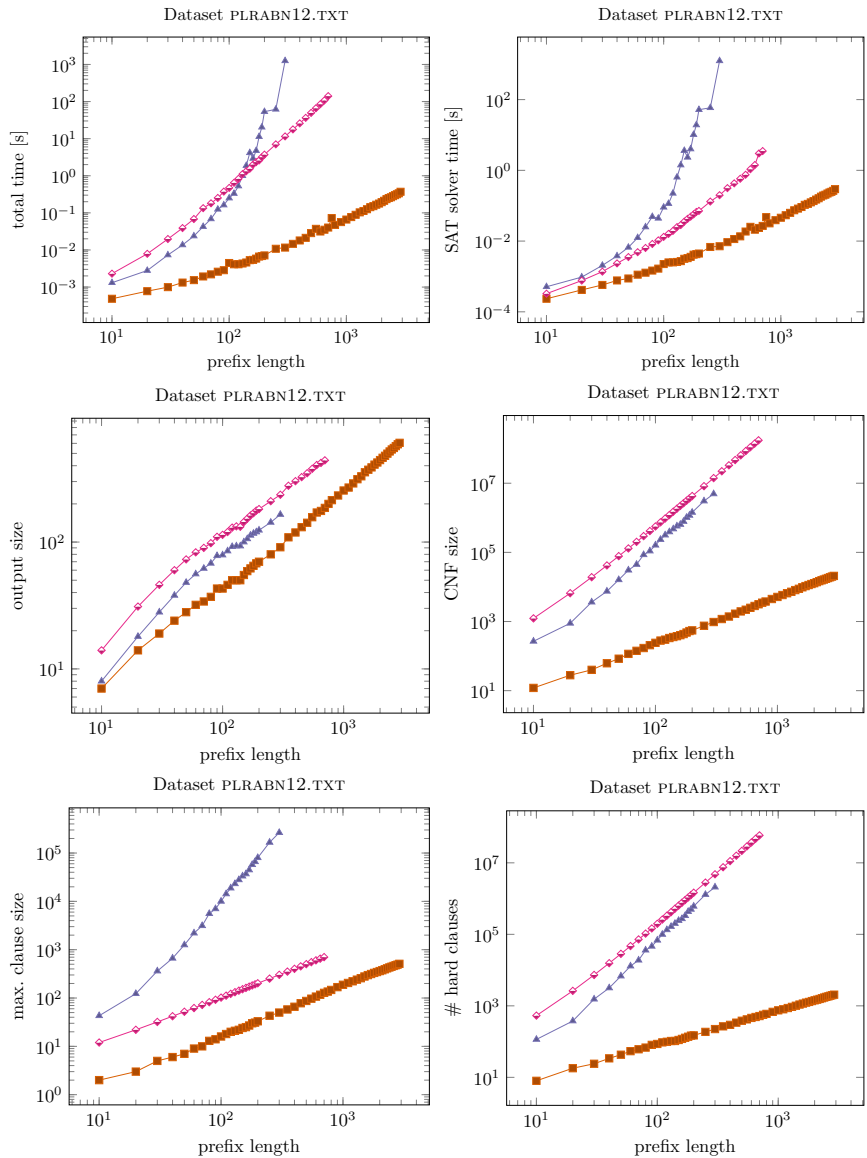

	\centering{\includegraphics[width=0.4\linewidth,page=124]{plot/plot}
\includegraphics[width=0.4\linewidth,page=125]{plot/plot}
\includegraphics[width=0.4\linewidth,page=126]{plot/plot}
\includegraphics[width=0.4\linewidth,page=127]{plot/plot}
\includegraphics[width=0.4\linewidth,page=128]{plot/plot}
\includegraphics[width=0.4\linewidth,page=129]{plot/plot}
	}\begin{minipage}{0.65\linewidth}
	\caption{Plots for dataset \textsc{plrabn12.txt}.}
	\label{plot:plrabn12.txt}
    \end{minipage}
    \hfill
    \begin{minipage}{0.25\linewidth}
        \includegraphics[page=1]{plot/plot}
    \end{minipage}
\end{figure}

\begin{figure}
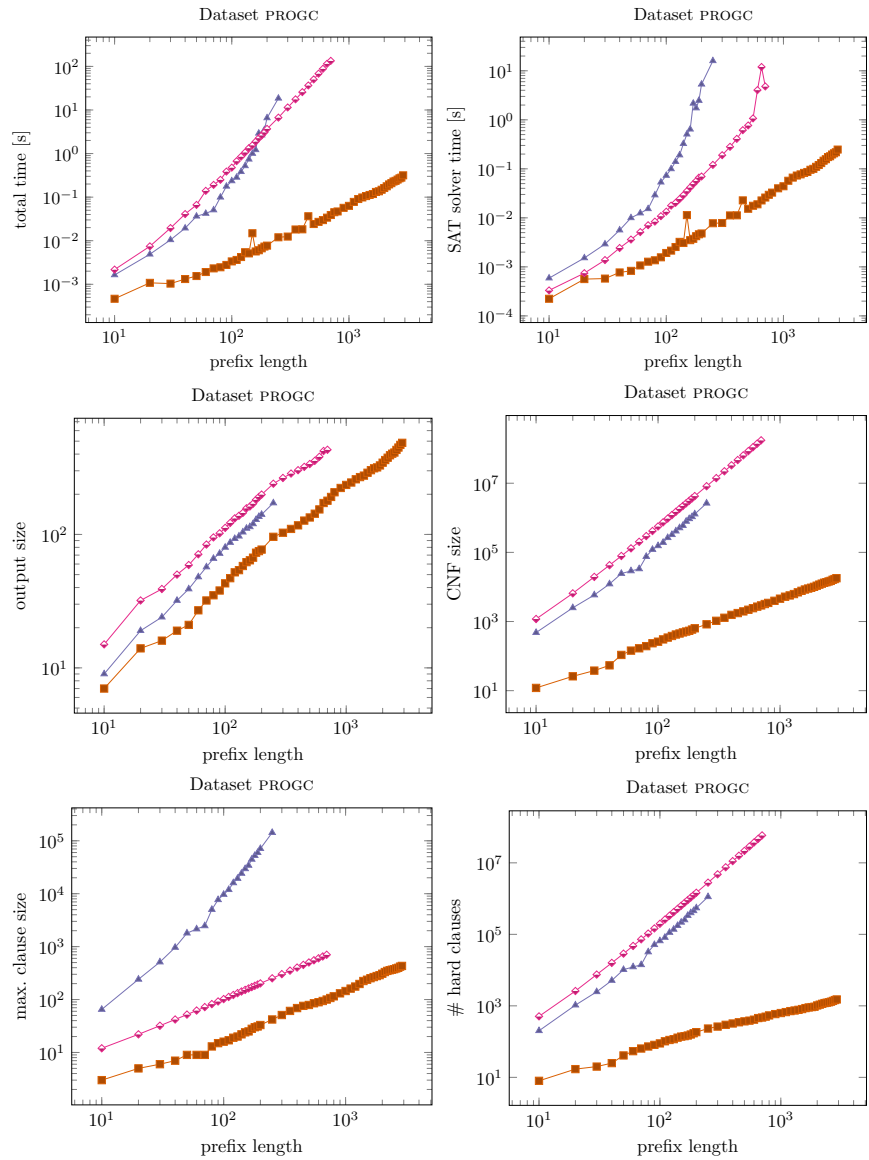

	\centering{\includegraphics[width=0.4\linewidth,page=130]{plot/plot}
\includegraphics[width=0.4\linewidth,page=131]{plot/plot}
\includegraphics[width=0.4\linewidth,page=132]{plot/plot}
\includegraphics[width=0.4\linewidth,page=133]{plot/plot}
\includegraphics[width=0.4\linewidth,page=134]{plot/plot}
\includegraphics[width=0.4\linewidth,page=135]{plot/plot}
	}\begin{minipage}{0.65\linewidth}
	\caption{Plots for dataset \textsc{progc}.}
	\label{plot:progc}
    \end{minipage}
    \hfill
    \begin{minipage}{0.25\linewidth}
        \includegraphics[page=1]{plot/plot}
    \end{minipage}
\end{figure}

\begin{figure}
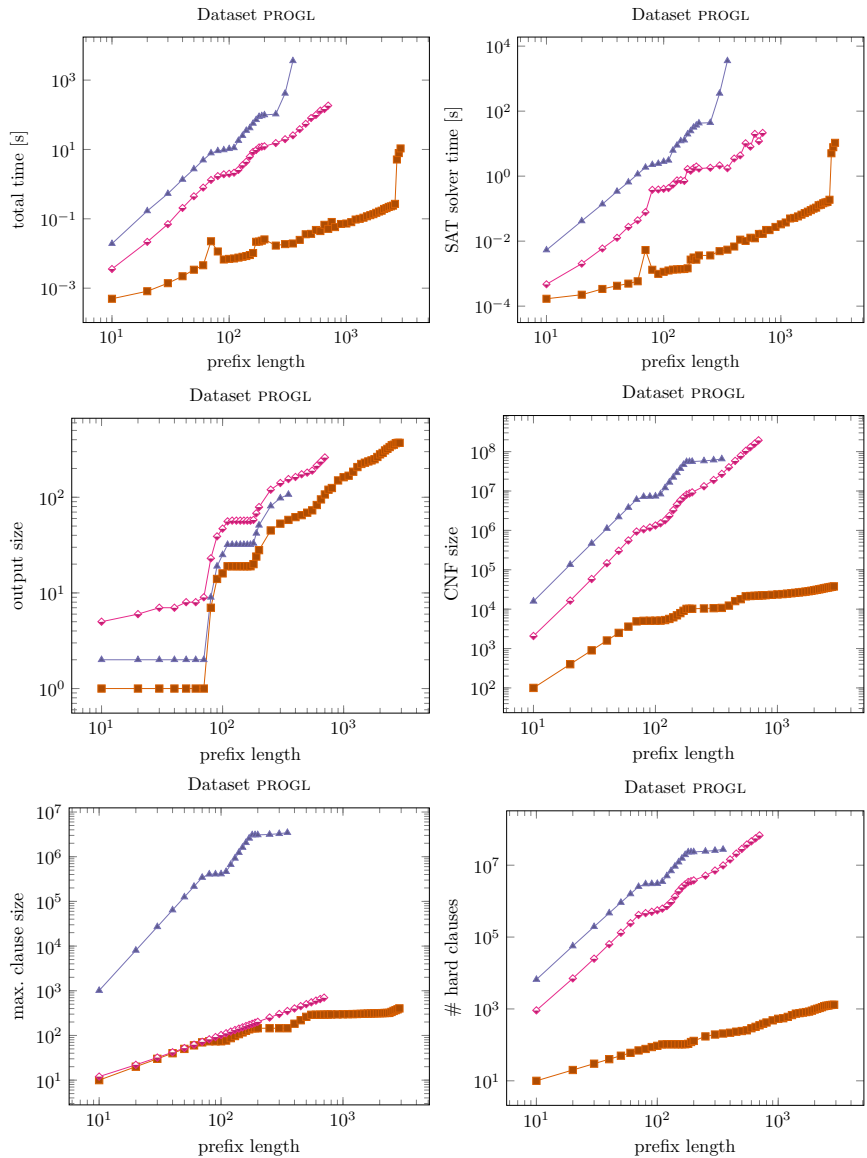

	\centering{\includegraphics[width=0.4\linewidth,page=136]{plot/plot}
\includegraphics[width=0.4\linewidth,page=137]{plot/plot}
\includegraphics[width=0.4\linewidth,page=138]{plot/plot}
\includegraphics[width=0.4\linewidth,page=139]{plot/plot}
\includegraphics[width=0.4\linewidth,page=140]{plot/plot}
\includegraphics[width=0.4\linewidth,page=141]{plot/plot}
	}\begin{minipage}{0.65\linewidth}
	\caption{Plots for dataset \textsc{progl}.}
	\label{plot:progl}
    \end{minipage}
    \hfill
    \begin{minipage}{0.25\linewidth}
        \includegraphics[page=1]{plot/plot}
    \end{minipage}
\end{figure}

\begin{figure}
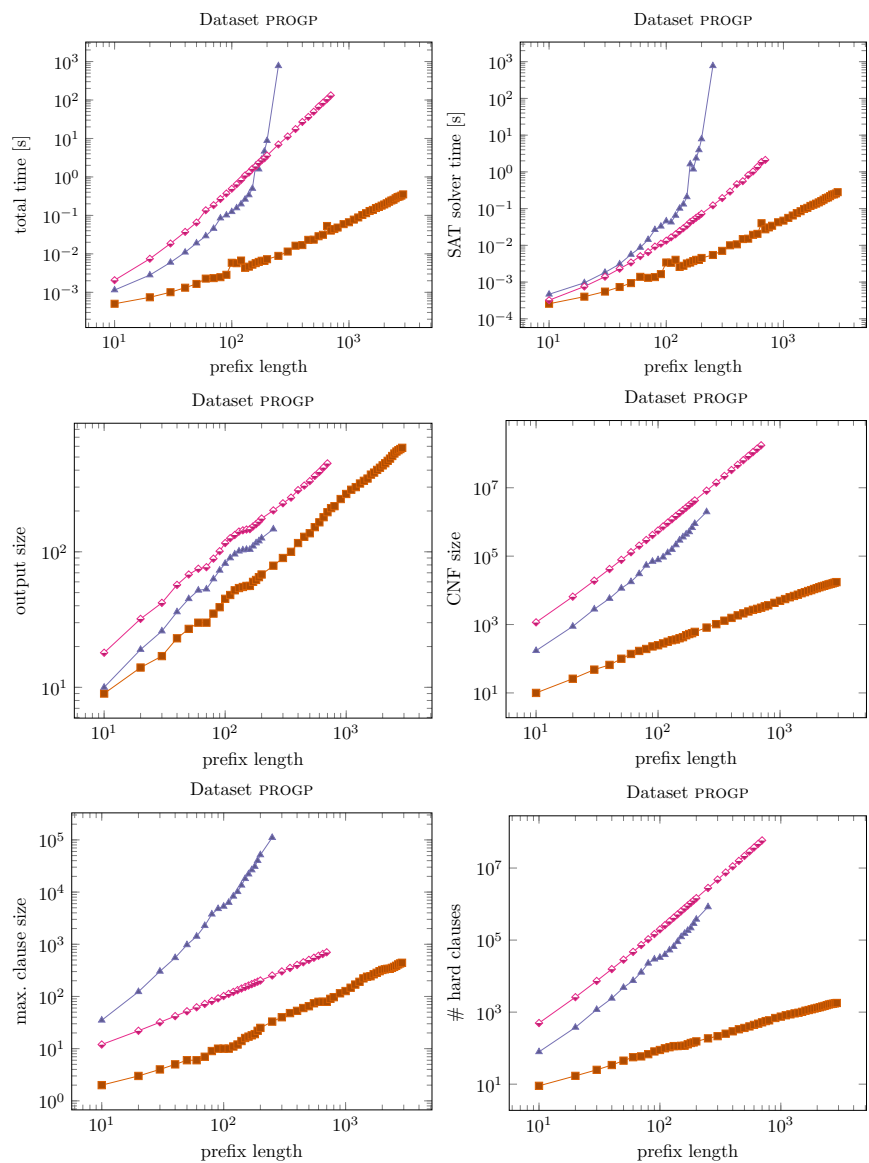

	\centering{\includegraphics[width=0.4\linewidth,page=142]{plot/plot}
\includegraphics[width=0.4\linewidth,page=143]{plot/plot}
\includegraphics[width=0.4\linewidth,page=144]{plot/plot}
\includegraphics[width=0.4\linewidth,page=145]{plot/plot}
\includegraphics[width=0.4\linewidth,page=146]{plot/plot}
\includegraphics[width=0.4\linewidth,page=147]{plot/plot}
	}\begin{minipage}{0.65\linewidth}
	\caption{Plots for dataset \textsc{progp}.}
	\label{plot:progp}
    \end{minipage}
    \hfill
    \begin{minipage}{0.25\linewidth}
        \includegraphics[page=1]{plot/plot}
    \end{minipage}
\end{figure}

\begin{figure}
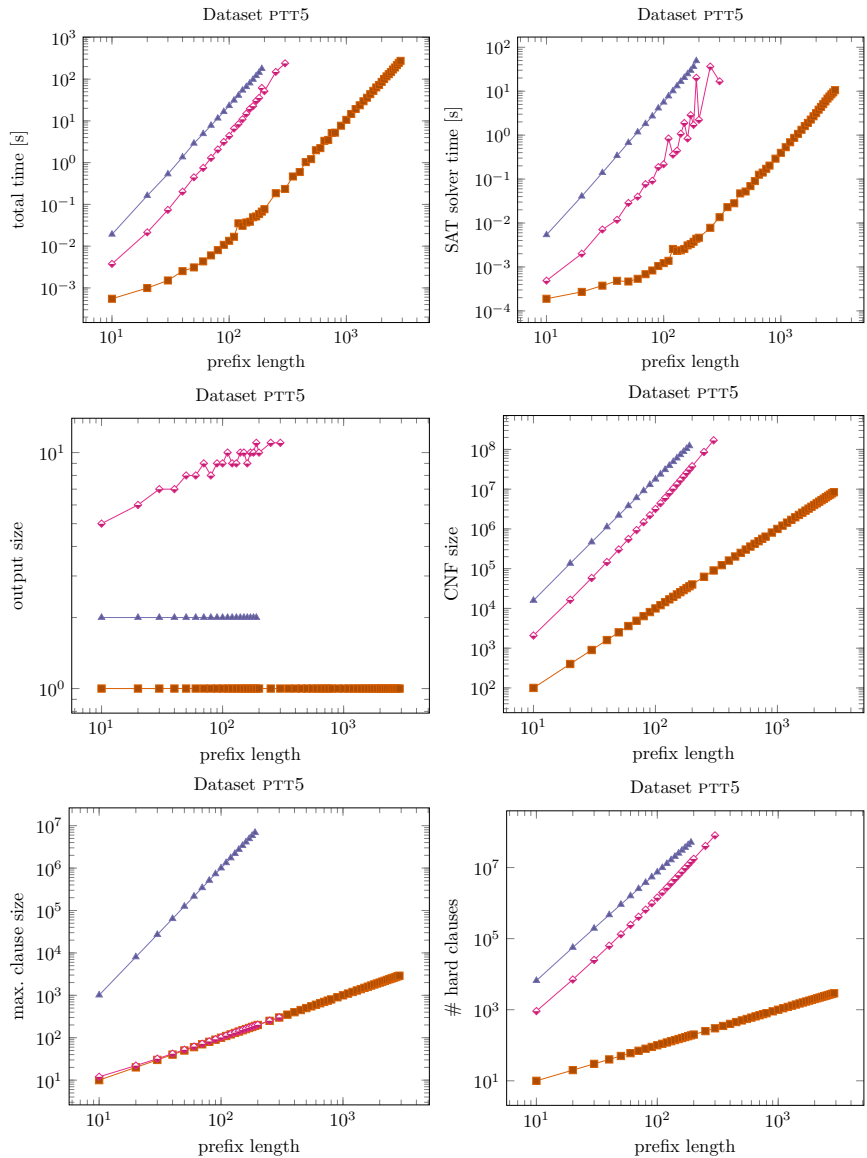

	\centering{\includegraphics[width=0.4\linewidth,page=148]{plot/plot}
\includegraphics[width=0.4\linewidth,page=149]{plot/plot}
\includegraphics[width=0.4\linewidth,page=150]{plot/plot}
\includegraphics[width=0.4\linewidth,page=151]{plot/plot}
\includegraphics[width=0.4\linewidth,page=152]{plot/plot}
\includegraphics[width=0.4\linewidth,page=153]{plot/plot}
	}\begin{minipage}{0.65\linewidth}
	\caption{Plots for dataset \textsc{ptt5}.}
	\label{plot:ptt5}
    \end{minipage}
    \hfill
    \begin{minipage}{0.25\linewidth}
        \includegraphics[page=1]{plot/plot}
    \end{minipage}
\end{figure}

\begin{figure}
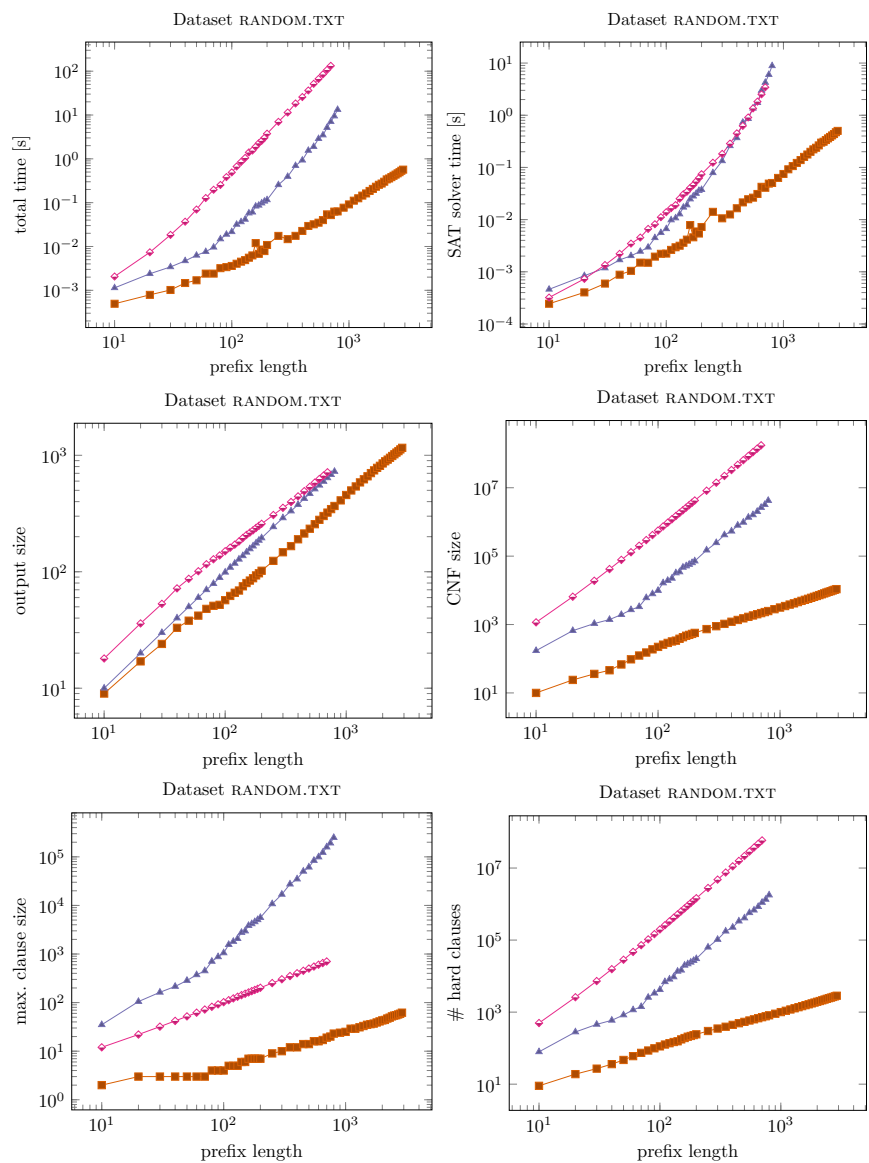

	\centering{\includegraphics[width=0.4\linewidth,page=154]{plot/plot}
\includegraphics[width=0.4\linewidth,page=155]{plot/plot}
\includegraphics[width=0.4\linewidth,page=156]{plot/plot}
\includegraphics[width=0.4\linewidth,page=157]{plot/plot}
\includegraphics[width=0.4\linewidth,page=158]{plot/plot}
\includegraphics[width=0.4\linewidth,page=159]{plot/plot}
	}\begin{minipage}{0.65\linewidth}
	\caption{Plots for dataset \textsc{random.txt}.}
	\label{plot:random.txt}
    \end{minipage}
    \hfill
    \begin{minipage}{0.25\linewidth}
        \includegraphics[page=1]{plot/plot}
    \end{minipage}
\end{figure}

\begin{figure}
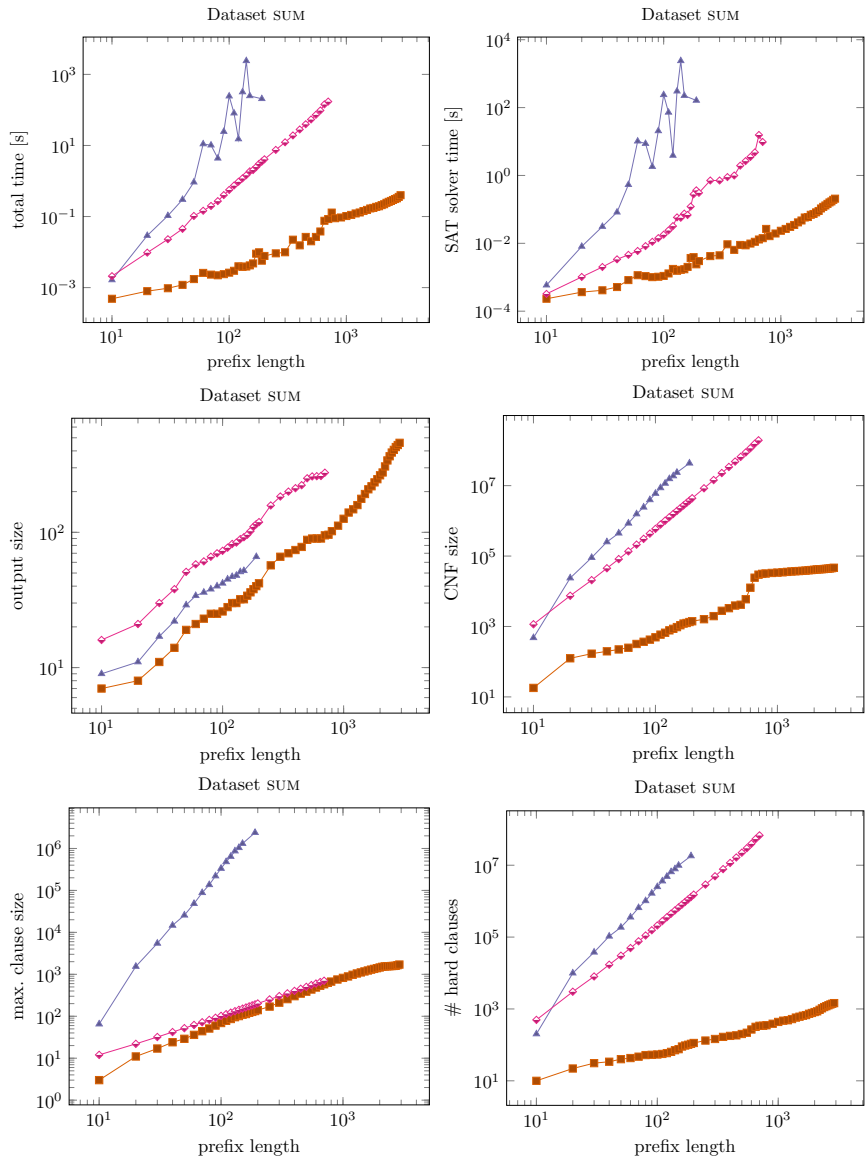

	\centering{\includegraphics[width=0.4\linewidth,page=160]{plot/plot}
\includegraphics[width=0.4\linewidth,page=161]{plot/plot}
\includegraphics[width=0.4\linewidth,page=162]{plot/plot}
\includegraphics[width=0.4\linewidth,page=163]{plot/plot}
\includegraphics[width=0.4\linewidth,page=164]{plot/plot}
\includegraphics[width=0.4\linewidth,page=165]{plot/plot}
	}\begin{minipage}{0.65\linewidth}
	\caption{Plots for dataset \textsc{sum}.}
	\label{plot:sum}
    \end{minipage}
    \hfill
    \begin{minipage}{0.25\linewidth}
        \includegraphics[page=1]{plot/plot}
    \end{minipage}
\end{figure}

\begin{figure}
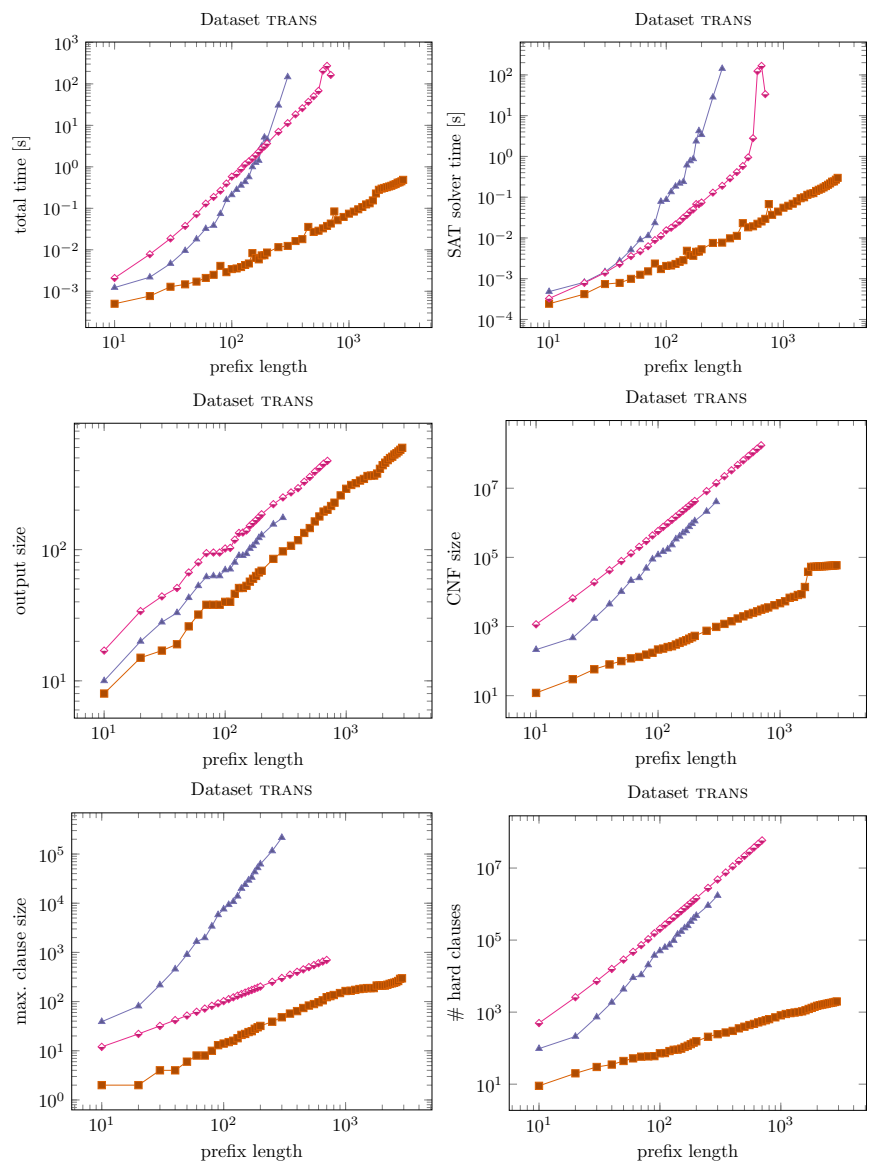

	\centering{\includegraphics[width=0.4\linewidth,page=166]{plot/plot}
\includegraphics[width=0.4\linewidth,page=167]{plot/plot}
\includegraphics[width=0.4\linewidth,page=168]{plot/plot}
\includegraphics[width=0.4\linewidth,page=169]{plot/plot}
\includegraphics[width=0.4\linewidth,page=170]{plot/plot}
\includegraphics[width=0.4\linewidth,page=171]{plot/plot}
	}\begin{minipage}{0.65\linewidth}
	\caption{Plots for dataset \textsc{trans}.}
	\label{plot:trans}
    \end{minipage}
    \hfill
    \begin{minipage}{0.25\linewidth}
        \includegraphics[page=1]{plot/plot}
    \end{minipage}
\end{figure}

\begin{figure}
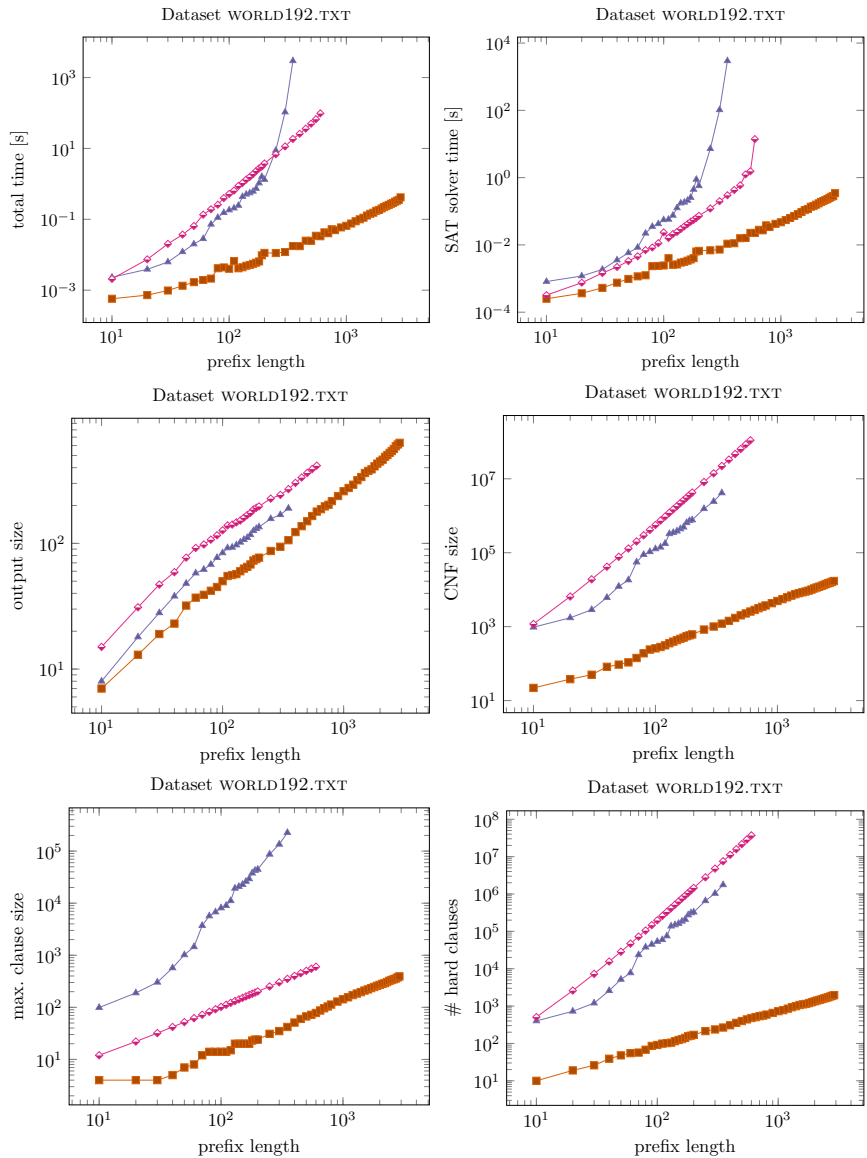

	\centering{\includegraphics[width=0.4\linewidth,page=172]{plot/plot}
\includegraphics[width=0.4\linewidth,page=173]{plot/plot}
\includegraphics[width=0.4\linewidth,page=174]{plot/plot}
\includegraphics[width=0.4\linewidth,page=175]{plot/plot}
\includegraphics[width=0.4\linewidth,page=176]{plot/plot}
\includegraphics[width=0.4\linewidth,page=177]{plot/plot}
	}\begin{minipage}{0.65\linewidth}
	\caption{Plots for dataset \textsc{world192.txt}.}
	\label{plot:world192.txt}
    \end{minipage}
    \hfill
    \begin{minipage}{0.25\linewidth}
        \includegraphics[page=1]{plot/plot}
    \end{minipage}
\end{figure}

\begin{figure}
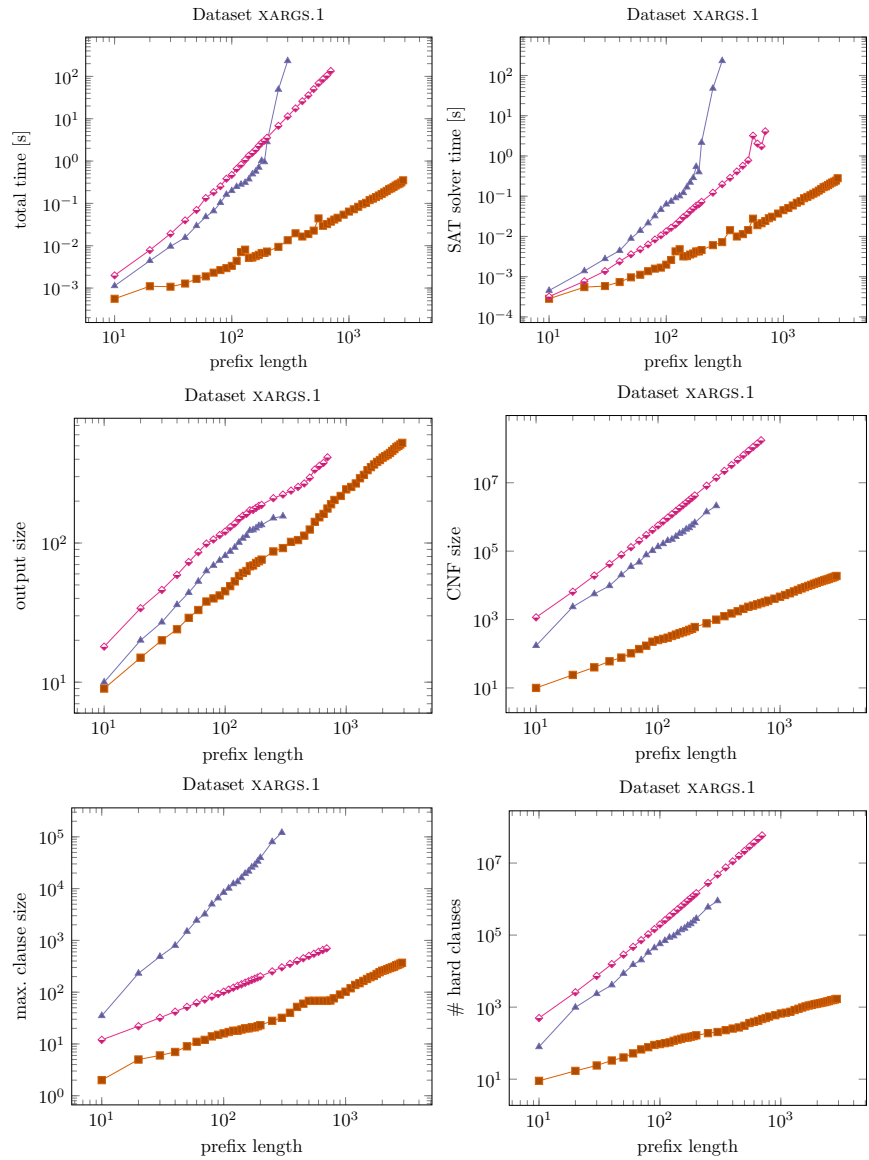

	\centering{\includegraphics[width=0.4\linewidth,page=178]{plot/plot}
\includegraphics[width=0.4\linewidth,page=179]{plot/plot}
\includegraphics[width=0.4\linewidth,page=180]{plot/plot}
\includegraphics[width=0.4\linewidth,page=181]{plot/plot}
\includegraphics[width=0.4\linewidth,page=182]{plot/plot}
\includegraphics[width=0.4\linewidth,page=183]{plot/plot}
	}\begin{minipage}{0.65\linewidth}
	\caption{Plots for dataset \textsc{xargs.1}.}
	\label{plot:xargs.1}
    \end{minipage}
    \hfill
    \begin{minipage}{0.25\linewidth}
        \includegraphics[page=1]{plot/plot}
    \end{minipage}
\end{figure}

\end{document}